\newtheorem{lemma}{Lemma}
\newtheorem{theorem}{Theorem}
\newtheorem{fact}[theorem]{Fact}
\theoremstyle{definition}
\newtheorem{definition}{Definition}
\newcommand{\Cmat}{\mathbf{C}}
\newcommand{\Dmat}{\mathbf{D}}
\definecolor{KB}{rgb}{0.4,0.3,0.9}
\newcommand{\ketbra}[1]{|{#1}\>\mkern-4mu\<{#1}|}
\newcommand{\tr}{\textup{Tr}}
\renewcommand{\>}{\rangle}
\newcommand{\<}{\langle}
\newcommand{\C}{{\mathbb{C}}} %complex numbers
\renewcommand{\ll}{\llangle}
\newcommand{\rr}{\rrangle}
\newcommand{\CQT}{Centre for Quantum Technologies, National University of Singapore, 3 Science Drive 2, Singapore 117543.\looseness=-1}
\newcommand{\NTU}{Nanyang Quantum Hub, School of Physical and Mathematical Sciences, Nanyang Technological University, Singapore 639673.\looseness=-1}
\newcommand{\ihpc}{Institute of High Performance Computing (IHPC), Agency for Science, Technology and Research (A*STAR), 1 Fusionopolis Way, $\#$16-16 Connexis, Singapore 138632, Republic of Singapore}
\newcommand{\qinc}{
Quantum Innovation Centre (Q.InC), Agency for Science Technology and Research (A*STAR), 2 Fusionopolis Way, Innovis $\#$08-03, Singapore 138634, Republic of Singapore }
\newcommand{\VT}{Department of Computer Science, Virginia Polytechnic Institute and State University, Blacksburg, Virginia 24061, USA.\looseness=-1}
\newcommand{\ISI}{Cryptology and Security Research Unit, Indian Statistical Institute, Kolkata 700108, India.}
\begin{document}

\normalem
\newlength\figHeight 
\newlength\figWidth 

\title{Approximate Dynamical Quantum Error-Correcting Codes}

\author{Nirupam Basak}
\email{nirupambasak2020@iitkalumni.org}
\affiliation{\ISI}

\author{Andrew Tanggara}
\email{andrew.tanggara@gmail.com}
\affiliation{\CQT}
\affiliation{\NTU}

\author{Ankith Mohan}
\email{ankithmo@vt.edu}
\affiliation{\VT}
\affiliation{\ihpc}

\author{Goutam Paul}
\email{goutam.paul@isical.ac.in}
\affiliation{\ISI}

\author{Kishor Bharti}
\email{kishor.bharti1@gmail.com}
\affiliation{\qinc}
\affiliation{\ihpc}

\begin{abstract}
Quantum error correction plays a critical role in enabling fault-tolerant quantum computing by protecting fragile quantum information from noise. While general-purpose quantum error correction codes are designed to address a wide range of noise types, they often require substantial resources, making them impractical for near-term quantum devices. Approximate quantum error correction provides an alternative by tailoring codes to specific noise environments, reducing resource demands while still maintaining noise-robustness. Dynamical codes, including Floquet codes, introduce a dynamic approach to quantum error correction, employing time-dependent operations to stabilize logical qubits. In this work, we combine the flexibility of dynamical codes with the versatility of approximate quantum error correction to offer a promising avenue for addressing dominant noise in quantum systems. We construct several approximate dynamical codes using the recently developed strategic code framework. As a special case, we recover the approximate static codes widely studied in the existing literature. By analyzing these approximate dynamical codes through semidefinite programming, we establish the uniqueness and robustness of the optimal encoding, decoding, and check measurements. We also develop a temporal Petz recovery map suited to approximate dynamical codes. 
\end{abstract}

\maketitle

\section{Introduction}
Quantum computers hold the promise of solving classically intractable problems~\cite{shor1999polynomial,arora2009computational,nielsen2001quantum}. However, running large algorithms on quantum processors requires extremely low error rates, often as low as $10^{-10}$~\cite{kivlichan2020improved,campbell2021early}. Current quantum devices fall significantly short of this benchmark. Quantum error correction (QEC) offers a pathway~\cite{shor1995scheme,gottesman1997stabilizer,dennis2002topological,kitaev2003fault,lidar2013quantum,terhal2015quantum}. to overcome these limitations by encoding the information of a logical qubit across multiple physical qubits using entanglement. 

While QEC provides a framework for fault-tolerant quantum computing, it comes with substantial resource costs. Implementing QEC requires many additional qubits and complex operations, making it resource-intensive. General-purpose QEC codes are designed to handle a broad range of noise types, but this universality often makes them inefficient for specific noise environments, which are common in practical systems. Approximate QEC~\cite{bennett1996mixed,leung1997approximate,fletcher2007optimum,schumacher2002approximate,ng2010simple, faist2020continuous, hayden2020approximate, zhou2020optimal, beny2010general, cafaro2014approximate, crepeau2005approximate, yi2024complexity, klesse2007approximate, sang2024approximate, ben2011approximate, renes2016uncertainty, zhao2024extracting, buscemi2008entanglement, mandayam2012towards, jayashankar2018pretty, biswas2024noise, dutta2024noise} addresses this limitation by focusing on the dominant noise in the system. 
They achieve protection at levels that are similar to general-purpose codes while using resources more efficiently, thus are better suited for near-term quantum devices with limited resources.

Recently, a new class of codes, called dynamical codes~\cite{hastings2021dynamically,davydova2023floquet,haah2022boundaries,gidney2021fault,gidney2022benchmarking,hilaire2024enhanced,vuillot2021planar,paetznick2023performance,fu2024error,fahimniya2023hyperbolic,higgott2023constructions,tanggara2024simple,kesselring2024anyon,bombin2023unifying,davydova2023quantum,aasen2023measurement,dua2024engineering,zhang2023x,berthusen2023partial,ellison2023floquet,sullivan2023floquet}, has been introduced. These codes take a dynamic approach to quantum error correction. The first example, the Floquet quantum code~\cite{hastings2021dynamically}, uses a periodic schedule of non-commuting two-qubit measurements. This creates a time-evolving codespace that stabilizes logical qubits. Dynamical codes have several advantages. They use simpler low-weight parity checks, achieve higher error thresholds~\cite{gidney2021fault,gidney2022benchmarking,hilaire2024enhanced}, and allow finite encoding rates in specific variants, like hyperbolic Floquet codes~\cite{fahimniya2023hyperbolic,higgott2023constructions,tanggara2024simple}. These features make them highly promising for resource-efficient and fault-tolerant quantum computing.

As discussed earlier, approximate QEC has shown that tailoring codes to specific noise models can effectively reduce resource requirements without losing its robustness to errors. This raises an important question: can the efficiency of approximate QEC be integrated into the framework of dynamical codes? Approximate dynamical codes could offer a powerful solution for addressing dominant noise sources while retaining the flexibility and resource efficiency of dynamical codes. Despite their potential, the design of approximate dynamical codes remains an open problem.

In this work, we present constructions for various approximate dynamical codes. These codes are designed using the recently introduced strategic code framework~\cite{tanggara2024strategic}, a universal spatio-temporal approach to quantum error correction. Leveraging the properties of semidefinite programs, we prove that the optimal encoding, decoding, and check measurements for approximate codes—both static and dynamical—are unique and robust. Additionally, we develop a temporal Petz recovery map specifically tailored for approximate dynamical codes. As a special case, we recover the approximate static codes widely studied in the existing literature.

\subsection{Relation to prior work}

We note that our results complement some prior work on quantum error-correction.
Firstly in Theorem~\ref{thm:primal_unique}, we show the uniqueness and robustness of the optimal solution to the optimization program proposed in~\cite{tanggara2024strategic}, which we use to construct the approximate dynamical codes in Section~\ref{sec:approx_code_ampl_damp}.
Secondly we show an information-theoretic sufficient condition on whether approximate dynamical error-correction can be done  (Theorem~\ref{thm:info_theoretic_AQECC}), analogous to the conditions given in~\cite{tanggara2024strategic}.
Third, we propose a definition of temporal Petz-recovery map in Section~\ref{sec:temporal_petz} which we show to be the optimal decoder for correctable errors (Theorem~\ref{thm:petz_rec}), generalizing the results for Petz recovery map~\cite{barnum2002reversing, dutta2024noise, ng2010simple}.

\medskip

\section{Universal framework for quantum error-correction}
Quantum error-correcting code has been conventionally defined by an \textit{encoding} isometry $V$ that maps a quantum state $|\psi\>$ to \textit{codestate} $|\Bar{\psi}\>:= V|\psi\>$ which belongs to a \textit{codespace} $Q$, a subspace of a Hilbert space $A$ describing the system where information is encoded in.
A codespace $Q$ correct errors arising from an assumed noise model represented by a \textit{noise channel} $\mathcal{E}$ whenever there exists a \textit{decoding channel} $\mathcal{D}$ that ``reverses'' the noise as $\mathcal{D}\circ\mathcal{E}(V\ketbra{\psi}V^\dag) = \ketbra{\psi}$. 
A dynamical code, on the other hand, consist of a sequence of multiple \textit{rounds} $r\in\{0,1,2,\dots\}$ with codespaces $Q_0,Q_1,Q_2,\dots$, 
where the evolution $Q_{r-1} \mapsto Q_r$ is determined by a quantum operation $\mathcal{C}^{(r)}$ that map the codestates in $Q_{r-1}$ to codestates in $Q_r$, whereas $Q_0$ is the initial codespace defined by encoder $\mathcal{C}^{(0)}$.
Operations $\mathcal{C}^{(1)},\mathcal{C}^{(2)},\dots$ often correspond to measurements that extract error syndromes as it evolves the codespace as in dynamical codes, although in principle might also consist of gates, ancilla system preparations, and removal of subsystems.
In this case, analysis of correctable noise occurring over multiple time points is more complicated as it may exhibit \textit{temporal} (non-Markovian) correlations, on top of spatial correlations. 

\begin{figure}
    \centering
    \includegraphics[width=0.9\columnwidth]{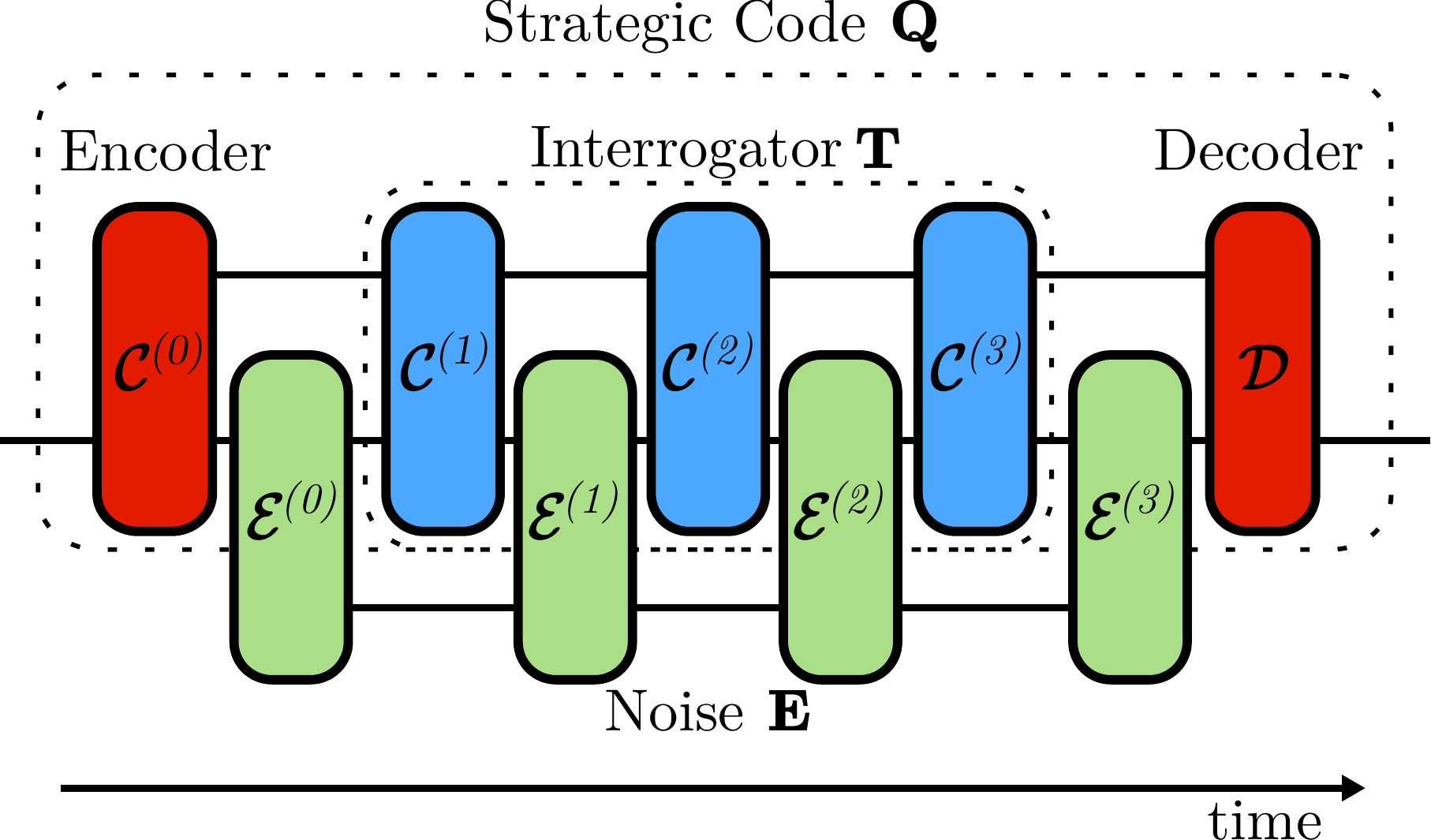}
    \caption{Strategic code $\mathbf{Q}$ with encoder $\mathcal{C}^{(0)}$, decoder $\mathcal{D}$, and three rounds of operations $\mathcal{C}^{(1)},\mathcal{C}^{(2)},\mathcal{C}^{(3)}$ performed by interrogator $\mathbf{T}$ and correlated noise $\mathbf{E}$ with errors $\mathcal{E}^{(0)},\dots,\mathcal{E}^{(3)}$.
    The top wire represents memory maintained by the strategic code which accounts for any adaptive operations performed by the interrogator and correction operations performed by the decoder.
    The middle wire represents the sequence of systems where quantum information is encoded in.
    The bottom wire represents the noise environment that accounts for any temporal correlation between errors $\mathcal{E}^{(r)}$ and $\mathcal{E}^{(r')}$.}
    \label{fig:strategic_code_general}
\end{figure}

The recently proposed universal framework for quantum error-correction known as the \textit{strategic code} framework~\cite{tanggara2024strategic} allows for a convenient representation of dynamical codes and corresponding temporally-correlated (non-Markovian~\cite{prakash2024characterizing, hakoshima2021relationship, oreshkov2007continuous, pollock2018non, pollock2018operational, milz2021quantum}) noise. 
The strategic code framework revolves around the object called the \textit{interrogator} $\mathbf{T}$ which represents any sequence of $l$ rounds of operations $\mathcal{C}^{(1)},\dots,\mathcal{C}^{(l)}$ performed in a quantum error-correction (QEC) procedure between encoding $\mathcal{C}^{(0)}$ and decoding $\mathcal{D}$.
All three objects $(\mathcal{C}^{(0)},\mathbf{T},\mathcal{D})$ defines a \textit{strategic code} $\mathbf{Q}$.
The framework also allows for adaptive operations, namely where round-$r$ operation $\mathcal{C}^{(r)}$ can be \textit{adaptively} conditioned on measurement outcomes $m=m_1,\dots,m_{r-1}$ in rounds $1,\dots,r-1$.
Interrogator $\mathbf{T}$ thus admits a decomposition $\mathbf{T}=\sum_m\mathbf{T}_m$ where $\mathbf{T}_m := |C_m\rr\ll C_m|$ is a rank-1 operator representing a sequence of operations corresponding to $m$. 
On the other hand, a sequence of (possibly correlated) noise $\mathcal{E}^{(0)},\dots,\mathcal{E}^{(l)}$ can described by multi-time noise $\mathbf{E}$. 
An example of a strategic code with a three-round interrogator is illustrated in Fig.~\ref{fig:strategic_code_general}.
We note that a static code in this framework can be simply captured by removing the interrogator $\mathbf{T}$, leaving us with a strategic code $\mathbf{Q}$ with encoder $\mathcal{C}^{(0)}$ and decoder $\mathcal{D}$ (as described in Appendix~\ref{app:static_QECC}).
For details on the strategic code and how the objects are represented formally, see Appendix~\ref{app:strategic_code_details}. 

\medskip

\section{Approximate strategic code optimization}
By using the quantum combs formalism~\cite{chiribella2009theoretical,gutoski2007toward,chiribella2008quantum,milz2021quantum,oreshkov2012quantum}, we can represent both noise $\mathbf{E}$ and strategic code $\mathbf{Q}$ as linear operators which allow for a bi-linear operation ``$\ast$'' between them resulting in a quantum channel $\mathbf{E}\ast\mathbf{Q}$.
Analogous to how correctable noise $\mathcal{E}$ can be reversed in static codes, i.e. $\mathcal{D}\circ\mathcal{E}(V\ketbra{\psi}V^\dag)=\ketbra{\psi}$, we say that a noise $\mathbf{E}$ is correctable by a strategic code $\mathbf{Q}$ whenever $\mathbf{E}\ast\mathbf{Q}(\ketbra{\psi})=\ketbra{\psi}$.
A relaxation of these correctability conditions where we only require the output state $\rho' := \mathbf{E}\ast\mathbf{Q}(\rho)$ to be ``close'' to input state $\rho$ gives us an \textit{approximate} strategic code $\mathbf{Q}$, analogous to approximate static codes~\cite{ng2010simple,fletcher2007optimum} (where $\rho' := \mathcal{D}\circ\mathcal{E}(V\rho V^\dag)$).
As shown in~\cite{tanggara2024strategic}, the quantum combs formulation allow us to obtain an approximate strategic code $\mathbf{Q}$ for a given noise $\mathbf{E}$ by means of optimization over linear operators that gives a valid strategic code.

Here we focus on strategic codes with an interrogator $\mathbf{T}$ performing a single operation $\mathcal{C}^{(1)}$ under noise $\mathbf{E}$ with error operations $\mathcal{E}^{(0)}$ after the encoding round and $\mathcal{E}^{(1)}$ between operation $\mathcal{C}^{(1)}$ and decoding round.
We refer to this strategic code as a \textit{single-round} strategic code.
The ensemble of logical quantum states is described by density operator $\rho$ over a logical Hilbert space $L$.
The single-round strategic code consists of an encoding channel $\mathcal{C}^{(0)}:\mathcal{L}(L)\rightarrow\mathcal{L}(A_0)$ and an operation $\mathcal{C}^{(1)}$.
Operation $\mathcal{C}^{(1)}$ is a quantum instrument $\left\{ \mathcal{C}_m^{(1)} \right\}_m$ where $\mathcal{C}_m^{(1)}:\mathcal{L}(A_0)\rightarrow\mathcal{L}(A_1)$ is a completely-positive map such that $\sum_m \mathcal{C}_m^{(1)}$ is trace-preserving.
Here we denote $\mathcal{L}(A)$ as the space of bounded linear operators that map states in Hilbert space $A$ to itself.
Also, the decoding channel is of the form $\mathcal{D}_m:\mathcal{L}(A_1)\rightarrow\mathcal{L}(L)$ where $L'$ indicates the system which the decoded quantum state is in.
Error maps are of the form $\mathcal{E}^{(0)}:\mathcal{L}(A_0)\rightarrow\mathcal{L}(A_0\otimes E_0)$ and $\mathcal{E}^{(1)}:\mathcal{L}(A_0\otimes E_0)\rightarrow\mathcal{L}(A_1)$ where $E_0$ denotes the noise environment that accounts for any temporal correlations.
The index $m$ in $\left\{ \mathcal{C}_m^{(1)} \right\}_m$ and $\mathcal{D}_m$ indicates information extracted by $\mathcal{C}^{(1)}$ (through some measurement) which is propagated to the decoding stage to determine which decoding channel $\left\{\mathcal{D}_m\right\}_m$ should be used.

By using the Choi operator representation $\mathbf{C}^{(0)},\left\{ \mathbf{C}_m^{(1)} \right\}_m,\{\mathbf{D}_m\}_m$ for the encoding, round-1 operation, and decoding, respectively, we can express the strategic code as
\begin{equation}\label{eqn:single-round_code}
\begin{aligned}
    \mathbf{Q} = \sum_m \mathbf{C}^{(0)} \otimes \mathbf{C}_m^{(1)} \otimes \mathbf{D}_m \;.
\end{aligned}
\end{equation}
Operator $\mathbf{Q}$ is positive semidefinite as a consequence of each $\mathbf{D}_m,\mathbf{C}_m^{(1)} , \mathbf{C}^{(0)}$ being positive semidefinite. 
On the other hand, noise $\mathbf{E}$ is also described by a positive semidefinite operator that admits a generic form of $\mathbf{E} = \sum_e |E_e\rr\ll E_e|$.
Each $E_e$ is a linear operator of the form $A_0\otimes A_1 \mapsto A_0\otimes A_1$ that maps the input code systems of $\mathcal{E}_0,\mathcal{E}_1$ to their output code systems, and $|E_e\rr := \sum_{j_0,j_1} E_e|j_0,j_1\>|j_0,j_1\>$ is the vectorized form of operator $E_e$ and $j_0,j_1$ are basis of $A_0,A_1$, respectively.

Now to formulate our optimization problem, consider a maximally-mixed state $\rho$ on initial logical system $L$ and the channel induced by $\mathbf{E}\ast\mathbf{Q}$.
We use the \emph{entanglement fidelity} $F_{ent}(\rho,\mathbf{E}\ast\mathbf{Q})$ as our performance metric, which quantifies how close the output state $\mathbf{E}\ast\mathbf{Q}(\rho)$ is to the input state $\rho$. 
This entanglement fidelity is given by
\begin{equation}\label{eqn:entanglement_fidelity}
\begin{aligned}
    F_{ent}(\rho,\mathbf{E}\ast\mathbf{Q})  
    &= \tr\big( \mathbf{E}\ast\mathbf{Q} \, |\rho\rr\ll\rho| \big) \;, 
\end{aligned}
\end{equation}
where $|\rho\rr := \sum_j \rho|j\>|j\>$ is the vectorized form of density operator $\rho$.
Essentially, the optimization consists of maximizing the entanglement fidelity over all valid strategic code, which takes the form of eqn.~\eqref{eqn:single-round_code}.
Its validity can be formulated as a set of \textit{positivity} and \textit{normalization} conditions.
Namely, positivity requires that $\mathbf{C}^{(0)},\mathbf{C}_m^{(1)},\mathbf{D}_m$ are all positive semidefinite, whereas normalization requires that $\tr_{A_1}\left( \sum_m\mathbf{C}_m^{(1)} \right) = I_{A_0}$, $\tr_{A_0}\left( \mathbf{C}^{(0)} \right) = I_L$, and $\tr_{L}(\mathbf{D}_m) = I_{A_1}$, which also disallow signalling backwards in time.
Both of these conditions altogether force the strategic code to consist only of a sequence of physical operations.

Finally, the complete optimization problem to obtain an approximate strategic code for a given noise $\mathbf{E}$ and initial state $\rho$ is given by
\begin{equation}\label{eqn:max_entanglement_fidelity1}
\begin{gathered}
    \max_{\mathbf{Q}} \tr\big( \mathbf{E}\ast\mathbf{Q} \, |\rho\rr\ll\rho| \big) \\ 
    \textup{such that} \\
    \mathbf{Q} = \sum_{m} \mathbf{D}_m \otimes \mathbf{C}_m^{(1)} \otimes \mathbf{C}^{(0)} \\
    \mathbf{D}_m \geq 0 \;,\quad \tr_{L}(\mathbf{D}_m) = I_{A_1} \\
    \mathbf{C}^{(0)} \geq0 \;,\quad \tr_{A_0}\big(\mathbf{C}^{(0)}\big) = I_L \\
    \mathbf{C}_m^{(1)} \geq0 \;,\quad \tr_{A_1}\big(\sum_m \mathbf{C}_m^{(1)}\big) = I_{A_0} \;.
\end{gathered}
\end{equation} 
A common approach to solve for the optimal Choi operators $\Dmat_{m}^*,\ \Cmat^{(0) *},\ \Cmat_m^{(1) *}$ in program~\eqref{eqn:max_entanglement_fidelity1}, is through the \emph{see-saw algorithm}.
First, we initialize two of the Choi operators, say $\Dmat_m$ and $\Cmat^{(0)}$, and solve program~\eqref{eqn:max_entanglement_fidelity1} to obtain the solution $\hat{\Cmat}_m^{(1)}$.
Next, we use this initialization $\Cmat^{(0)}$, set $\Cmat_m^{(1)} = \hat{\Cmat}_m^{(1)}$, and solve the program to get $\hat{\Dmat}_m$.
Then, setting $\Cmat_m^{(1)} = \hat{\Cmat}_m^{(1)}$ and $\Dmat_m = \hat{\Dmat}_m$, we solve the program to obtain $\hat{\Cmat}^{(0)}$.
We repeat these steps until we fail to see any significant progress.
This is described more formally in Algorithm~\ref{algo:see-saw} in Appendix~\ref{app:primal_unique}.

Theorem~\ref{thm:primal_unique} shows that the Choi operators obtained through this process are unique and robust.
The proof of this theorem is provided in Appendix~\ref{app:primal_unique}.
Appendix~\ref{app:gen_QECC}
extends this to the general case consisting of multiple rounds of check operations.

\begin{theorem}\label{thm:primal_unique}
    The optimal Choi operators $\Cmat^{(0) *}, \Cmat_m^{(1) *}, \Dmat_{m}^*$ obtained through solving program~\eqref{eqn:max_entanglement_fidelity1}, corresponding to the encoding channel, check instrument and the decoding channel respectively, are unique and robust, i.e,
    \begin{equation*}
        \begin{aligned}
            \|\tilde{\Cmat}^{(0)} - \Cmat^{(0) *}\|_F &\leq \mathcal{O}(\epsilon), \\
            \|\tilde{\Cmat}_m^{(1)} - \Cmat_m^{(1) *}\|_F &\leq \mathcal{O}(\epsilon), \quad \text{and} \\
            \|\tilde{\Dmat}_m - \Dmat_m^*\|_F &\leq \mathcal{O}(\epsilon),
        \end{aligned}
    \end{equation*}
    for any $\tilde{\Cmat}^{(0)},\ \tilde{\Cmat}_m^{(1) *}$, and $\tilde{\Dmat}_m$ that are feasible in program~\eqref{eqn:max_entanglement_fidelity1}. 
    Here $\|A\|_F = \sqrt{\tr(A^\dag A)}$ denotes the Frobenius norm.
\end{theorem}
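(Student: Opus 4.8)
The plan is to recast program~\eqref{eqn:max_entanglement_fidelity1} in the language of semidefinite programming and then apply duality-based uniqueness and sensitivity results. Although the joint objective $\tr(\mathbf{E}\ast\mathbf{Q}\,|\rho\rr\ll\rho|)$ is multilinear in the triple $(\Cmat^{(0)},\Cmat_m^{(1)},\Dmat_m)$---so the joint program is nonconvex---each step of the see-saw procedure freezes two of the three Choi operators and leaves a genuine SDP: the objective becomes \emph{linear} in the remaining operator, while the feasible set is the intersection of the positive-semidefinite cone with the affine subspace cut out by the partial-trace normalization constraints. I would therefore prove uniqueness and robustness for each of the three SDP subproblems separately, and then argue that a see-saw fixed point inherits both properties.

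For uniqueness of a single SDP I would first check that strong duality holds by exhibiting a strictly feasible point---e.g. the appropriately normalized maximally mixed Choi operator lies in the interior of the PSD cone while satisfying $\tr_{A_0}(\Cmat^{(0)})=I_L$, $\tr_L(\Dmat_m)=I_{A_1}$, and $\tr_{A_1}(\sum_m\Cmat_m^{(1)})=I_{A_0}$---so Slater's condition is met. I would then invoke the standard criterion that a primal SDP optimizer $X^*$ is unique once the primal--dual optimal pair $(X^*,S^*)$ satisfies strict complementarity (so that $\mathrm{rank}(X^*)+\mathrm{rank}(S^*)$ equals the ambient matrix dimension) together with dual nondegeneracy. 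Verifying these two conditions from the structure of the problem is the substantive step: surjectivity of the partial-trace normalization maps supplies the interior point and underpins constraint nondegeneracy, while the rank-one decomposition $\mathbf{E}=\sum_e|E_e\rr\ll E_e|$ and the full-rank maximally mixed $\rho$ fix the rank of the optimal dual slack.

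For robustness I would use the dual slack $S^*$ as a certificate. For any feasible block $X$ (i.e. $\tilde\Dmat_m$, $\tilde\Cmat_m^{(1)}$, or $\tilde\Cmat^{(0)}$) obeying the same affine constraints as the optimal $X^*$, the constraint contributions cancel and the suboptimality collapses to $\tr(\mathbf{E}\ast\mathbf{Q}^*\,|\rho\rr\ll\rho|)-\tr(\mathbf{E}\ast\mathbf{Q}\,|\rho\rr\ll\rho|)=\tr(S^*X)\ge 0$, where $S^*$ is the corresponding dual slack and $\mathbf{Q}$ differs from $\mathbf{Q}^*$ only in the free block. Strict complementarity makes $S^*$ positive definite on the orthogonal complement of $\mathrm{supp}(X^*)$, so a suboptimality of order $\epsilon$ forces the part of $X$ supported on $\mathrm{range}(S^*)$ to have trace at most $\epsilon/\lambda_{\min}(S^*)$; positivity of $X$ then bounds the off-support blocks through the $2\times 2$-minor inequality, and the affine constraints together with dual nondegeneracy pin the remaining on-support directions to $X^*$. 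Assembling these estimates across the three blocks gives $\|\tilde\Cmat^{(0)}-\Cmat^{(0)*}\|_F,\ \|\tilde\Cmat_m^{(1)}-\Cmat_m^{(1)*}\|_F,\ \|\tilde\Dmat_m-\Dmat_m^*\|_F=\mathcal{O}(\epsilon)$.

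I expect the main obstacle to be twofold. First, the joint nonconvexity: one must argue that the see-saw fixed point is the global optimum and that the coupling between blocks does not create a continuum of optimizers, which I would handle by showing that each block problem has a unique solution given the other two and that the optimal value is attained at a single point rather than along a face. Second, and more technically, establishing strict complementarity and dual nondegeneracy for this particular family of SDPs and controlling $\lambda_{\min}(S^*)$ away from zero---these spectral conditions hold generically but must be tied to concrete features of $\mathbf{E}$ and $\rho$, and they are precisely what converts qualitative uniqueness into the quantitative $\mathcal{O}(\epsilon)$ bound (the same estimates also determine whether the honest exponent is $\epsilon$ or $\sqrt{\epsilon}$ when $\epsilon$ is read as the objective gap rather than a perturbation of the data).
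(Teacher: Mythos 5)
Your proposal follows essentially the same route as the paper: each see-saw step is treated as a genuine SDP, strict feasibility (Slater) gives strong duality, dual nondegeneracy in the sense of Alizadeh--Haeberly--Overton yields uniqueness of the primal optimizer, and an error bound converts an $\epsilon$-suboptimal feasible point into an $\mathcal{O}(\epsilon)$ Frobenius-norm bound. The one technical divergence is in the robustness step: you derive the bound by hand from the dual slack certificate, strict complementarity, and a $\lambda_{\min}(S^*)$ estimate, whereas the paper instead invokes a packaged result (its Fact~\ref{thm:robust}, from~\cite{bharti2019robust}) stating that the distance to the primal optimal set is $\mathcal{O}(\epsilon^{2^{-d}})$ with singularity degree $d$, and argues $d=0$ from strict feasibility. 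Your explicit route is more self-contained but, as you note, naturally threatens to produce $\mathcal{O}(\sqrt{\epsilon})$ from the off-support $2\times2$-minor step unless the spectral conditions are controlled; the paper sidesteps this by citation rather than by verifying it. Your two flagged obstacles are well placed: the paper's dual-nondegeneracy lemma asserts rather than proves that $Y^*$ is full rank on the relevant subspace, and the paper itself concedes elsewhere that the see-saw may return a local maximum, so the theorem is really a statement about each frozen-block SDP rather than the joint nonconvex program.
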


\medskip

\section{Amplitude-damping noise}\label{sec:approx_code_ampl_damp}
We consider $\mathbf E$ as an amplitude damping noise with damping strength $\gamma$. Since an amplitude damping channel is temporally self-similar~\cite{utagi2020temporal}, the noise maps $\mathcal E^{(l)}$ for the $l$-th round have similar forms as $\mathbf E$. The explicit form of the operators is provided in Appendix~\ref{app:AD_noise}.

We consider the local noise model where the noise randomly impacts $k$ out of $n$ physical qubits. Then, the corresponding noise model can be written as
\begin{equation}
\label{eq:two_round_AD_gen}
    \mathbf{E} = \sum_{K\in\mathcal Q_k} \sum_{i,j}\bigotimes_{k'\in K} \frac{1}{|\mathcal Q_k|}
    |E_{i,k'}^{(0)}\rr\ll E_{i,k'}^{(0)}| \otimes |E_{j,k'}^{(1)}\rr\ll E_{j,k'}^{(1)}|.
\end{equation}
where $\mathcal Q_k$ contains all possible combinations of choosing $k$ out of $n$ physical qubits.
Here, $E_{i,k'}^{(0)}$ and $E_{j,k'}^{(1)}$ are linear operators which maps $A_0$ and $A_1$ to itself, respectively. 

Figures~\ref{fig:fidelity_plot_1} and~\ref{fig:fidelity_plot_2} show plots of the entanglement fidelity $F_{ent}$ of a single logical qubit as a function of the damping strength $\gamma$, evaluated using Algorithm~\ref{algo:see-saw_k} for $2-, 3-$, and $4$-qubit codes with $k$-qubit noise (eqn.~\eqref{eq:two_round_AD_gen}). 
Here, we consider only a single intermediary check round between the encoder and the decoder without any memory effect, that is, $m=1$ in program~\eqref{eqn:max_entanglement_fidelity1}. 
We observe that our approximate dynamical codes outperform certain approximate static codes, specifically the $\llbracket5, 1\rrbracket$ and $\llbracket3, 1\rrbracket$ codes from Ref.~\cite{dutta2024smallest}, even when these encoders are paired with the Petz recovery map decoder~\cite{barnum2002reversing, dutta2024noise, ng2010simple} to correct single-qubit errors.
The $\llbracket4, 1\rrbracket$ code from Ref.~\cite{leung1997approximate} performs better than our approximate dynamical code. 
A possible reason for such behavior might be that, in our case, the program~\eqref{eqn:max_entanglement_fidelity1} corresponding to the approximate dynamical code, is reporting a local maxima. 

We also consider amplitude-damping noise with specific weights~\cite{cafaro2014approximate,jayashankar2022achieving,dutta2024smallest}. The key distinction between this model and the previously discussed local noise model lies in how the noise is applied. In the local noise model, single-qubit amplitude-damping noise is independently applied to any $k$ qubits, while the remaining $n - k$ qubits undergo identity operations. In contrast, in the weight-$k$ amplitude-damping model, the \emph{no-error operator} $E^{(l)}_{0,*}$ is applied to exactly $k$ qubits, and the \emph{error operator} is applied to the remaining $n - k$ qubits. Note that both models coincide when $k = n$, in which case the noise is referred to as \emph{full amplitude-damping noise} (see Appendix~\ref{app:AD_noise} for detailed discussion). Corresponding plots are shown in Appendix~\ref{app:low_weight}.
Moreover, for the all-qubit noise scenario, we observe no improvement in fidelity when compared with the case where no encoding operation was performed. 
Although in this case, the codes obtained from program~\eqref{eqn:max_entanglement_fidelity1} provide better fidelity than the existing $\llbracket5, 1\rrbracket$ and $\llbracket3, 1\rrbracket$ codes from Ref.~\cite{dutta2024smallest}, and also for the cases where these encoders are paired with the Petz recovery map decoder. 
The corresponding plot is illustrated in Appendix~\ref{app:all_error}.

We notice that adding memory does not improve the performance of the codes for the noise model being considered here.
The encoder, decoder and check instrument for each of these cases are provided in Section~\ref{sec:algo_QEC}.
These findings suggest that the presented codes offer a better fidelity in the presence of amplitude-damping noise in comparison with the previously studied static codes. 

\begin{figure}
\centering
\includegraphics[width=\columnwidth]{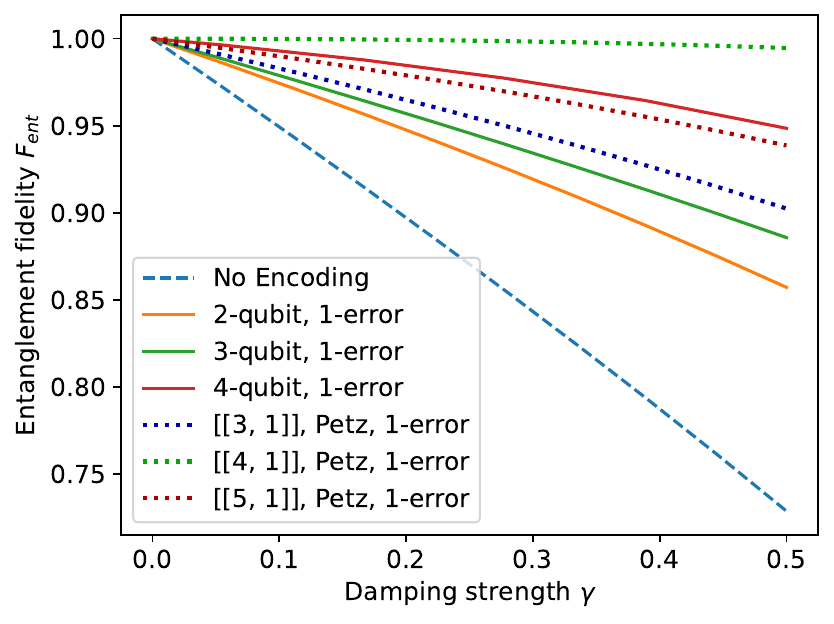}
\caption{\label{fig:fidelity_plot_1}
The entanglement fidelity $F_{ent}$ versus the damping strength $\gamma$ is analyzed using Algorithm~\ref{algo:see-saw_k} for $2$-, $3$-, and $4$-qubit codes with a single check operator, i.e., $m=1$, in the single-qubit noise scenario where anyone from $n$ physical qubits is erroneous.
Observe that these codes outperform the $\llbracket5, 1\rrbracket$ and $\llbracket3, 1\rrbracket$ codes from Ref.~\cite{dutta2024smallest}, along with the cases where these encoders are paired with the Petz recovery map decoders, in the presence of single-qubit errors. 
The encoders, decoders and check operators obtained for each of these cases are described in Section~\ref{sec:algo_QEC}. 
}
\end{figure}

\begin{figure}
\centering
\includegraphics[width=\columnwidth]{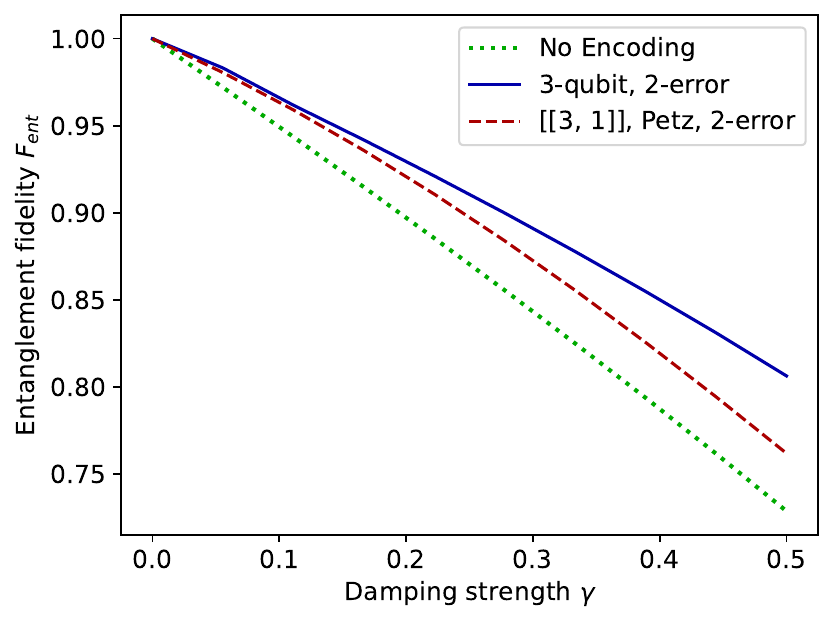}
\caption{\label{fig:fidelity_plot_2}
The entanglement fidelity $F_{ent}$ versus the damping strength $\gamma$ is plotted using Algorithm~\ref{algo:see-saw_k} for $3$-qubit code with a single check operator, i.e., $m=1$, in the $2$-qubit noise scenario where any $2$ from $n$ physical qubits are erroneous.
Observe that these codes outperform the $\llbracket3, 1\rrbracket$ code from Ref.~\cite{dutta2024smallest}, along with the cases where these encoders are paired with the Petz recovery map decoders, in the presence of 2-qubit errors. 
The encoders, decoders and check operators obtained for each of these cases are described in Section~\ref{sec:algo_QEC}.
}
\end{figure} 

\medskip

\section{Temporal Petz recovery map}\label{sec:temporal_petz}

As in~\cite{barnum2002reversing, dutta2024noise, ng2010simple}, \textit{Petz recovery map decoding} for a static code with codespace $Q$ under noise channel $\mathcal{E}(\cdot) = \sum_e E_e (\cdot) E_e^\dag$ is defined as quantum channel $\mathcal{R}_{Q,\mathcal{E}}$ with Kraus operators 
\begin{equation}\label{eqn:petz_recovery_static_code}
    R_{Q,\mathcal{E},j} = \Pi_Q E_e^\dag \mathcal{E}(\Pi_Q)^{-\frac{1}{2}} \;,
\end{equation}
where $\Pi_Q$ is projector onto codespace $Q$.
It is shown in~\cite[Lemma 2]{ng2010simple} that whenever errors from $\mathcal{E}$ can be corrected by code $Q$, the Petz recovery map $\mathcal{R}_{Q,\mathcal{E}}$ described above is an optimal decoding map.
As argued in~\cite{ng2010simple}, the optimality of the Petz recovery map is closely related to the Knill-Laflamme necessary and sufficient condition~\cite{knill1997theory} for the perfect correctability of noise $\mathcal{E}$ over code $Q$.
Particularly, the Petz recovery map $\mathcal{R}_{Q,\mathcal{E}}$ is equivalent to the perfect recovery map $\mathcal{R}_\mathrm{perf}$ which one can construct whenever the Knill-Laflamme condition is satisfied for the pair $Q,\mathcal{E}$. 

For dynamical codes or more general codes defined over multiple timesteps, it would be valuable to obtain an analogue of the Petz recovery map, which can be used to analyze and benchmark an \textit{approximate} dynamical code $\mathbf{Q}$ with respect to noise $\mathbf{E}$.
Such analysis and benchmarking have been done for static codes in~\cite{ng2010simple}, but have been absent for dynamical codes.
One could try a similar approach for dynamical codes by starting from a necessary and sufficient condition for correctability of noise $\mathbf{E}$ over code $\mathbf{Q}$.
Fortunately, such a condition has been shown in the strategic code framework~\cite[Theorem 1]{tanggara2024strategic}, which generalizes the Knill-Laflamme condition to dynamical codes. 

\begin{fact}(\cite{tanggara2024strategic}, Theorem 1)\label{thm:algebraic_KL_condition}
    A strategic code $\mathbf{Q}$ with initial codespace $Q_0$ and interrogator $\mathbf{T}=\sum_m|C_m\rr\ll C_m|$ corrects noise $\mathbf{E}=\sum_e|E_e\rr\ll E_e|$ if and only if
    \begin{equation}\label{eqn:algebraic_KL_condition}
    \begin{gathered}
        \ll E_{e'}| \big( |C_m\rr\ll C_m| \otimes |\Bar{j}\>\<\Bar{i}| \big) |E_e\rr = \lambda_{e',e,m} \delta_{j,i}
    \end{gathered}
    \end{equation}
    for a constant $\lambda_{e',e,m}\in\C$, for all $m$, all pairs of error sequences $e,e'$, and all $i,j$.
    Here, $|\Bar{i}\>,|\Bar{j}\>$ are a pair of basis states of initial codespace $Q_0$.
\end{fact}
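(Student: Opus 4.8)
\emph{Proof proposal.} The plan is to reduce the statement to the Knill--Laflamme theorem~\cite{knill1997theory} transplanted into the comb/link-product formalism. The first step is to collapse the encoder--noise--interrogator portion of $\mathbf{E}\ast\mathbf{Q}$ into a single family of \emph{effective forward operators} acting on the initial codespace $Q_0$, one for each error sequence $e$ and each interrogator branch $m$. Concretely, for every pair $(e,m)$ I would define an operator $K_{e,m}$ from $Q_0$ to the pre-decoding system by contracting the error link operator $E_e$ with the branch operator $C_m$ along all physical wires and tracing out the temporally-correlating environment $E_0$. The key bookkeeping lemma is that the bilinear form in~\eqref{eqn:algebraic_KL_condition} is exactly the Gram matrix of these forward operators restricted to the code,
\begin{equation*}
    \ll E_{e'}|\big(|C_m\rr\ll C_m|\otimes|\Bar{j}\>\<\Bar{i}|\big)|E_e\rr = \<\Bar{i}|\,K_{e',m}^\dag K_{e,m}\,|\Bar{j}\> .
\end{equation*}
Granting this, condition~\eqref{eqn:algebraic_KL_condition} is equivalent to the outcome-labelled Knill--Laflamme relation $\Pi_{Q_0}K_{e',m}^\dag K_{e,m}\Pi_{Q_0}=\lambda_{e',e,m}\Pi_{Q_0}$ holding for every $m$, and the theorem reduces to showing that correctability of $\mathbf{E}$ by $\mathbf{Q}$ is equivalent to this family of per-branch conditions.

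For necessity I would assume a correcting strategic code, i.e. an adaptive decoder $\{\mathcal{D}_m\}_m$ with $\mathbf{E}\ast\mathbf{Q}(\ketbra{\psi})=\ketbra{\psi}$ for every logical state, and transplant the standard argument. The effective logical channel decomposes over interrogator outcomes as $\sum_m\mathcal{M}_m=\mathrm{id}$ on $Q_0$, where $\mathcal{M}_m(\cdot)=\sum_{e,r}R_{m,r}K_{e,m}(\cdot)K_{e,m}^\dag R_{m,r}^\dag$ and $\{R_{m,r}\}_r$ are the Kraus operators of the trace-preserving branch $\mathcal{D}_m$. Since the identity channel is extremal among channels, each $\mathcal{M}_m$ must be a scalar multiple of the identity on $Q_0$, which forces $R_{m,r}K_{e,m}\Pi_{Q_0}=c_{m,r,e}\Pi_{Q_0}$ for scalars $c_{m,r,e}$. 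Applying the completeness relation $\sum_r R_{m,r}^\dag R_{m,r}=I$ then gives $\Pi_{Q_0}K_{e',m}^\dag K_{e,m}\Pi_{Q_0}=\big(\sum_r\overline{c_{m,r,e'}}\,c_{m,r,e}\big)\Pi_{Q_0}$, which is exactly~\eqref{eqn:algebraic_KL_condition} with $\lambda_{e',e,m}=\sum_r\overline{c_{m,r,e'}}\,c_{m,r,e}$.

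For sufficiency I would run this construction in reverse, one interrogator branch at a time. For fixed $m$ the matrix $[\lambda_{e',e,m}]_{e',e}$ is Hermitian positive semidefinite (being the Gram matrix above), so diagonalising it by a unitary change of error basis produces canonical effective errors $\{\tilde{K}_{a,m}\}_a$ acting on $Q_0$ as mutually orthogonal scaled isometries. From these one builds a perfect recovery branch $\mathcal{R}_m$ in direct analogy with the static Petz/perfect recovery map of Section~\ref{sec:temporal_petz} and~\cite{ng2010simple,barnum2002reversing}, whose Kraus operators rotate each orthogonal error subspace back onto $Q_0$. One then verifies that $\sum_m\mathcal{R}_m$, composed with the forward operators, acts as the identity on $Q_0$, and that these branches assemble into a legitimate decoder obeying the normalisation/non-signalling constraint $\tr_{L}(\mathbf{D}_m)=I_{A_1}$; this certifies that $\mathbf{Q}$ corrects $\mathbf{E}$.

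The step I expect to be the main obstacle is the bookkeeping lemma identifying~\eqref{eqn:algebraic_KL_condition} with the Gram matrix $K_{e',m}^\dag K_{e,m}$: making the link-product contraction precise requires carefully tracing out the correlating environment $E_0$ and checking that the outcome label $m$ threads consistently from the interrogator into the decoder, so that the per-branch forward operators remain well defined despite the temporal (non-Markovian) correlations. A secondary difficulty lies in the sufficiency assembly, where the individually constructed recovery branches $\mathcal{R}_m$ must be shown to combine into a single trace-preserving decoder respecting the causal ordering of the comb; since the static construction only handles a single round, the novelty is confirming that the adaptive, multi-time structure does not obstruct it.
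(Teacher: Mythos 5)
Note first that the paper does not prove this statement at all: it is quoted as a \emph{Fact} from Theorem~1 of~\cite{tanggara2024strategic}, so there is no in-paper proof to compare against line by line. That said, your proposal reconstructs exactly the argument the cited framework uses and that this paper implicitly relies on. Your ``bookkeeping lemma'' is precisely the rewriting the authors take as their starting point in Appendix~\ref{app:proof_petz_recovery_optimal_for_correctable_noise}, namely
\begin{equation*}
    (\ll C_m|\otimes\Pi_{Q_0})\, E_{e'}^\dag E_e\, (|C_m\rr\otimes\Pi_{Q_0}) = c_{e',e,m}\,\Pi_{Q_0}\;,
\end{equation*}
and your effective forward operators are the paper's own $K_{e,m} = \mathbf{E}_e\big(\bigotimes_{r}|C^{(r)}_{m_r|m_{1:r}}\rr\big)$ from Appendix~\ref{app:strategic_code_details:purified} (note that in this formalism the temporally-correlating environment is already internalized in the spectral index $e$ of $\mathbf{E}=\sum_e|E_e\rr\ll E_e|$, so no separate partial trace over $E_0$ is needed --- that part of your anticipated obstacle dissolves). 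Your necessity argument (extremality of the identity channel forcing each branch $\mathcal{M}_m$ to be proportional to $\mathrm{id}$, then completeness of the decoder Kraus operators yielding the Gram condition) is the standard Knill--Laflamme argument applied branch-wise and is sound, since each $\mathcal{D}_m$ is trace-preserving by the constraint $\tr_L(\mathbf{D}_m)=I_{A_1}$. Your sufficiency direction coincides with what the paper actually carries out in proving Theorem~\ref{thm:petz_rec}: diagonalizing $[c_{e',e,m}]_{e',e}$ by a unitary, taking the polar decomposition $F_e(|C_m\rr\otimes\Pi_{Q_0})=U_{e,m}\Pi_{Q_0}\sqrt{d_{e,m}}$, and building the branch recovery with Kraus operators $\{\Pi_{Q_0}U_{e,m}^\dag\}_e$ (completed to a trace-preserving channel on the orthogonal complement of $\sum_e\Pi_{e,m}$, which is unproblematic because the framework explicitly allows the decoder to be conditioned on $m$). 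In short: correct, and essentially the same route as the source theorem and the machinery this paper deploys downstream of it.
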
 

Now we propose a generalization of the Petz recovery map for static codes in eqn.~\eqref{eqn:petz_recovery_static_code} to the dynamical code scenario, which we call the \textit{temporal Petz recovery map}.
Later, we show that the perfect recovery channel which one can construct whenever the condition in Fact~\ref{thm:algebraic_KL_condition} is satisfied coincides with the temporal Petz recovery map as defined below, analogous to how static Petz recovery map coincides with the static perfect recovery map.

\begin{definition}\label{def:temporal_petz_recovery}
    Consider a strategic code $\mathbf{C}$ with an initial codespace $Q_0$ with projetor $\Pi_{Q_0}$ and interrogator $\mathbf{T}$ which admits decomposition $\mathbf{T} = \sum_m |C_m\rr\ll C_m|$, and noise $\mathbf{E}$ with quantum channel representation defined by $\mathbf{E}(A) = \sum_e \mathbf{E}_e A \mathbf{E}_e^\dag$ (see Appendix~\ref{app:strategic_code_details:purified}).
    A \textit{temporal Petz recovery} with respect to $(Q_0,\mathbf{T},\mathbf{E})$ is a family of quantum channels $\left\{ \mathcal{R}_{Q_0,\mathbf{T},\mathbf{E},m}^\mathrm{Petz} \right\}_m$, each defined by its Kraus operators $\{R_{e|m}\}_e$ where
    \begin{equation}\label{eqn:general_petz_recovery}
        R_{e|m} = (\ll C_m|\otimes\Pi_{Q_0}) F_e^\dag 
     \Big(\mathbf{E}\big( |C_m\rr\ll C_m| \otimes \Pi_{Q_0} \big)\Big)^{-\frac{1}{2}} \;.
    \end{equation}
\end{definition}

This natural generalization is obtained by viewing the multi-timestep noise $\mathbf{E}$ as a channel that maps the operator defined by the interrogator $\mathbf{T}$ and initial codespace $Q_0$ to the state at the start of the recovery stage (in system $A_l$).
We argue that this temporally generalized Petz recovery channel is well motivated for the following two reasons.
First, in the static code scenario where only initial codespace $Q_0$ and a noise channel $\mathcal{E}$ is involved (i.e. no interrogator is present), then we recover the Petz recovery map in eqn.~\eqref{eqn:petz_recovery_static_code}.
Namely, the operator $|C_m\rr\ll C_m| \otimes \Pi_{Q_0}$ and $\ll C_m|\otimes\Pi_{Q_0}$ in eqn.~\eqref{eqn:general_petz_recovery} are simply the initial codespace projector $\Pi_{Q_0}$ and $\{E_e\}_e$ are the Kraus operators of noise channel $\mathcal{E}$. 
Secondly, temporal Petz recovery $\left\{ \mathcal{R}_{Q_0,\mathbf{T},\mathbf{E},m}^\mathrm{Petz} \right\}_m$ is precisely the perfect recovery $\left\{ \mathcal{R}_{Q_0,\mathbf{T},\mathbf{E},m}^* \right\}_m$ for a strategic code $\mathbf{C}$ that corrects noise $\mathbf{E}$.
This is analogous to how the Petz recovery map coincides with the perfect recovery map in the static code scenario~\cite[Lemma 2]{ng2010simple}.

\begin{theorem}\label{thm:petz_rec}
    If strategic code $\mathbf{C}$ corrects noise $\mathbf{E}$, both admitting representations as in Definition~\ref{def:temporal_petz_recovery}, then the dynamical Petz recovery map $\mathcal{R}_{Q_0,\mathbf{T},\mathbf{E},m}^\mathrm{Petz}$ is equal to the perfect recovery map $\mathcal{R}_{Q_0,\mathbf{T},\mathbf{E},m}^*$, for each $m$.
\end{theorem}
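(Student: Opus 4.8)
The plan is to mirror the static-code argument of \cite[Lemma 2]{ng2010simple}, replacing the codespace projector and noise Kraus operators by their multi-time analogues $|C_m\rr\ll C_m|\otimes\Pi_{Q_0}$ and $F_e$ appearing in Definition~\ref{def:temporal_petz_recovery}. The central observation is that the correctability assumption, via Fact~\ref{thm:algebraic_KL_condition}, endows the operator $\mathbf{E}\big(|C_m\rr\ll C_m|\otimes\Pi_{Q_0}\big)$ with a block-diagonal (orthogonal-error-subspace) structure for each fixed outcome $m$, and that both the perfect recovery and the temporal Petz recovery are determined entirely by this structure. Once this structure is exposed, equality of the two channels reduces to comparing their Kraus operators in a suitable canonical frame.

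First I would fix $m$ and reread the algebraic Knill--Laflamme condition \eqref{eqn:algebraic_KL_condition} as a statement about the post-error vectors $F_e\big(|C_m\rr\otimes|\bar i\>\big)$: the identity $\ll E_{e'}|\big(|C_m\rr\ll C_m|\otimes|\bar j\>\<\bar i|\big)|E_e\rr=\lambda_{e',e,m}\delta_{j,i}$ says precisely that their Gram matrix factorizes, so the only nontrivial overlap data is the Hermitian positive-semidefinite matrix $\lambda^{(m)}=(\lambda_{e',e,m})_{e',e}$. Diagonalizing $\lambda^{(m)}=U^{(m)}\,\mathrm{diag}(d_{k,m})\,U^{(m)\dag}$ and passing to the unitarily mixed Kraus operators $\tilde F_{k|m}=\sum_e (U^{(m)})_{k,e}\,F_e$ leaves the channel $\mathbf{E}$ invariant while bringing the condition to the canonical form $\ll\tilde E_{k'}|\big(|C_m\rr\ll C_m|\otimes|\bar j\>\<\bar i|\big)|\tilde E_k\rr=d_{k,m}\,\delta_{k,k'}\delta_{j,i}$. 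Hence the vectors $\tilde F_{k|m}\big(|C_m\rr\otimes|\bar i\>\big)$, ranging over $k$ with $d_{k,m}>0$ and over the codespace basis index $i$, are mutually orthogonal, so each $\tilde F_{k|m}$ sends the support of $|C_m\rr\ll C_m|\otimes\Pi_{Q_0}$ isometrically, up to the scale $\sqrt{d_{k,m}}$, into a distinct subspace; let $P_{k,m}$ denote the projector onto that $k$-th error subspace.

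Next I would evaluate both recovery maps in this canonical frame. The operator inside the inverse square root becomes $\mathbf{E}\big(|C_m\rr\ll C_m|\otimes\Pi_{Q_0}\big)=\sum_k \tilde F_{k|m}\big(|C_m\rr\ll C_m|\otimes\Pi_{Q_0}\big)\tilde F_{k|m}^\dag=\sum_{k:\,d_{k,m}>0} d_{k,m}\,P_{k,m}$, whose inverse square root on its support is $\sum_{k:\,d_{k,m}>0} d_{k,m}^{-1/2}P_{k,m}$. Substituting into \eqref{eqn:general_petz_recovery} and using that $\tilde F_{k|m}^\dag$ returns $P_{k',m}$ to the code data only for $k'=k$ and annihilates the orthogonal subspaces, the temporal Petz Kraus operator collapses to $\tilde R_{k|m}=d_{k,m}^{-1/2}\,(\ll C_m|\otimes\Pi_{Q_0})\,\tilde F_{k|m}^\dag P_{k,m}$. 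This coincides with the Kraus operator of the perfect recovery $\mathcal{R}_{Q_0,\mathbf{T},\mathbf{E},m}^*$ built from Fact~\ref{thm:algebraic_KL_condition}, namely the map that coherently identifies the error subspace $k$ and inverts the isometry $\tilde F_{k|m}/\sqrt{d_{k,m}}$. Since the two families of Kraus operators agree up to the residual unitary freedom that does not alter a channel, the induced channels $\mathcal{R}_{Q_0,\mathbf{T},\mathbf{E},m}^\mathrm{Petz}$ and $\mathcal{R}_{Q_0,\mathbf{T},\mathbf{E},m}^*$ are equal for every $m$.

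The main obstacle I anticipate is the careful bookkeeping at the level of the multi-time (comb) Hilbert spaces: I must verify that the identification between the vectorized error operators $|E_e\rr$ of Fact~\ref{thm:algebraic_KL_condition} and the channel Kraus operators $F_e$ of Definition~\ref{def:temporal_petz_recovery} is consistent on the combined interrogator--codespace system, and that the inverse square root is taken on the correct support so that the $d_{k,m}=0$ components, arising from errors that become linearly dependent after diagonalization, contribute nothing spurious. A secondary subtlety is confirming that the unitary Kraus mixing used to diagonalize $\lambda^{(m)}$ is a legitimate freedom simultaneously for the Petz and the perfect constructions, so that comparing Kraus operators in the canonical frame genuinely establishes equality of the underlying channels rather than of one particular Kraus decomposition.
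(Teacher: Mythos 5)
Your proposal follows essentially the same route as the paper's proof in Appendix~\ref{app:proof_petz_recovery_optimal_for_correctable_noise}: diagonalize the matrix $[\lambda_{e',e,m}]$ by a unitary mixing of the error Kraus operators, observe that the mixed operators $F_e$ map the support of $|C_m\rr\ll C_m|\otimes\Pi_{Q_0}$ into mutually orthogonal subspaces (the paper phrases this via the polar decomposition $F_e(|C_m\rr\otimes\Pi_{Q_0})=U_{e,m}\Pi_{Q_0}\sqrt{d_{e,m}}$, which is equivalent to your isometry-into-distinct-subspaces picture), compute $\big(\mathbf{E}(|C_m\rr\ll C_m|\otimes\Pi_{Q_0})\big)^{-1/2}=\sum_e d_{e,m}^{-1/2}\Pi_{e,m}$, and substitute to show the Petz Kraus operators collapse to $\Pi_{Q_0}U_{e,m}^\dag$, matching the perfect-recovery Kraus operators. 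Your added care about the support of the inverse square root and the unitary Kraus-mixing freedom is sound but not a different argument, so the two proofs coincide in substance.
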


The proof of Theorem~\ref{thm:petz_rec} is given in Appendix~\ref{app:proof_petz_recovery_optimal_for_correctable_noise}.

\medskip

\section{Approximate quantum error-correcting conditions}

Theorem $2$ of~\cite{tanggara2024strategic} gives an information-theoretic condition for perfect error correction by a strategic code $\Cmat$.
In the following, $S(\rho)$ denotes the von Neumann entropy of density operator $\rho$.

\begin{fact}\label{thm:info_theoretic_QECC}(\cite{tanggara2024strategic}, Theorem $2$)
    A strategic code $\mathbf{C}$ with initial codespace $Q_0$ and interrogator $\mathbf{T}$ corrects noise $\mathbf{E}$ if and only if
    \begin{equation}
        S(\rho_{m}^{R B_l E}) = S(\rho_{m}^{R}) + S(\rho_{m}^{B_l E}) \;.
    \end{equation} 
\end{fact}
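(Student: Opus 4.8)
The plan is to prove the two-way equivalence by routing through the algebraic correctability condition of Fact~\ref{thm:algebraic_KL_condition}, which is already available, rather than building a recovery map from scratch. That is, I would establish the chain ``$\mathbf{C}$ corrects $\mathbf{E}$ $\iff$ the algebraic Knill--Laflamme condition holds $\iff$ the entropy additivity holds.'' The first equivalence is exactly Fact~\ref{thm:algebraic_KL_condition}, so the genuine content is the second one, which links the scalar overlaps $\ll E_{e'}|\big(|C_m\rr\ll C_m|\otimes|\bar{j}\>\<\bar{i}|\big)|E_e\rr = \lambda_{e',e,m}\delta_{j,i}$ to the saturation $S(\rho_{m}^{R B_l E}) = S(\rho_{m}^{R}) + S(\rho_{m}^{B_l E})$.

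First I would write the branch-conditioned joint state $\rho_m^{RB_lE}$ explicitly. Starting from a maximally entangled pair between the reference $R$ and the initial codespace $Q_0$ (equivalently a purification of the maximally mixed logical input, carried by $R$), one propagates it through the interrogator branch $|C_m\rr$ and through the Stinespring dilation of the purified noise $\mathbf{E}(\cdot)=\sum_e \mathbf{E}_e(\cdot)\mathbf{E}_e^\dagger$, whose dilating register is the noise environment $E$. The resulting vector lives on $R\otimes B_l\otimes E$, with amplitudes indexed by the codespace basis label $i$ (on $R$), the error label $e$ (on $E$), and the code-system output (on $B_l$). Tracing out the appropriate factors produces the three reduced operators whose von Neumann entropies appear in the statement, and $B_l E$ is precisely the pair of systems that is inaccessible to the decoder.

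The key step I expect next is to recognize that the entropy additivity is the equality case of subadditivity for the cut $R\,:\,B_l E$, hence equivalent to the product form $\rho_m^{RB_lE}=\rho_m^R\otimes\rho_m^{B_lE}$, i.e. the reference decouples from the inaccessible systems. I would then show that this product structure holds if and only if the $R$-indices $i,j$ decouple from the environment indices $(e',e)$: writing $\rho_m^{RB_lE}$ (before normalization) as a block operator with blocks labelled by $(i,j)$ on $R$ and entries acting on $B_l E$, the product condition forces every block to equal $\delta_{i,j}$ times one fixed operator on $B_l E$. Taking the partial trace over $B_l$ of that fixed operator reproduces exactly the scalar $\lambda_{e',e,m}\delta_{j,i}$ pattern of Fact~\ref{thm:algebraic_KL_condition}, and running the implication backwards reconstructs the factorized $\rho_m^{RB_lE}$ from the overlaps. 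A clean way to get both directions at once is to diagonalize the environment Gram matrix $[\lambda_{e',e,m}]$ and exhibit an explicit Schmidt-type decomposition across the $R:B_lE$ cut.

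The main obstacle will be the careful bookkeeping of which subsystem carries which index, and proving both directions of ``product state $\iff$ algebraic condition'' rigorously, since the algebraic condition is a statement about scalar overlaps whereas the entropy condition is a statement about operators on the full $R B_l E$ space. In particular, the forward direction (product $\Rightarrow$ algebraic) requires checking that the partial trace over $B_l$ does not destroy the decoupling encoded in the overlaps, while the reverse direction (algebraic $\Rightarrow$ product) requires verifying that the operator reassembled from the overlaps genuinely factorizes across the $R:B_lE$ cut rather than merely having the correct marginals.
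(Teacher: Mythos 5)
First, a point of comparison: the paper does not actually prove this statement. It is imported verbatim as Fact~\ref{thm:info_theoretic_QECC} from Theorem~2 of~\cite{tanggara2024strategic}, and the only proof-adjacent material here is the purification setup of Appendix~\ref{app:strategic_code_details:purified} together with the proof of Theorem~\ref{thm:info_theoretic_AQECC}, which explicitly borrows ``the decoding channel $\mathcal{D}_m$ described in the proof of Fact~\ref{thm:info_theoretic_QECC} in~\cite{tanggara2024strategic}.'' That borrowed proof evidently runs through the decoupling picture directly: additivity $\Rightarrow$ $\rho_m^{RB_lE}=\rho_m^R\otimes\rho_m^{B_lE}$ $\Rightarrow$ a Schmidt decomposition of the purification across the $A_l : RB_lE$ cut $\Rightarrow$ an explicit Uhlmann-type decoder acting on $A_l$. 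Your route --- chaining through the algebraic condition of Fact~\ref{thm:algebraic_KL_condition} --- is a genuinely different and in principle legitimate path, and it is economical because the overlaps $\ll E_{e'}|(|C_m\rr\ll C_m|\otimes|\bar{j}\rangle\langle\bar{i}|)|E_e\rr$ are, up to normalization and index conventions, exactly the matrix elements of $\rho_m^{RE}$ in the product basis $|\bar{i}\rangle_R|e\rangle_E$ (and $B_l$ carries only the fixed label $|m\rangle$ on each branch, so the $B_l$ factor is trivially split off --- the worry you raise about the partial trace over $B_l$ is not an issue).

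The genuine gap is in the direction ``entropy additivity $\Rightarrow$ algebraic condition.'' Your step ``the product condition forces every block to equal $\delta_{i,j}$ times one fixed operator on $B_lE$'' silently assumes $\rho_m^R\propto I_R$. That is automatic in the other direction (the algebraic condition gives $(\rho_m^{RE})_{(i,e),(j,e')}\propto\lambda_{e',e,m}\delta_{ij}$, whose $R$-marginal is proportional to the identity), but it is \emph{not} automatic for a fixed measurement branch $m$: the conditional state $\rho_m^R$ is $\sum_{i,j}\langle\bar j|\sum_e K_{e,m}^\dagger K_{e,m}|\bar i\rangle\,|\bar i\rangle\langle\bar j|$, and $\sum_e K_{e,m}^\dagger K_{e,m}$ need not be proportional to the identity on the codespace when the outcome probability depends on the logical input. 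In that case $\rho_m^{RB_lE}$ can be an exact product (e.g.\ a branch that projects onto a logical basis state yields a pure, hence trivially additive, conditional state) while both the $\delta_{ij}$ structure and correctability fail. To close the hole you must either show that the hypothesis forces $\rho_m^R=I_R/d$ for every $m$, or work with the global block-diagonal state $\sum_m\rho_m^{RE}\otimes|m\rangle\langle m|_{B_l}$, where vanishing mutual information across $R:B_lE$ additionally forces all conditional $R$-marginals to coincide with the maximally mixed state; the per-branch, normalized-conditional-state reading of the additivity condition is strictly weaker than what your argument needs.
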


This condition considers the purification of all operations performed within the strategic code $\mathbf{C}$, as well as noise $\mathbf{E}$ given one-half of maximally entangled state input $\ket{\phi}_{R Q_0} = \sum_i \ket{i}_{R} \ket{i}_{Q_0}$ on the initial codespace $Q_0$ and reference system $R$ with the same dimension as $A_0$.
Here we consider ancillary systems $B_l$ used by the interrogator $\mathbf{T}$ and the noise environment $E$ at the beginning of the decoding stage.
For a given $m$, then the global state over systems $R,B_l,A_l,E$ at the output of the dynamics induced by the interaction between interrogator $\mathbf{T}$ and noise $\mathbf{E}$ is a pure state $|\mathbf{E}\ast\mathbf{T}_m(\phi)\>$.
We denote the density operator of the reduced $|\mathbf{E}\ast\mathbf{T}_m(\phi)\>$ state on system $W$ as $\rho_m^W$ (where $W$ is a subset of systems $R,B_l,A_l,E$).
For details on this scenario, see Appendix~\ref{app:strategic_code_details:purified}.

Suppose that instead of $S(\rho_m^{R B_l E}) = S(\rho_m^{R}) + S(\rho_m^{B_l E})$ for all $m$, we have $S(\rho_m^{R}) + S(\rho_m^{B_l E}) - S(\rho_m^{R B_l E}) < \varepsilon$.
Although perfect error correction is not possible, can we still approximately correct it?
If so, with what fidelity can we restore the initial code state which is one-half of the maximally entangled state $\ket{\phi}_{R Q_0}$ by operation on $A_l$ for all $l$?
Theorem~\ref{thm:info_theoretic_AQECC} bounds the entanglement fidelity of this restoration.

\begin{theorem}\label{thm:info_theoretic_AQECC}
    Suppose that 
    $$S(\rho_m^{R}) + S(\rho_m^{B_l E}) - S(\rho_m^{R B_l E}) < \varepsilon$$ for all $m$. Then the entanglement fidelity between the recovered code-state $\sigma^{R A_0}$ and the initial code-state $\ket{\phi}_{R Q_0}$ is bounded as
    $$F_{ent}(\sigma^{R A_0}, \ket{\phi}_{R Q_0}) \geq 1 - 2\sqrt{\varepsilon}.$$
\end{theorem}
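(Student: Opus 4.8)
The plan is to read the hypothesis as a \emph{near-decoupling} statement and then invoke Uhlmann's theorem to build an explicit recovery. The first observation is that the quantity appearing in the hypothesis is exactly a quantum mutual information: for each $m$,
$$S(\rho_m^{R}) + S(\rho_m^{B_l E}) - S(\rho_m^{R B_l E}) = I(R : B_l E)_{\rho_m} = D\big(\rho_m^{R B_l E}\,\big\|\,\rho_m^{R}\otimes\rho_m^{B_l E}\big),$$
so the assumption reads $D(\rho_m^{R B_l E}\|\rho_m^{R}\otimes\rho_m^{B_l E}) < \varepsilon$. This is precisely the approximate relaxation of the perfect condition $I(R:B_l E)=0$ underlying Fact~\ref{thm:info_theoretic_QECC}, which in the exact case forces $\rho_m^{R B_l E}$ to factorize and makes the reference $R$ recoverable from $A_l$ alone.

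First I would convert this relative-entropy bound into geometric statements. By the quantum Pinsker inequality, $\|\rho_m^{R B_l E} - \rho_m^{R}\otimes\rho_m^{B_l E}\|_1 \le \sqrt{2\varepsilon} =: \delta_m$, and by Fuchs--van de Graaf the Uhlmann fidelity obeys $F(\rho_m^{RB_lE}, \rho_m^{R}\otimes\rho_m^{B_lE}) \ge 1 - \tfrac12\delta_m$. Next I would set up purifications. The global output $|\mathbf{E}\ast\mathbf{T}_m(\phi)\rangle$ on $R B_l A_l E$ is a purification of $\rho_m^{RB_lE}$ with purifying system $A_l$. For the product target $\tau_m := \rho_m^{R}\otimes\rho_m^{B_lE}$ I would take the purification $|\phi\rangle^{R A_0}\otimes|\chi\rangle^{B_lE A''}$, where $|\chi\rangle$ purifies $\rho_m^{B_lE}$ and, using that $\rho_m^R$ is maximally mixed (since $R$ is entangled with the code systems only through $|\phi\rangle_{RQ_0}$ and no operation acts on $R$), the purification of $\rho_m^R$ is the maximally entangled initial code-state $|\phi\rangle^{RA_0}$. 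Uhlmann's theorem then yields an isometry $W: A_l \to A_0 A''$ with
$$F\Big((I_{RB_lE}\otimes W)\,|\mathbf{E}\ast\mathbf{T}_m(\phi)\rangle,\ |\phi\rangle^{RA_0}\otimes|\chi\rangle^{B_lEA''}\Big) = F(\rho_m^{RB_lE},\tau_m) \ge 1 - \tfrac12\delta_m,$$
and I would define the recovery on $A_l$ by $\mathcal{R}_m(\cdot) := \tr_{A''}\big(W(\cdot)W^\dagger\big)$, with recovered state $\sigma^{RA_0} := \mathcal{R}_m(\rho_m^{RA_l})$.

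To finish, I would trace out $B_l, E, A''$. Since partial trace cannot decrease fidelity and the $RA_0$-marginal of the target purification is exactly $|\phi\rangle\langle\phi|^{RA_0}$, this gives $F(\sigma^{RA_0}, |\phi\rangle^{RA_0}) \ge 1 - \tfrac12\delta_m$. Because $|\phi\rangle$ is pure, the entanglement fidelity equals the squared Uhlmann fidelity against it, so
$$F_{ent}(\sigma^{RA_0}, |\phi\rangle_{RQ_0}) = F(\sigma^{RA_0},|\phi\rangle)^2 \ge \big(1-\tfrac12\delta_m\big)^2 \ge 1 - \delta_m \ge 1 - \sqrt{2\varepsilon} \ge 1 - 2\sqrt{\varepsilon},$$
which is the claimed bound (in fact slightly stronger, the slack absorbing the choice of logarithm base in Pinsker).

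\textbf{Main obstacle.} The delicate part is the purification bookkeeping in the third step: one must justify that $\rho_m^R$ is maximally mixed for \emph{every} outcome $m$, so that the target's $RA_0$-marginal is genuinely the maximally entangled code-state, and one must handle the normalization of the per-branch conditional states $\rho_m^{W}$ (and pad the purifying spaces so the dimensions match and Uhlmann applies). It is also necessary to confirm that $\tr_{A''}\circ\,W(\cdot)W^\dagger$ is a legitimate recovery acting only on $A_l$, as required by the statement. By contrast, tracking the constant through Pinsker and Fuchs--van de Graaf is routine and only affects the gap between the achievable $1-\sqrt{2\varepsilon}$ and the stated $1-2\sqrt{\varepsilon}$.
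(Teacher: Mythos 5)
Your proposal is correct and follows the same overall architecture as the paper's proof: identify the entropy combination as $I(R:B_lE)_{\rho_m}=S(\rho_m^{RB_lE}\,\|\,\rho_m^R\otimes\rho_m^{B_lE})$, convert the relative-entropy bound into a fidelity bound between $\rho_m^{RB_lE}$ and the product state, lift this to purifications on $RA_lB_lE$ via Uhlmann, push through a recovery acting only on $A_l$ using monotonicity of fidelity, and square to get $F_{ent}\geq(1-\sqrt{\varepsilon})^2\geq 1-2\sqrt{\varepsilon}$. There are two lemma-level substitutions. First, for the conversion step the paper invokes the direct bound $F(\rho,\sigma)\geq 1-\sqrt{S(\rho\|\sigma)}$ (citing Schumacher--Westmoreland), giving $F>1-\sqrt{\varepsilon}$ in one line, whereas you chain Pinsker and Fuchs--van de Graaf; your route also lands within the stated constant but carries the logarithm-base bookkeeping you mention, so the paper's choice is the cleaner one here. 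Second, for the recovery the paper does not construct the Uhlmann isometry from scratch: it writes the Schmidt decomposition of the Uhlmann-optimal purification $\ket{\xi_m}$ of $\rho_m^R\otimes\rho_m^{B_lE}$ on the cut $A_l:RB_lE$ and observes that it has exactly the form appearing in the exact-correctability proof of Fact~\ref{thm:info_theoretic_QECC} in~\cite{tanggara2024strategic}, so the decoder $\mathcal{D}_m$ built there restores $\ket{\xi_m}$ to $\ket{\phi}_{RQ_0}$ perfectly; applying that same $\mathcal{D}_m$ to the actual purification and using monotonicity finishes the argument. Your abstract Uhlmann-isometry construction $W:A_l\to A_0A''$ is essentially what underlies that decoder, so the two are interchangeable; the paper's version buys an explicit channel by delegation to prior work, yours is self-contained but must verify the dimension padding you flag. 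Finally, the subtlety you correctly identify as the main obstacle --- that the target purification's $RA_0$-marginal is the maximally entangled state, i.e.\ that the per-branch $\rho_m^R$ is (proportional to) the identity --- is present in the paper's proof as well: the Schmidt coefficients $\sqrt{q_\alpha^{(m)}}$ in its decomposition of $\ket{\xi_m}$ are taken independent of the index $i$ on $R$, which is exactly this assumption, inherited without comment from Eq.~(29) of~\cite{tanggara2024strategic}. So your proposal is not missing anything the paper supplies; if anything it is more explicit about where the residual care is needed.
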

Appendix~\ref{app:info_theoretic_AQECC}
describes the proof of Theorem~\ref{thm:info_theoretic_AQECC}.

\medskip

\section{\label{sec:algo_QEC}See-saw algorithm and QEC Protocol}

Algorithm~\ref{algo:see-saw_k} describes the see-saw algorithm for single-check dynamical QECC optimization when we are dealing with amplitude-damping noise as defined in~\eqref{eq:two_round_AD_gen}.  

\begin{algorithm*}[htpb]
\hrulefill
\vskip-10pt
\caption{See-saw algorithm for single-check dynamical QECC optimization for $k$-qubit error}\label{algo:see-saw_k}
\vskip-6pt\hrulefill
    \begin{algorithmic}[1]
        \State \text{Input:} $\Cmat^{(0)},\ \{\Cmat_m^{(1)}\}_m$
        \State Compute $N_0 = \left[ \otimes_{k'\in K} E_{i,k'}^{(0)*} \ket{a}\bra{b} E_{i,k'}^{(0)\top},\ \forall\ a, b, i, K \right]$
        \State Compute $N_1 = \left[ \otimes_{k'\in K} E_{j,k'}^{(0)*} \ket{c}\bra{d} E_{j,k'}^{(0)\top},\ \forall\ c, d, j, K \right]$
        \State Compute $R=\left[\rho \ket{s}\bra{t} \rho^\dag,\ \forall s,t\right]$
        \Repeat
            \State Compute $\Cmat_0\Cmat_1 = \left[\sum\limits_{abcdijstK} \bra{tb} \Cmat^{(0)} \ket{sa}\ \tr \left[ \Cmat_m^{(1)} \left(\left(N_0\right)_{abiK} \otimes \ket{c}\bra{d}\right) \right]\left(\left(N_1\right)_{cdjK}\otimes R_{st}\right),\ \forall\ m \right]$
            \State\label{eq:opt_Dmat_m_1} Solve for optimal $\hat{\Dmat}_m: \max \left\{ \sum\limits_m \tr \left[ \Dmat_m \left(\Cmat_0\Cmat_1\right)_m\right]\ :\ \Dmat_m \geq 0,\ \tr_{L}(D_m) = I_{A_1} \right\}$
            \State Compute $\Cmat_0\Dmat = \left[\sum\limits_{abcdijstK} \bra{tb} \Cmat^{(0)} \ket{sa} \tr \left[ \hat{\Dmat}_m \left(\left(N_1\right)_{cdjK}\otimes R_{st}\right) \right]\left(\left(N_0\right)_{abiK} \otimes \ket{c}\bra{d}\right) ,\ \forall\ m \right]$
            \State\label{eq:opt_Cmat_m1_1} Solve for optimal $\hat{\Cmat}_m^{(1)}: \max \left\{ \sum\limits_{m} \tr \left[ \Cmat_m^{(1)} \left(\Cmat_0\Dmat\right)_m\right]\ :\ \Cmat_m^{(1)} \geq 0,\ \tr_{A_1} \left( \sum\limits_{m} \Cmat_m^{(1)} \right) = I_{A_0} \right\}$
            \State Compute $\Cmat_1\Dmat = \sum\limits_{abcdijmstK} \tr \left[ \hat{\Cmat}_m^{(1)} \left(\left(N_0\right)_{abiK} \otimes \ket{c}\bra{d}\right) \right]\tr \left[ \hat{\Dmat}_m \left(\left(N_1\right)_{cdjK}\otimes R_{st}\right) \right]|sa\rangle\langle tb|$
            \State\label{eq:opt_Cmat_m0_1} Solve for optimal $\hat{\Cmat}^{(0)}: \max \left\{ \tr \left[\Cmat^{(0)}\Cmat_1\Dmat\right]\ :\ \Cmat^{(0)} \geq 0,\ \tr_{A_0}(\Cmat^{(0)}) = I_L \right\}$
        \Until{\label{abc}no progress is being made}
        \State \textbf{Return} $\{\hat{\Dmat}_m\}_m, \{\hat{\Cmat}_m^{(1)}\}_m, \hat{\Cmat}^{(0)}$
    \end{algorithmic}
\end{algorithm*}

\medskip

\subsection{QEC protocol for amplitude-damping noise}

\begin{itemize}
    \item\textbf{2-qubit code:}
        \begin{itemize}
            \item\textbf{Encoder: }$\ket{0_L}=\frac{1}{\sqrt{2}}(\ket{00}+\ket{11})_{ma},\ket{1_L}=\frac{1}{\sqrt{2}}(\ket{00}-\ket{11})_{ma}$, where $m$ denotes the main qubit and $a$ denotes the auxiliary qubit.
            \item\textbf{Check instrument: }$\ket{00}_{ma}\to\frac{1}{\sqrt{2}}(\ket{00}+\ket{10})_{ma},\ket{01}_{ma}\to\ket{01}_{ma},\ket{10}_{ma}\to\frac{1}{\sqrt{2}}(\ket{00}-\ket{10})_{ma},\ket{11}_{ma}\to\frac{1}{\sqrt{2}}(\ket{00}-\ket{10})_{ma}$.
            \item\textbf{Decoder: } $\ket{01}_{ma}\to\frac{1}{\sqrt{2}}(\ket{00}-\ket{10})_{ma},\ket{11}_{ma}\to\frac{1}{\sqrt{2}}(\ket{00}-\ket{10})_{ma}$. Then the auxiliary qubit would be traced out.
        \end{itemize}

    \item\textbf{3-qubit code:}
        \begin{itemize}
            \item\textbf{Encoder: }$\ket{0_L}=\frac{1}{\sqrt{2}}(\ket{000}+\ket{111})_{ma_1a_2},\ket{1_L}=\frac{1}{\sqrt{2}}(\ket{000}+\ket{111})_{ma_1a_2}$, where $m$ denotes the main qubit and $a_i$ denote the auxiliary qubits.
            \item\textbf{Check instrument: }$\ket{101}_{ma_1a_2}\to\frac{1}{2}(\ket{100}+\ket{101})_{ma_1a_2},\ket{110}_{ma_1a_2}\to\frac{1}{2}(\ket{100}+\ket{110})_{ma_1a_2},\ket{111}_{ma_1a_2}\to\ket{100}_{ma_1a_2}$. Other state would remain unchanged.
            \item\textbf{Decoder: }$\ket{000}_{ma_1a_2}\to\frac{1}{\sqrt{2}}(\ket{000}+\ket{100})_{ma_1a_2}$. All other states would go to $\frac{1}{\sqrt{2}}(\ket{000}-\ket{100})_{ma_1a_2}$. After this operation, the auxiliary qubits would be traced out.
        \end{itemize} 
\end{itemize}

\medskip

\paragraph*{Software.---}

    The Python code for Algorithm~\ref{algo:see-saw_k} and the Choi representations corresponding to the encoder, check operation and decoder, are available in the following GitHub repository: \href{https://github.com/sp-k/QEC_approximate_dynamical.git}{https://github.com/sp-k/QEC\_approximate\_dynamical.git}. 

\medskip

\section{Discussion} In this work, we introduced approximate dynamical codes, a new class of quantum error-correcting codes that merge the adaptability of approximate quantum error correction with the flexibility of dynamical codes. Using the strategic code framework, we established the uniqueness and robustness of optimal encoding, decoding, and check measurements via semidefinite programming. As a special case, we recover the approximate static codes~\cite{fletcher2007optimum} widely studied in the existing literature (see Appendix~\ref{app:static_QECC}). 

An interesting parallel exists between dynamical codes and dynamical decoupling (DD)~\cite{ezzell2023dynamical, viola1999dynamical, biercuk2009optimized, khodjasteh2005fault, de2010universal, du2009preserving, alvarez2011measuring, viola2003robust, medford2012scaling, khodjasteh2007performance, facchi2004unification}. While DD mitigates noise by applying unitary pulses that average out interactions, dynamical codes leverage time-evolving codespaces and strategic measurements to counteract noise structurally. Extending our optimization framework to uncover new DD sequences is an exciting direction.

While our work focuses on qubit-based codes, extending these to qudits could unlock new advantages. Incorporating quantum memory may enable adaptive error correction, particularly in non-Markovian environments where errors are temporally correlated. Scaling approximate dynamical codes to larger system sizes remains another interesting open challenge. While our methods demonstrate feasibility for small-scale systems, their applicability to large-scale quantum processors will require optimizing both computational efficiency and physical implementability.

A recent study established a rigorous connection between quantum circuit complexity and approximate quantum error correction capability~\cite{yi2024complexity}. It introduced a code parameter, referred to as subsystem variance, which is closely linked to the optimal precision of AQEC. These insights may provide a new perspective on approximate dynamical quantum error correction, potentially leading to further advancements in the field. 

\medskip

\paragraph*{Note added.---} While this work was being completed, a related study on dynamical codes for biased noise was posted on arXiv~\cite{setiawan2024tailoring}. Unlike their approach, our work leverages an optimization-theoretic framework, enabling the construction of smaller codes that are applicable to arbitrarily chosen noise models and support any number of check measurement rounds. 

\medskip
\section*{Acknowledgements} 
KB thanks the support from Q.InC Strategic Research and Translational Thrust. This work was performed when AM was an intern at IHPC, A*STAR.
AM was also supported by the Commonwealth of Virginia's Commonwealth Cyber Initiative (CCI) under grant number $469024$.
AT is supported by CQT PhD scholarship and the Google PhD Fellowship. We thank Hoang Xuan for interesting discussions.

%apsrev4-2.bst 2019-01-14 (MD) hand-edited version of apsrev4-1.bst
%Control: key (0)
%Control: author (8) initials jnrlst
%Control: editor formatted (1) identically to author
%Control: production of article title (0) allowed
%Control: page (0) single
%Control: year (1) truncated
%Control: production of eprint (0) enabled
%

% \begingroup
% \renewcommand{\addcontentsline}[3]{}
% \bibliographystyle{unsrt}
% \bibliography{references}
% \endgroup

\appendix
\section{Static QECC Optimization}\label{app:static_QECC}

\begin{figure}[htpb]
    \centering
    \tikzset{
        my label/.append style={below,yshift=-0.8cm}
    }
    \begin{tikzpicture}
    \node[scale=1]{
    \begin{quantikz}
        & & \gate[wires=2, nwires={1}, style={fill=blue!10}]{\mathcal{C}^{(0)}} \gategroup[2, steps=1, style={dashed, rounded corners, fill=blue!10, inner xsep=2pt}, background]{{Encoding $\mathbf{C}^{(0)}$}} & & \gate[wires=2, nwires={1}, style={fill=red!10}]{\mathcal{D}} \gategroup[2, steps=1, style={dashed, rounded corners, fill=red!10, inner xsep=2pt}, background]{{Decoding $\mathbf{D}$}} \gategroup[2, steps=1, style={dashed, rounded corners, inner xsep=2pt}, background]{} & \\
        & \gate[wires=1, nwires={1}, style={white}]{\rho} & & \gate[wires=2,  nwires={2}, style={fill=green!20, rounded corners}]{\mathcal{E}^{(0)}} & & \qw & \gate[style={white}]{\rho} \\
        & & \gate[wires=1, nwires={1}, style={white}]{|0\>} & \qw & & &
    \end{quantikz}
    };
    \end{tikzpicture}
    \caption{Static QECC illustration.
    A static code is defined by an encoder $\mathcal{C}^{(0)}:\mathcal{L}(L)\rightarrow\mathcal{L}(A_0)$ and decoder $\mathcal{D}:\mathcal{L}(A_0)\rightarrow\mathcal{L}(L)$ under noise $\mathcal{E}^{(0)}$ applied to the system $A_0$ at the output of $\mathcal{C}^{(0)}$ where quantum state $\rho\in\mathcal{L}(L)$ is encoded.
    Decoder $\mathcal{D}$ is applied to the noisy system to recover the initial state $\rho$ encoded by $\mathcal{C}^{(0)}$.
    }
    \label{fig:static_QECC}
\end{figure}

Figure~\ref{fig:static_QECC} describes a density operator $\rho$ in $\mathscr{H}_L$, an encoding channel $\mathcal{C}^{(0)}: \mathscr{H}_L \to \mathscr{H}_{Q_0}$, an error map $\mathcal{E}^{(0)}: \mathscr{H}_{Q_0} \to \mathscr{H}_{Q_0}'$, and a decoding channel $\mathcal{D}: \mathscr{H}_{Q_0}' \to \mathscr{H}_L'$.
The corresponding optimization problem is

\begin{equation}
    \begin{aligned}
        \max &\quad \tr \left[ \left( \mathbf{E}^T \otimes I_{\{L', L\}}a \right)\ \mathbf{Q}(| \rho \rangle \rangle \langle \langle \rho | \otimes I_{Q\backslash\{L, L'\}}) \right] \\
        \text{subject to} &\quad \mathbf{Q} = \mathbf{D} \otimes \mathbf{C}^{(0)} \\
        &\quad \mathbf{D} \geq 0, \qquad \tr_{L'} (\mathbf{D}) = I_{Q'_0} \\
        &\quad \mathbf{C}^{(0)} \geq 0, \qquad \tr_{Q_0} (\mathbf{C}^{(0)}) = I_{L}.
    \end{aligned}
\end{equation}
where $Q = \{L, Q_0, Q_0', L'\}$.

As shown in Appendix~\ref{app:obj}, the objective function can be written as
\begin{multline} 
    \sum_{ijkl} \bra{lj} \mathbf{C}^{(0)} \ket{ki}\ \tr \Bigg[ \mathbf{D}_{Q_0', L'}\Bigg( \left( \sum_{e_0} E_{e_0}^* \ket{i}\bra{j} E_{e_0}^\top \right)_{Q_0'}\\ \otimes \left( \rho \ket{k}\bra{l} \rho^\dag \right)_{L'} \Bigg) \Bigg]
\end{multline}

The corresponding see-saw algorithm is described in Algorithm~\ref{algo:static_see-saw}.

\begin{algorithm*}[htpb]
\hrulefill
\vskip-10pt
\caption{Static QECC see-saw}\label{algo:static_see-saw}
\vskip-6pt\hrulefill
    \begin{algorithmic}[1]
        \State \textbf{Input:} $\mathbf{C}^{(0)}$
        \State Compute $(N_0)_{ij} = \sum_{e_0} E_{e_0}^* \ket{i}\bra{j} E_{e_0}^\top\ \forall\ i, j$
        \State Compute $R_{kl} = \rho \ket{k}\bra{l} \rho^\dag\ \forall\ k, l$
        \Repeat
            \State Compute $\mathbf{\tilde{C}}_0 = \sum\limits_{ijkl} \bra{lj} \mathbf{C}^{(0)} \ket{ki}\ \left( (N_0)_{ij} \otimes R_{kl} \right)$
            \State Solve for optimal $\hat{\Dmat}: \max \left\{ \tr \left[\mathbf{D}\ \tilde{\mathbf{C}}_0 \right]:\ \mathbf{D} \geq 0, \quad \tr_{L'}(\mathbf{D}) = I_{Q_0'} \right\}$
            \State Compute $\mathbf{\tilde{D}} = \sum_{ijkl}\ket{ki}\bra{lj}\tr \left[ \mathbf{D}\ \left( (N_0)_{ij} \otimes R_{kl} \right) \right]$
            \State Solve for optimal $\hat{\Cmat}^{(0)}: \max \left\{ \tr\left[\mathbf{C}^{(0)}\mathbf{\tilde{D}}\right]:\ \mathbf{C}^{(0)} \geq 0, \quad \tr_{Q_0}(\mathbf{C}^{(0)}) = I_L \right\}$
        \Until{no progress is being made}
        \State \textbf{Return} $\hat{\Dmat}, \hat{\Cmat}^{(0)}$
    \end{algorithmic}
\end{algorithm*}

Figure~\ref{fig:Rep_13.1} compares the $\llbracket5, 1, 3\rrbracket$ code from Ref.~\cite{bennett1996mixed} and the $\llbracket4, 1\rrbracket$ approximate code from Ref.~\cite{leung1997approximate} using optimal recovery from Ref.~\cite{fletcher2007optimum} with their counterparts obtained by optimizing over the recovery using Algorithm~\ref{algo:static_see-saw}.

\begin{figure*}
\centering
\includegraphics[width=0.9\textwidth]{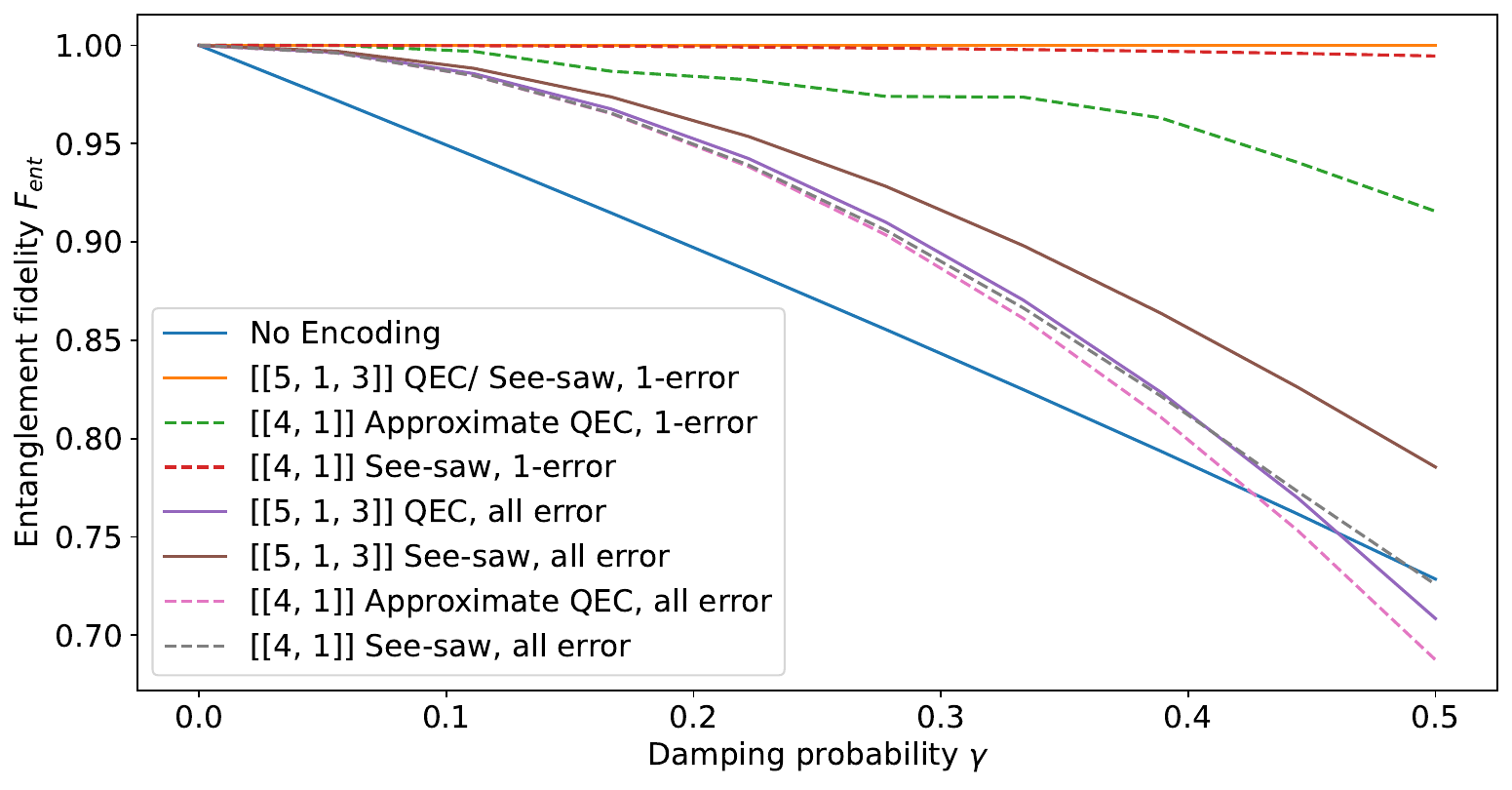}
\caption{damping strength vs Entanglement Fidelity plot using Algorithm~\ref{algo:static_see-saw} compared to two static codes $\llbracket5, 1, 3\rrbracket$ code from Ref.~\cite{bennett1996mixed} and $\llbracket4, 1\rrbracket$ approximate code from Ref.~\cite{leung1997approximate} with optimal recovery from Ref.~\cite{fletcher2007optimum}.}
\label{fig:Rep_13.1}
\end{figure*}

\section{Strategic Code Framework}\label{app:strategic_code_details} 

In this section, we describe the strategic code framework in detail.
Objects in the framework, the entire strategic code $\mathbf{C}$, interrogator $\mathbf{T}$, and noise $\mathbf{E}$ has a unique positive semidefinite representation which generalizes Choi representation for quantum channels (see~\cite{chiribella2008quantum,chiribella2009theoretical,milz2021quantum,oreshkov2012quantum}).
We detail how each of these objects are defined.

A strategic code $\mathbf{C}$ can be described by an encoder $\mathcal{C}^{(0)}$, an interrogator $\mathbf{T}$ consisting of a sequence of quantum operations $\mathcal{C}^{(1)},\dots,\mathcal{C}^{(l)}$, and a decoder $\mathcal{D}$.
Let $\mathcal{L}(A)$ denote the set of all bounded linear operator mapping Hilbert space $A$ to itself.
Encoder $\mathcal{C}^{(0)}$ is a quantum channel (completely positive and trace-preserving map) of the form $\mathcal{L}(L)\mapsto\mathcal{L}(A_0)$.
The first operation $\mathcal{C}^{(1)}$ is a quantum channel as well with Kraus operators $\{C_{m_1}^{(1)}\}_{m_1}$ (obtained from spectral decomposition of its Choi operator $\mathbf{C}^{(1)} = \sum_{m_1} |C_{m_1}^{(1)}\rr\ll C_{m_1}^{(1)}|$), where $m_1$ is to be interpreted as outcome of measurements. 
Adaptive operations may be performed by interrogator $\mathbf{T}$, hence operation $\mathcal{C}^{(2)}$ at round $2$ may be conditioned on outcome $m_1$.
Thus operation $\mathcal{C}^{(2)}$ consists of quantum channels $\{\mathcal{C}_{m_1}^{(2)}\}_{m_1}$, each conditioned on outcome $m_1$ from round $1$.
Similar to the channel in round 1, channel $\mathcal{C}_{m_1}^{(2)}$ has Kraus operators $\{C_{m_2|m_1}^{(2)}\}_{m_2}$ and a Choi operator $\mathbf{C}^{(1)} = \sum_{m_2} |C_{m_2|m_1}^{(1)}\rr\ll C_{m_2|m_1}^{(1)}|$.
In rounds $r>2$, operation $\mathcal{C}^{(r)}$ is defined similarly with dependency on outcomes in previous rounds $m_{1:r}=m_1,\dots,m_{r-1}$. 
Interrogator $\mathbf{T}$ can also be represented as a positive semidefinite operator which takes the form of
\begin{equation}\label{eqn:app:interrogator_spectral_decomp}
    \mathbf{T} = \sum_m \mathbf{T}_m = \sum_m |C_m\rr\ll C_m|
\end{equation}
where $m:=m_1,\dots,m_l$ and $|C_m\rr := \bigotimes_{r=1}^l |C_{m_r|m_{1:r}}^{(r)}\rr$.
Decoder $\mathbf{D}$ also consists of decoding channels $\{\mathcal{D}_m\}_m$, each with Choi operator representation $\mathbf{D}_m$.
Finally, the entire strategic code can be described by
\begin{equation}
    \mathbf{C} = \sum_m \mathbf{C}^{(0)} \otimes \mathbf{T}_m \otimes \mathbf{D}_m \;,
\end{equation}
which is a positive semidefinite operator since each of its tensor factors is a positive semidefinite operator. 

Similarly, noise $\mathbf{E}$ admits a positive semidefinite operator representation which can be obtained by a sequence of quantum channels $\mathcal{E}^{(0)},\dots,\mathcal{E}^{(l)}$ where $\mathcal{E}^{(r)}:\mathcal{L}(A_r\otimes E_r)\rightarrow\mathcal{L}(A_r\otimes E_{r+1})$ for $1\leq r<l$ and $\mathcal{E}^{(0)}:\mathcal{L}(A_0)\rightarrow\mathcal{L}(A_0\otimes E_{1})$ and $\mathcal{E}^{(l)}:\mathcal{L}(A_l\otimes E_l)\rightarrow\mathcal{L}(A_l)$.
Hilbert space $E_r$ represents noise environment in which temporal correlation from round $r-1$ to round $r$ occurs.
Positive semidefinite operator $\mathbf{E}$ then can be obtained by inputting one-half of (unnormalized) maximally entangled state $|\phi^{(r)}\> = \sum_{j=1}^{d_r}|j\>_{A_r}|j\>_{A_r}$ to the $A_r$ input of each $\mathcal{E}^{(r)}$ (where $d_r=\dim A_r$).
The collection of the outputs from $\mathcal{E}^{(0)},\dots,\mathcal{E}^{(l)}$ along with the other half of the maximally entangled states gives us a positive-semidefinite operator $\mathbf{E}\in\mathcal{L}(A_0^{\otimes2}\otimes\dots\otimes A_l^{\otimes2})$ (since each $\mathcal{E}^{(r)}$ is completely-positive).
We can take the spectral decomposition of $\mathbf{E}$ as
\begin{equation}
    \mathbf{E} = \sum_e |E_e\rr\ll E_e|
\end{equation}
where $|E_e\rr = \sum_{j} \alpha_{j} |j_0\>\otimes|j_0'\>\otimes\dots\otimes|j_l\>\otimes|j_l'\>$ with $\{|j_r\>\}_{j_r=1}^{d_r}$ a basis of input space $A_r$, $\{|j_r'\>\}_{j_r'=1}^{d_r}$ a basis for output space $A_r$ and $j=j_0,j_0',\dots,j_l,j_l'$.
Noise $\mathbf{E} = \sum_e |E_e\rr\ll E_e|$ can also be thought of as a quantum channel with Kraus operators $\mathbf{E}_e = \sum_{j} \alpha_{j} |j_l'\> (\<j_0|\otimes\<j_0'|\otimes\dots\otimes\<j_l|)$ which gives a channel that maps $\rho\otimes\mathbf{C}^{(0)} \otimes \mathbf{T}_m \mapsto \sum_e \mathbf{E}_e (\rho\otimes\mathbf{C}^{(0)}\otimes\mathbf{T}_m) \mathbf{E}_e^\dag$.

As a final note, the Choi operator $\mathbf{T}$ corresponding to quantum channel $\mathcal{T}:\mathcal{L}(A)\rightarrow\mathcal{L}(B)$ must also satisfy $\tr_B(\mathbf{T})=I_A$ due to its trace-preserving condition.
Thus this apply to each of the quantum channels in the interrogator $\mathbf{T}$, as well as encoding and decoding channels.
Therefore this condition appears as a constraint in the optimization problem in the main text to ensure the trace-preserving condition of these channels (along with the positive semidefiniteness condition).

\section{Proof of Theorem~\ref{thm:primal_unique}}
\label{app:primal_unique}

Consider the SDP
\begin{equation*}
    \mathrm{maximize:}\; \left\{ \tr(CX)\ :\ \Xi(X) = B, X \in \mathrm{Pos}(\mathcal{X}) \right\},
\end{equation*}
where $\Xi$ is any Hermiticity-preserving linear map, $B$ and $C$ are Hermitian matrices, and $\mathrm{Pos}(\mathcal{X})$ denotes the set of positive semidefinite matrices in a complex Euclidean space $\mathcal{X}$.
The dual of this SDP is expressed as
\begin{equation*}
    \begin{aligned}
        \mathrm{minimize:} \quad & \tr(BY) \\
        \mathrm{subject\ to} \quad & \Xi^*(Y) - C = Z \\
        & \hspace{10pt} Y \in \mathrm{Herm}(\mathcal{Y}) \\
        & \hspace{10pt} Z \in \mathrm{Herm}(\mathcal{Z}),
    \end{aligned}
\end{equation*}
for Hermitian matrices $Y$ and $Z$ in complex Euclidean spaces $\mathcal{Y}$ and $\mathcal{Z}$ respectively.
Here, $\mathrm{Herm}(\mathcal{Y})$ denotes the set of all Hermitian matrices in $\mathcal{Y}$.
Let $Z^*$ denote a dual optimal solution.

\begin{definition}[Dual nondegeneracy]
    \label{def:dual_nondeg}
     The dual solution $Z^*$ is said to be \emph{dual nondegenerate} if the homogeneous linear system
    \begin{equation}
        \label{eq:dual_non_deg}
            M Z^* = 0, \qquad \Xi(M) = 0
    \end{equation}
    only admits the trivial solution $M = 0$, for a symmetric matrix $M$.
\end{definition}

Ref.~\cite{alizadeh1997complementarity} showed that if $Z^*$ is a dual optimal and non-degenerate solution of a SDP, then there exists a \textit{unique} primal optimal solution.

\begin{fact}(\cite{bharti2019robust}, Theorem 6)\label{thm:robust}
    For a primal-dual pair of SDPs whose optimal values are equal and both obtained, assume that the set of primal feasible solutions is contained in a compact subset of the set of positive semidefinite operators.
    Denoting the set of primal operators as $\mathcal{P}$ with singularity degree $d$, we have that the distance of any primal feasible solution $\tilde{X}$ satisfying $p^* - \epsilon \leq \tr(CX)$ with $p^*$ denoting the primal optimal value, to the set $\mathcal{P}$, $\mathrm{dist}(\tilde{X}, \mathcal{P})$ satisfies
    \begin{equation}
        \mathrm{dist}(\tilde{X}, \mathcal{P}) \leq \mathcal{O}\left( \epsilon^{2^{-d}} \right).
    \end{equation}
\end{fact}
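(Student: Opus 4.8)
The plan is to reduce the robustness statement, which concerns \emph{near-optimality} of the objective value, to a purely \emph{feasibility} error bound for the optimal face, and then to invoke the Hölderian error bound of Sturm that is governed by the singularity degree. First I would fix a dual optimal solution $(Y^*, Z^*)$ with dual slack $Z^* = \Xi^*(Y^*) - C \in \mathrm{Pos}(\mathcal{Z})$; this exists because the primal and dual optimal values coincide and are both attained. For any primal feasible $X$ (so $\Xi(X) = B$ and $X \in \mathrm{Pos}(\mathcal{X})$), the adjoint identity together with strong duality gives
\begin{equation*}
    p^* - \tr(CX) = \tr(BY^*) - \tr(CX) = \tr\big(X(\Xi^*(Y^*) - C)\big) = \tr(Z^* X) \;.
\end{equation*}
Since $X, Z^* \succeq 0$ this quantity is nonnegative and vanishes exactly when $Z^* X = 0$. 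Consequently the optimal set admits the feasibility description
\begin{equation*}
    \mathcal{P} = \{\, X \in \mathrm{Pos}(\mathcal{X}) : \Xi(X) = B, \ \tr(Z^* X) = 0 \,\} \;.
\end{equation*}

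Next I would convert the near-optimality hypothesis into an approximate-feasibility statement for this enlarged affine system. Writing $\mathcal{A}(X) := \big(\Xi(X), \tr(Z^* X)\big)$ with target $(B, 0)$, the identity above shows that a feasible $\tilde X$ with $\tr(C\tilde X) \geq p^* - \epsilon$ obeys $0 \leq \tr(Z^* \tilde X) \leq \epsilon$, while $\Xi(\tilde X) = B$ holds exactly. Thus $\tilde X \in \mathrm{Pos}(\mathcal{X})$ and the affine residual satisfies $\|\mathcal{A}(\tilde X) - (B,0)\| \leq \epsilon$; in other words, $\tilde X$ is an $\epsilon$-feasible point for the spectrahedron $\mathcal{P}$, whose singularity degree is $d$ by hypothesis.

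The core of the argument is then the Hölderian error bound for semidefinite feasibility systems keyed to the singularity degree. The regularity of $\mathcal{P}$ is measured by the length of a facial reduction sequence, namely a chain of exposing vectors that successively restricts the problem to smaller faces of $\mathrm{Pos}(\mathcal{X})$ until a relatively strictly feasible (Slater-regular) system remains, the minimal length of such a chain being $d$. Under Slater regularity ($d = 0$) one has a Lipschitz (exponent-$1$) error bound of Hoffman/Robinson type, and each facial reduction step degrades the exponent by a factor of $1/2$ because of the quadratic relationship between the boundary directions of the cone and the residual; composing $d$ steps therefore yields
\begin{equation*}
    \mathrm{dist}(\tilde X, \mathcal{P}) \leq \gamma \big( \mathrm{dist}(\tilde X, \mathrm{Pos}(\mathcal{X})) + \|\mathcal{A}(\tilde X) - (B,0)\| \big)^{2^{-d}} \;.
\end{equation*}
Because $\tilde X$ is already positive semidefinite the first residual vanishes, and the assumed compactness of the primal feasible set makes the constant $\gamma$ uniform over all admissible $\tilde X$. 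Substituting $\|\mathcal{A}(\tilde X) - (B,0)\| \leq \epsilon$ gives $\mathrm{dist}(\tilde X, \mathcal{P}) \leq \gamma\, \epsilon^{2^{-d}} = \mathcal{O}(\epsilon^{2^{-d}})$, as claimed.

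The main obstacle is the error bound invoked in the previous paragraph: establishing the square-root-per-facial-reduction-step degradation that produces the exponent $2^{-d}$ is exactly Sturm's theorem, and this is where all the genuine work resides, since the complementary-slackness reformulation and the residual estimate are routine. A secondary technical point is to confirm that the compactness hypothesis yields a single constant $\gamma$ valid simultaneously for every near-optimal $\tilde X$, rather than a point-dependent one; this follows from a standard continuity and compactness argument over the bounded feasible region.
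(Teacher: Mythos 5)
Your proposal is correct and follows essentially the same route as the source of this statement: the paper itself imports it as a Fact from Ref.~\cite{bharti2019robust} without reproving it, and the proof there proceeds exactly as you describe --- characterize the optimal set $\mathcal{P}$ via complementary slackness against a dual optimal slack $Z^*$ (using attainment and zero duality gap to get $p^* - \tr(CX) = \tr(Z^*X)$), reinterpret $\epsilon$-near-optimality as $\epsilon$-approximate feasibility of the augmented affine system $\{\Xi(X)=B,\ \tr(Z^*X)=0,\ X \in \mathrm{Pos}(\mathcal{X})\}$, and invoke Sturm's H\"olderian error bound with exponent $2^{-d}$ governed by the singularity degree, with compactness of the feasible region yielding a uniform constant. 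No gaps to flag.
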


\bigskip

Algorithm~\ref{algo:see-saw} describes the see-saw algorithm for single-check dynamical QECC optimization. 

Theorem~\ref{thm:primal_unique}
 states that the Choi representation of the optimal decoding channel $\Dmat_m^*$ (line~\ref{eq:opt_Dmat_m} of Algorithm~\ref{algo:see-saw}), the Choi representation of the optimal check operation $\Cmat_m^{(1) *}$ (line~\ref{eq:opt_Cmat_m_1} of Algorithm~\ref{algo:see-saw}), and the Choi representation of the optimal encoding channel $\Cmat^{(0) *}$ (line~\ref{eq:opt_Cmat_m_0} of Algorithm~\ref{algo:see-saw}) is unique and robust.

Our approach is as follows.
We first derive the conditions that the optimal solution $Y_{D_m}^*$ of the dual program corresponding to the SDP in line~\ref{eq:opt_Dmat_m} of Algorithm~\ref{algo:see-saw} must satisfy.
Showing that $Y_{D_m}^*$ is non-degenerate proves that the corresponding optimal solution $\Dmat_m^*$ obtained by solving the SDP in line~\ref{eq:opt_Dmat_m} is unique.
As a consequence of Fact~\ref{thm:robust}, for any $\tilde{\Dmat}_m$ which is feasible in SDP in line~\ref{eq:opt_Dmat_m} of Algorithm~\ref{algo:see-saw} we get that
\begin{equation*}
    \|\tilde{\Dmat}_m - \Dmat_m^*\|_F = \mathrm{dist}\left( \tilde{\Dmat}_m, \Dmat_m^* \right) \leq \mathcal{O}(\epsilon),
\end{equation*}
thereby showing that the optimal solution $\Dmat_m^*$ is \textit{robust}.

By following the same argument, we can show the uniqueness and robustness of the optimal solutions $\Cmat_m^{(1) *}$ and $\Cmat^{(0) *}$ that result from solving the SDPs in line~\ref{eq:opt_Cmat_m_1} and line~\ref{eq:opt_Cmat_m_0} of Algorithm~\ref{algo:see-saw}.

\bigskip

Consider the primal SDP in line~\ref{eq:opt_Dmat_m} of Algorithm~\ref{algo:see-saw}, and its dual SDP,
\begin{widetext}
\allowdisplaybreaks
\begin{equation*}
    \begin{aligned}
        \mathrm{maximize:} &\; \left\{ \sum\limits_{m} \tr \left[ \Dmat_m  (\Cmat_0\Cmat_1)_m\right]\ :\ \Dmat_m \geq 0,\ \tr_{L'}(D_m) = I_{Q_1'} \right\} \\
        \mathrm{minimize:} &\; \left\{ \tr(Y_{D_m})\ :\ I_{L'} \otimes Y_{D_m} - (\Cmat_0 \Cmat_1)_m \geq 0 \right\}
    \end{aligned}
\end{equation*}
\end{widetext}
Suppose $\Dmat_m^*$ and $Y_{D_m}^*$ are optimal solutions for the primal and dual problems, respectively.

\begin{lemma}\label{lem:strong_dual}
    Strong duality holds for this primal-dual SDP pairs, i.e., $\sum_m \tr[\Dmat_m^* (\Cmat_0\Cmat_1)_m] = \tr(Y^*_{D_m})$.
\end{lemma}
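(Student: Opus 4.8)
The plan is to establish strong duality through Slater's condition for semidefinite programs (see, e.g.,~\cite{watrous2018theory}). First I would observe that both the objective $\sum_m \tr[\Dmat_m (\Cmat_0\Cmat_1)_m]$ and the constraints $\Dmat_m \geq 0$, $\tr_{L'}(\Dmat_m) = I_{Q_1'}$ separate across the index $m$, so it suffices to prove strong duality for each fixed $m$ and then sum; this also reconciles the per-$m$ dual written in the statement. To identify the dual I would form the Lagrangian with a Hermitian multiplier $Y_{D_m}$ for the equality constraint, using that the adjoint of the partial trace $\tr_{L'}$ is the map $Y \mapsto I_{L'}\otimes Y$. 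Maximizing the Lagrangian over $\Dmat_m \geq 0$ is finite precisely when $I_{L'}\otimes Y_{D_m} - (\Cmat_0\Cmat_1)_m \geq 0$, in which case the inner supremum equals $\tr(Y_{D_m})$; this recovers exactly the dual program displayed above. Note that $(\Cmat_0\Cmat_1)_m$ is Hermitian by its construction in Algorithm~\ref{algo:see-saw} from positive semidefinite operators, so the dual slack is a genuine linear matrix inequality.

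Next I would exhibit strictly feasible points on both sides, which by Slater's theorem yields strong duality together with attainment of both optima. For the primal, the choice $\Dmat_m = \tfrac{1}{d_{L'}} I_{L'\otimes Q_1'}$ is strictly positive definite and satisfies $\tr_{L'}(\Dmat_m) = \tfrac{1}{d_{L'}}(\tr I_{L'})\, I_{Q_1'} = I_{Q_1'}$, so it lies in the relative interior of the primal feasible set. For the dual, taking $Y_{D_m} = \lambda I_{Q_1'}$ with $\lambda > \lambda_{\max}\big((\Cmat_0\Cmat_1)_m\big)$ gives $I_{L'}\otimes Y_{D_m} - (\Cmat_0\Cmat_1)_m = \lambda I - (\Cmat_0\Cmat_1)_m > 0$, a strictly feasible dual point. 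Since both programs are strictly feasible and the primal feasible set is compact (it is closed and bounded, the latter because $\tr(\Dmat_m) = \tr(I_{Q_1'})$ is fixed by the equality constraint), the common optimal value is finite and attained on both sides, giving $\sum_m \tr[\Dmat_m^* (\Cmat_0\Cmat_1)_m] = \sum_m \tr(Y^*_{D_m})$, which holds term by term in $m$.

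I do not expect a serious obstacle here, as this is a standard application of conic duality; the only points requiring care are verifying the adjoint identity for the partial trace and confirming strict feasibility on the dual side, both of which are routine. This argument also furnishes the two hypotheses needed later for Fact~\ref{thm:robust}, namely equality of the primal and dual optimal values and attainment of both. Moreover, the same template—decoupling over $m$, a maximally-mixed-type strictly feasible primal point, and a large-multiple-of-identity strictly feasible dual point—applies verbatim to the SDPs in line~\ref{eq:opt_Cmat_m_1} and line~\ref{eq:opt_Cmat_m_0} of Algorithm~\ref{algo:see-saw}, so Lemma~\ref{lem:strong_dual} extends immediately to the companion primal-dual pairs used in the uniqueness and robustness argument.
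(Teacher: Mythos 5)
Your proposal is correct and follows essentially the same route as the paper: exhibit strictly feasible points for the primal and dual and invoke Slater's condition. Your version is in fact more careful than the paper's — the paper's primal point $\Dmat_m = I_{L'}\otimes I_{Q_1'}$ satisfies $\tr_{L'}(\Dmat_m)=I_{Q_1'}$ only when $\dim L'=1$, and its dual point $Y_{D_m}=I_{Q_1'}$ is feasible only if $\lambda_{\max}\big((\Cmat_0\Cmat_1)_m\big)\leq 1$, whereas your normalized primal point $\tfrac{1}{d_{L'}}I$ and your choice $Y_{D_m}=\lambda I$ with $\lambda>\lambda_{\max}\big((\Cmat_0\Cmat_1)_m\big)$ repair both issues.
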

\begin{proof}
    Let $\Dmat_m = I_{L'} \otimes I_{Q_1'}$. 
    This is positive definite and satisfies the constraint $\tr_{L'}(D_m) = I_{Q_1'}$.
    Thus, the primal SDP is strictly feasible.
    
    Let $Y_{D_m} = I_{Q_1'}$.
    Then we have $I_{L'} \otimes I_{Q_1'} - (\Cmat_0\Cmat_1)_m \geq 0$ which satisfies the constraint of the dual problem.
    So the dual SDP admits a feasible solution.
\end{proof}

\begin{lemma}\label{lem:Y_D_m}
In line~\ref{eq:opt_Dmat_m} of Algorithm~\ref{algo:see-saw}, $\Dmat_m^* = \sum_\alpha |d_{m \alpha}^* \rr\ll d_{m \alpha}^*|$ are Choi matrices of the optimal decoder if and only if
\begin{equation*}
    (I_{L'} \otimes Y_{D_m}^* - (\mathbf{C}_0 \mathbf{C}_1)_m) \mathbf{D}^*_m \geq 0\ \forall\ m
\end{equation*}
where
\begin{equation*}
    \begin{aligned}
        (\Cmat_0\Cmat_1)_m =& \sum\limits_{ijhkst} \bra{tj} \Cmat^{(0)} \ket{si}\ \tr \left[ \Cmat_m^{(1)} \left(N_0\right)_{ijhk} \right]\left(N_1\right)_{hkst}, \\
        (N_1)_{h k s t} =& \sum_\beta | (n_{1 \beta})_{hkst} \rr\ll n_{1 \beta})_{hkst} |, \\
        \bar{Y}_{D_m}^* \coloneqq& \sum_{\alpha \beta i j h k s t} \bra{tj} \Cmat^{(0)} \ket{si} \tr \left[ \Cmat_m^{(1)} (N_0)_{ijhk} \right]\\
            &\;\times (d_{m \alpha}^*)^\dag (n_{1 \beta})_{hkst} \tr \left[ (n_{1 \beta})_{hkst}^\dag d_{m \alpha}^*) \right ].
    \end{aligned}
\end{equation*}
\end{lemma}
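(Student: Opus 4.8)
The statement is a complementary-slackness characterization of primal optimality for the decoder SDP, paired with an explicit form of the optimal dual variable, so the plan is to build on the strong duality already secured in Lemma~\ref{lem:strong_dual} and specialize the standard SDP optimality argument to the Choi/vectorized data. Since the objective $\sum_m \tr[\Dmat_m (\Cmat_0\Cmat_1)_m]$ and the constraints $\Dmat_m\geq0$, $\tr_{L'}(\Dmat_m)=I_{Q_1'}$ decouple across $m$, I would fix a single index $m$ and treat it as an independent primal--dual pair, with dual variable $Y_{D_m}$ and dual slack $Z_m:=I_{L'}\otimes Y_{D_m}-(\Cmat_0\Cmat_1)_m$; dual feasibility is then exactly $Z_m\geq0$. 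By Lemma~\ref{lem:strong_dual} both optima are attained and their values coincide.

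The first step is to recast the duality gap in terms of $Z_m$. Using the trace-preservation constraint I would show $\tr[\Dmat_m(I_{L'}\otimes Y_{D_m})]=\tr[(\tr_{L'}\Dmat_m)\,Y_{D_m}]=\tr(Y_{D_m})$, so that the difference between the dual objective $\tr(Y_{D_m})$ and the primal objective $\tr[\Dmat_m(\Cmat_0\Cmat_1)_m]$ equals $\tr[\Dmat_m Z_m]$. Strong duality then forces this gap to vanish at a jointly optimal pair, giving $\tr[\Dmat_m^* Z_m^*]=0$.

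The second step converts this vanishing trace into the asserted operator relation. Because $\Dmat_m^*\geq0$ and $Z_m^*\geq0$, I would invoke the identity $\tr(AB)=\|B^{1/2}A^{1/2}\|_F^2$ for positive semidefinite $A,B$ to conclude that $\tr[\Dmat_m^* Z_m^*]=0$ forces $Z_m^*\Dmat_m^*=0$ (and its adjoint), which establishes the forward direction; note that the product being the zero operator is consistent with, and in fact sharpens, the stated positivity. For the converse, any feasible pair satisfying this complementary-slackness relation has $\tr[\Dmat_m Z_m]=0$, hence zero duality gap, so weak duality makes $\Dmat_m$ primal optimal.

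The remaining, and I expect most laborious, step is to pin down the explicit dual optimum $\bar{Y}_{D_m}^*$. Here I would substitute the spectral decompositions $\Dmat_m^*=\sum_\alpha|d_{m\alpha}^*\rr\ll d_{m\alpha}^*|$ and $(N_1)_{hkst}=\sum_\beta|(n_{1\beta})_{hkst}\rr\ll (n_{1\beta})_{hkst}|$ into the stationarity and complementary-slackness relations, contracting the objective operator $(\Cmat_0\Cmat_1)_m$ against the primal optimum, while tracking the vectorization maps $|\cdot\rr$ together with the conjugation and transposition coming from the Choi representation and summing over the indices $i,j,h,k,s,t,\alpha,\beta$. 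The main obstacle is this index bookkeeping: verifying that the resulting $\bar{Y}_{D_m}^*$ is Hermitian, that the induced slack $Z_m^*$ is positive semidefinite (dual feasibility), and that $Z_m^*\Dmat_m^*=0$ holds term by term. The SDP-duality skeleton is entirely standard, so I anticipate no conceptual difficulty beyond this calculation.
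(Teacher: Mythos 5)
Your proposal is correct and follows essentially the same route as the paper: both rest on the strong duality of Lemma~\ref{lem:strong_dual}, extract the complementary-slackness relation $(I_{L'}\otimes Y_{D_m}^*-(\Cmat_0\Cmat_1)_m)\Dmat_m^*=0$, and then obtain $\bar{Y}_{D_m}^*$ by substituting the spectral decompositions of $\Dmat_m^*$ and $(N_1)_{hkst}$ and contracting indices. If anything, your write-up is slightly more complete than the paper's, since you spell out how the zero duality gap yields complementary slackness and you explicitly handle the converse direction of the equivalence, which the paper leaves implicit.
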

\begin{proof}
    By complementary slackness, we have
    \begin{equation*}
        (I_{L'} \otimes Y_{D_m}^* - (\Cmat_0 \Cmat_1)_m) \Dmat_m^* = 0.
    \end{equation*}
    Let $\Dmat_m^*$ be expressed as $\Dmat_m^* = \sum_{\alpha} | d_{m \alpha}^* \rr\ll d_{m \alpha}^* |$.
    For all values of $\alpha$, we have
    \begin{equation*}
        \begin{aligned}
            (I_{L'} \otimes Y_{D_m}^*) | d_{m \alpha}^* \rr =& (\Cmat_0 \Cmat_1)_m | d_{m \alpha}^* \rr, \\
            | d_{m \alpha}^* \bar{Y}_{D_m}^* \rr =& \sum_{i j h k s t} \bra{tj} \Cmat^{(0)} \ket{si} \tr \left[ \Cmat_m^{(1)} (N_0)_{ijhk} \right]\\
            &\;\times(N_1)_{hkst} | d_{m \alpha}^* \rr.
        \end{aligned}
    \end{equation*}
    Let $N_1$ be written as $N_1 = \sum_\beta | n_{1 \beta} \rr \ll n_{1 \beta} |$. Then for all values of $\alpha$, we get
    \begin{equation*}
        \begin{aligned}
            | d_{m \alpha}^* \bar{Y}_{D_m}^* \rr =& \sum_{\beta i j h k s t} \bra{tj} \Cmat^{(0)} \ket{si} \tr \left[ \Cmat_m^{(1)} (N_0)_{ijhk} \right]\\
            &\;\times | (n_{1 \beta})_{hkst} \rr\ll (n_{1 \beta})_{hkst} | d_{m \alpha}^* \rr \\
            =& \sum_{\beta i j h k s t} \bra{tj} \Cmat^{(0)} \ket{si} \tr \left[ \Cmat_m^{(1)} (N_0)_{ijhk} \right]\\
            &\;\times | (n_{1 \beta})_{hkst} \rr \tr \left[ (n^\dag_{1 \beta})_{hkst} d_{m \alpha}^* \right] \\
            \implies d_{m \alpha}^* \bar{Y}_{D_m}^* =& \sum_{\beta i j h k s t} \bra{tj} \Cmat^{(0)} \ket{si} \tr \left[ \Cmat_m^{(1)} (N_0)_{ijhk} \right]\\
            &\;\times (n_{1 \beta})_{hkst} \tr \left[ (n_{1 \beta})^\dag_{hkst} d_{m \alpha}^* \right].
        \end{aligned}
    \end{equation*}
    Multiplying by $(d_{m \alpha}^*)^\dag$ on both sides and summing all the values of $\alpha$, we get
    \begin{equation*}
    \begin{aligned}
        \sum_\alpha (d_{m \alpha}^*)^\dag d_{m \alpha} \bar{Y}_{D_m}^* =& \sum_{\alpha \beta i j h k s t} \bra{tj} \Cmat^{(0)} \ket{si} \tr \left[ \Cmat_m^{(1)} (N_0)_{ijhk} \right]\\
            &\;\times (d_{m \alpha}^*)^\dag (n_{1 \beta})_{hkst} \tr \left[ (n_{1 \beta})^\dag_{hkst} d_{m \alpha}^* \right].
    \end{aligned}
    \end{equation*}
    Thus, we have
    \begin{equation}
    \label{eq:barY_D_m*}
    \begin{aligned}
        \bar{Y}_{D_m}^* =& \sum_{\alpha \beta i j h k s t} \bra{tj} \Cmat^{(0)} \ket{si} \tr \left[ \Cmat_m^{(1)} (N_0)_{ijhk} \right] (d_{m \alpha}^*)^\dag\\
            &\;\times (n_{1 \beta})_{hkst} \tr \left[ (n_{1 \beta})^\dag_{hkst} d_{m \alpha}^* \right].
    \end{aligned}
    \end{equation}
\end{proof}

\begin{algorithm*}[htpb]
\hrulefill
\vskip-10pt
\caption{See-saw algorithm for single-check dynamical QECC optimization}  
    \label{algo:see-saw}
    \vskip-6pt\hrulefill
    \begin{algorithmic}[1]
        \State \textbf{Input:} $\Cmat^{(0)},\ \{\Cmat_m^{(1)}\}_m$
        \State Compute $N_0 = \left[ \sum\limits_{e_0} E_{e_0}^* \ket{i}\bra{j} E_{e_0}^\top \otimes \ket{h}\bra{k},\ \forall\ i, j, h, k \right]$
        \State Compute $N_1 = \left[ \sum\limits_{e_1} E_{e_1}^* \ket{h}\bra{k} E_{e_1}^\top \otimes \rho \ket{s}\bra{t} \rho^\dag,\ \forall\ h, k, s, t \right]$
        \Repeat
            \State Compute $\Cmat_0\Cmat_1 = \left[\sum\limits_{ijhkst} \bra{tj} \Cmat^{(0)} \ket{si}\ \tr \left[ \Cmat_m^{(1)} \left(N_0\right)_{ijhk} \right]\left(N_1\right)_{hkst},\ \forall\ m \right]$
            \State\label{eq:opt_Dmat_m} Solve for optimal $\hat{\Dmat}_m: \max \left\{ \sum\limits_{m} \tr \left[ \Dmat_m  (\Cmat_0\Cmat_1)_m\right]\ :\ \Dmat_m \geq 0,\ \tr_{L'}(D_m) = I_{Q_1'} \right\}$
            \State Compute $\Cmat_0\Dmat = \left[\sum\limits_{ijhkst} \bra{tj} \Cmat^{(0)} \ket{si} \tr \left[ \hat{\Dmat}_m \left(N_1\right)_{hkst} \right]\left(N_0\right)_{ijhk},\ \forall\ m \right]$
            \State\label{eq:opt_Cmat_m_1} Solve for optimal $\hat{\Cmat}_m^{(1)}: \max \left\{ \sum\limits_{m}\ \tr \left[ \Cmat_m^{(1)} (\Cmat_0\Dmat)_m \right]\ :\ \Cmat_m^{(1)} \geq 0,\ \tr_{Q_1} \left( \sum\limits_{m} \Cmat_m^{(1)} \right) = I_{Q_0'} \right\}$
            \State Compute $\Cmat_1\Dmat = \sum\limits_{ijhkstm} \tr \left[ \hat{\Cmat}_m^{(1)} \left(N_0\right)_{ijhk} \right]\tr \left[ \hat{\Dmat}_m \left(N_1\right)_{hkst} \right]|si\rangle\langle tj|$
            \State\label{eq:opt_Cmat_m_0} Solve for optimal $\hat{\Cmat}^{(0)}: \max \left\{ \tr\left[ \Cmat^{(0)}\Cmat_1\Dmat\right]\ :\ \Cmat^{(0)} \geq 0,\ \tr_{Q_0}(\Cmat^{(0)}) = I_L \right\}$
        \Until{no progress is being made}
        \State \textbf{Return} $\{\hat{\Dmat}_m\}_m, \{\hat{\Cmat}_m^{(1)}\}_m, \hat{\Cmat}^{(0)}$
    \end{algorithmic}
\end{algorithm*}

\begin{lemma}\label{lem:Y_D_m_nondeg}
    The optimal dual solution
        \begin{equation*}
        \begin{aligned}
            \bar{Y}_{D_m}^* =& \sum_{\alpha \beta i j h k s t} \bra{tj} \Cmat^{(0)} \ket{si} \tr \left[ \Cmat_m^{(1)} (N_0)_{ijhk} \right] (d_{m \alpha}^*)^\dag\\
            &\;\times (n_{1 \beta})_{hkst} \tr \left[ (n_{1 \beta})^\dag_{hkst} d_{m \alpha}^*) \right ]
        \end{aligned}
        \end{equation*}
    is dual nondegenerate.
\end{lemma}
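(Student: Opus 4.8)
The plan is to verify the condition in Definition~\ref{def:dual_nondeg} directly for the dual optimum of the SDP in line~\ref{eq:opt_Dmat_m}. Here the constraint map is the partial trace $\Xi(\cdot) = \tr_{L'}(\cdot)$, so its adjoint is $\Xi^*(Y) = I_{L'}\otimes Y$ and the dual slack at optimality is $Z^* = I_{L'}\otimes Y_{D_m}^* - (\Cmat_0\Cmat_1)_m \geq 0$, where $Y_{D_m}^*$ is the dual optimum whose explicit form $\bar Y_{D_m}^*$ is recorded in eqn.~\eqref{eq:barY_D_m*} (up to a transposition arising from vectorization). Since both the objective and the constraints are block-diagonal in $m$, the program decouples and I would fix a single $m$ throughout. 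Concretely, I must show that the only Hermitian $M$ satisfying $M Z^* = 0$ and $\tr_{L'}(M) = 0$ is $M = 0$.

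First I would use that $Z^* \geq 0$ and $M$ is Hermitian: from $M Z^* = 0$ (equivalently $Z^* M = 0$) one gets $\mathrm{range}(M) \subseteq \ker Z^*$. The next step is to pin down $\ker Z^*$. Complementary slackness, already invoked in Lemma~\ref{lem:Y_D_m}, gives $Z^* \Dmat_m^* = 0$, hence $\mathrm{range}(\Dmat_m^*) = \mathrm{span}\{\,|d_{m\alpha}^*\rr\,\}_\alpha \subseteq \ker Z^*$. I would then argue that strict complementarity holds for the optimal pair, i.e. $\mathrm{rank}(Z^*) + \mathrm{rank}(\Dmat_m^*)$ equals the dimension of $\mathcal{L}(L'\otimes Q_1')$, so that in fact $\ker Z^* = \mathrm{span}\{\,|d_{m\alpha}^*\rr\,\}_\alpha$; this step is where the explicit form of $\bar Y_{D_m}^*$ from eqn.~\eqref{eq:barY_D_m*} is used to compute the kernel.

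With the kernel identified, any admissible $M$ can be written as $M = \sum_{\alpha\beta} c_{\alpha\beta}\,|d_{m\alpha}^*\rr\ll d_{m\beta}^*|$ for a Hermitian coefficient matrix $c$. Imposing the remaining constraint and using the vectorization identity $\tr_{L'}(|d_{m\alpha}^*\rr\ll d_{m\beta}^*|) = (d_{m\beta}^{*\dag} d_{m\alpha}^*)^\top$ converts $\tr_{L'}(M) = 0$ into $\sum_{\alpha\beta} c_{\alpha\beta}\, d_{m\beta}^{*\dag} d_{m\alpha}^* = 0$. I would finish by invoking the linear independence of the operators $\{\,d_{m\beta}^{*\dag} d_{m\alpha}^*\,\}_{\alpha\beta}$, which forces $c = 0$ and hence $M = 0$, completing the verification of nondegeneracy.

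The hard part will be the last ingredient: establishing strict complementarity together with the linear independence of $\{\,d_{m\beta}^{*\dag} d_{m\alpha}^*\,\}_{\alpha\beta}$ (equivalently, injectivity of $\tr_{L'}$ on operators supported in $\mathrm{range}(\Dmat_m^*)$). This is precisely Choi's extremality criterion for the channel $\mathcal{D}_m^*$, so the crux is to show that the optimal recovery returned by the SDP is an extreme point of the set of decoding channels; I expect to verify this from the explicit optimal Kraus operators $\{d_{m\alpha}^*\}$ rather than from a generic argument. The same reasoning then transfers verbatim, via the analogous dual slacks, to the check-instrument and encoder SDPs in lines~\ref{eq:opt_Cmat_m_1} and~\ref{eq:opt_Cmat_m_0}, yielding the remaining nondegeneracy claims needed for Theorem~\ref{thm:primal_unique}.
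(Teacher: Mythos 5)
Your strategy is the standard Alizadeh--Haeberly--Overton route to dual nondegeneracy and is, structurally, more faithful to Definition~\ref{def:dual_nondeg} than the paper's own argument: you work with the dual slack $Z^* = I_{L'}\otimes Y_{D_m}^* - (\Cmat_0\Cmat_1)_m$, identify $\ker Z^*$ via complementary slackness, and reduce $\tr_{L'}(M)=0$ to a statement about the operators $d_{m\beta}^{*\dag}d_{m\alpha}^*$. The paper takes a different and much more direct route: it asserts that $\bar Y_{D_m}^*$ is itself full rank on the span of the contributions of $(\Cmat_0\Cmat_1)_m$, so that $MY^*=0$ already forces $M=0$, and separately that $\tr[M(I_{L'}\otimes Y^*)]=0$ forces $M$ to be orthogonal to the whole operator space determined by $(\Cmat_0\Cmat_1)_m$; it never passes through the slack operator or the Kraus decomposition of $\Dmat_m^*$ at all.

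That said, your proposal has a genuine gap: the two ingredients you defer --- strict complementarity (so that $\ker Z^* = \mathrm{range}(\Dmat_m^*)$) and linear independence of $\{d_{m\beta}^{*\dag}d_{m\alpha}^*\}_{\alpha\beta}$ --- are exactly where the content of the lemma lives, and neither is automatic. Strict complementarity holds only generically for SDPs (this is the main theorem of~\cite{alizadeh1997complementarity}), and the linear-independence condition is Choi's criterion for $\mathcal{D}_m^*$ being an extreme point of the set of channels, which an optimizer of a linear objective over a spectrahedron need not satisfy: if the optimal face has positive dimension, the solution returned by a solver can lie in its relative interior, in which case $\{d_{m\beta}^{*\dag}d_{m\alpha}^*\}_{\alpha\beta}$ is linearly dependent and your coefficient matrix $c$ need not vanish. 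Since the purpose of the lemma is precisely to rule out a positive-dimensional optimal face, assuming extremality of $\Dmat_m^*$ comes close to assuming the conclusion. To close the argument you would need a problem-specific computation showing either that $Z^*$ has corank exactly $\mathrm{rank}(\Dmat_m^*)$ or that $\tr_{L'}$ is injective on operators supported on $\mathrm{range}(\Dmat_m^*)$; as written, the crux of your proof is announced but not executed.
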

\begin{proof}
    From definition~\ref{def:dual_nondeg} it suffices to show that the homogeneous system $MY^* = 0$ and $\tr[M(I_{L'} \otimes Y^*)] = 0$ admits the trivial solution $M = 0$.

    For $MY^* = 0$, $M$ must lie in the null space of $Y^*$.
    However, $Y^*$ is constructed as a full-rank operator on the space spanned by the contributions of $(C_0C_1)_m$.
    Thus, having a trivial null space., implying that $M = 0$.

    For $\tr[M(I_{L'} \otimes Y^*)] = 0$ to hold, $M$ must be orthogonal to the subspace spanned by $I_{L'} \otimes Y$.
    Since $Y^*$ spans the operator space determined by $(C_0C_1)_m$, the extension $I_{L'} \otimes Y^*$ spans the corresponding space in the larger Hilbert space.
    The condition requires $M$ to be orthogonal to the entire operator space determined by $(C_0C_1)_m$, leaving us with the trivial solution $M = 0$.
\end{proof}

\begin{lemma}\label{lem:D_m_robust}
    The optimal primal solution $\Dmat_m^*$ is robust, i.e,
    \begin{equation}\label{eq:D_m_robust}
        \|\tilde{\Dmat}_m - \Dmat_m^*\|_F = \mathrm{dist}\left( \tilde{\Dmat}_m, \Dmat_m^* \right) \leq \mathcal{O}(\epsilon).
    \end{equation}    
\end{lemma}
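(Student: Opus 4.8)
The plan is to combine the three preceding lemmas with the two cited semidefinite-programming facts into a single error bound. First I would record the structural inputs. Lemma~\ref{lem:strong_dual} guarantees strong duality for the primal-dual pair associated with line~\ref{eq:opt_Dmat_m}, and, crucially, exhibits a strictly feasible primal point ($\Dmat_m = I_{L'}\otimes I_{Q_1'}$ is positive definite and satisfies $\tr_{L'}(\Dmat_m)=I_{Q_1'}$) together with a feasible dual point, so both optimal values are attained and equal. Lemma~\ref{lem:Y_D_m} identifies the optimal dual solution $\bar{Y}_{D_m}^*$, and Lemma~\ref{lem:Y_D_m_nondeg} shows it is dual nondegenerate in the sense of Definition~\ref{def:dual_nondeg}.

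Next I would invoke the nondegeneracy-to-uniqueness correspondence of Ref.~\cite{alizadeh1997complementarity}: since $\bar{Y}_{D_m}^*$ is a dual optimal and dual nondegenerate solution, the primal optimum $\Dmat_m^*$ is \emph{unique}. Consequently the set $\mathcal{P}$ of primal optimizers is the singleton $\{\Dmat_m^*\}$, so for any feasible $\tilde{\Dmat}_m$ the distance to $\mathcal{P}$ collapses to $\mathrm{dist}(\tilde{\Dmat}_m,\mathcal{P}) = \|\tilde{\Dmat}_m - \Dmat_m^*\|_F$, which is precisely the quantity to be bounded.

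I would then verify the hypotheses of Fact~\ref{thm:robust}. Equality and attainment of the optimal values follow directly from Lemma~\ref{lem:strong_dual}. Compactness of the feasible region follows from the normalization constraint: any feasible $\Dmat_m\geq 0$ obeys $\tr(\Dmat_m) = \tr(\tr_{L'}(\Dmat_m)) = \tr(I_{Q_1'}) = \dim(Q_1')$, so the feasible set lies inside a fixed trace ball intersected with the PSD cone, which is compact. Fact~\ref{thm:robust} then yields $\mathrm{dist}(\tilde{\Dmat}_m,\mathcal{P}) \leq \mathcal{O}(\epsilon^{2^{-d}})$ for any $\epsilon$-suboptimal feasible $\tilde{\Dmat}_m$ (those achieving objective value within $\epsilon$ of the optimum $p^*$), where $d$ is the singularity degree of the primal optimal set.

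The final and main step is to pin the exponent at $1$. Because Lemma~\ref{lem:strong_dual} furnishes a strictly feasible (positive definite) primal point, Slater's condition holds, so no facial reduction is needed and the singularity degree is $d=0$; hence $2^{-d}=1$ and the Hölder bound sharpens to the Lipschitz bound $\mathrm{dist}(\tilde{\Dmat}_m,\mathcal{P}) \leq \mathcal{O}(\epsilon)$. Together with the singleton reduction of the distance this gives $\|\tilde{\Dmat}_m - \Dmat_m^*\|_F \leq \mathcal{O}(\epsilon)$, as claimed. I expect the delicate point to be the justification that $d=0$ rather than merely $d\geq 1$: dual nondegeneracy alone controls uniqueness of the optimizer, but the exponent is governed by how strict feasibility and strict complementarity interact with the singularity degree, so the argument that Slater's condition forces the linear-in-$\epsilon$ rate (rather than a $\sqrt{\epsilon}$ rate) is the crux. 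The identical chain of reasoning, applied to the SDPs in lines~\ref{eq:opt_Cmat_m_1} and~\ref{eq:opt_Cmat_m_0}, then delivers the robustness of $\Cmat_m^{(1)*}$ and $\Cmat^{(0)*}$.
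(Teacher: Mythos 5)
Your proposal follows essentially the same route as the paper's proof: compactness of the feasible set, strict primal feasibility from Lemma~\ref{lem:strong_dual} identified with singularity degree zero, and Fact~\ref{thm:robust} then yielding the linear-in-$\epsilon$ rate. You are somewhat more explicit than the paper about two points it leaves implicit — the reduction of $\mathrm{dist}(\tilde{\Dmat}_m,\mathcal{P})$ to $\|\tilde{\Dmat}_m-\Dmat_m^*\|_F$ via the uniqueness supplied by Lemmas~\ref{lem:Y_D_m} and~\ref{lem:Y_D_m_nondeg}, and the caveat that Fact~\ref{thm:robust} is stated in terms of the singularity degree of the \emph{optimal} set rather than the feasible set — but the argument is the same.
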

\begin{proof}
    Note the following two observations.
    Firstly, the set of channels on a fixed finite-dimensional Hilbert space forms a compact set~\cite{watrous2018theory}.
    Secondly, Lemma~\ref{lem:strong_dual} shows the strict feasibility of the primal SDP, which corresponds to a singularity degree of zero.
    These observations in conjunction with Fact~\ref{thm:robust} result in Eq.~\eqref{eq:D_m_robust}.
\end{proof}

From Lemma~\ref{lem:Y_D_m} and Lemma~\ref{lem:Y_D_m_nondeg} we have that the optimal solution $\Dmat_m^*$ of the SDP in line~\ref{eq:opt_Dmat_m} of Algorithm~\ref{algo:see-saw} is unique, and Lemma~\ref{lem:D_m_robust} gives us its robustness.

Similarly, we can derive the conditions that the dual optimal solutions of the SDPs in line~\ref{eq:opt_Cmat_m_1} and line~\ref{eq:opt_Cmat_m_0} of Algorithm~\ref{algo:see-saw} must satisfy and show that these are dual non-degenerate.
Furthermore, we can show that the respective primal solutions $\Cmat_m^{(1) *}$ and $\Cmat_m^{(0)}$ are robust as well.
These are described in Lemma~\ref{lem:opt_Cmat_m_1} and Lemma~\ref{lem:opt_Cmat_m_0} respectively.
\begin{lemma}\label{lem:opt_Cmat_m_1}
    In line~\ref{eq:opt_Cmat_m_1} of Algorithm~\ref{algo:see-saw}, $\Cmat_m^{(1)*} = \sum_{\gamma_1} |c_{m \gamma_1}^{(1)*} \rr\ll c_{m \gamma_1}^{(1)*}|$ are Choi matrices of the optimal check operation if and only if
    \begin{equation*}
        (I_{Q_1} \otimes Y_{C_1}^* - (\Cmat_0 \Dmat)_m) \Cmat^{(1) *}_m \geq 0\ \forall\ m
    \end{equation*}
    where
    \begin{equation*}
        \begin{aligned}
            (\Cmat_0 \Dmat)_m =& \sum_{ijhkst} \bra{tj} \Cmat^{(0)} \ket{si}\ \tr[\Dmat_m (N_1)_{hkst}]\ (N_0)_{ijhk}, \\
            (N_0)_{i j h k} =& \sum_\delta | (n_{0 \delta})_{ijhk} \rr\ll n_{0 \delta})_{ijhk} |, \\
            \bar{Y}_{C_1}^* \coloneqq &\sum_{\gamma_1 \delta i j h k s t} \bra{tj} \Cmat^{(0)} \ket{si}\ \tr \left[ \Dmat_m (N_1)_{hkst} \right]\\
            &\;\times (c_{m \gamma_1}^{(1)*})^\dag (n_{0 \delta})_{ijhk} \tr \left[ (n_{0 \delta})^\dag_{ijhk} c_{m \gamma_1}^{(1)*} \right ].
        \end{aligned}
    \end{equation*}
    Moreover, $Y_{C_1}^*$ is dual non-degenerate and $\Cmat_m^{(1)*}$ is robust.
\end{lemma}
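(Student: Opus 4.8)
The plan is to replicate, step for step, the three-part argument already carried out for the decoder $\Dmat_m^*$ in Lemmas~\ref{lem:strong_dual}--\ref{lem:D_m_robust}, with the Choi operator $\Cmat_m^{(1)}$ now playing the role of the optimization variable and $(\Cmat_0\Dmat)_m$ playing the role of $(\Cmat_0\Cmat_1)_m$. First I would write down the primal SDP in line~\ref{eq:opt_Cmat_m_1} of Algorithm~\ref{algo:see-saw} together with its Lagrange dual. Introducing a single Hermitian multiplier $Y_{C_1}$ on $Q_0'$ for the trace-preserving constraint $\tr_{Q_1}(\sum_m \Cmat_m^{(1)}) = I_{Q_0'}$, the dual reads $\min\{\tr(Y_{C_1})\ :\ I_{Q_1}\otimes Y_{C_1} - (\Cmat_0\Dmat)_m \geq 0\ \forall m\}$. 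Strong duality then follows exactly as in Lemma~\ref{lem:strong_dual}: a choice $\Cmat_m^{(1)}\propto I_{Q_0'\otimes Q_1}$ with $\tr_{Q_1}(\sum_m\Cmat_m^{(1)})=I_{Q_0'}$ is strictly feasible for the primal, and $Y_{C_1}=c\,I_{Q_0'}$ for sufficiently large $c$ is dual feasible since each $(\Cmat_0\Dmat)_m$ is bounded.

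Given strong duality, complementary slackness yields $(I_{Q_1}\otimes Y_{C_1}^* - (\Cmat_0\Dmat)_m)\,\Cmat_m^{(1)*}=0$ for every $m$, which is the stated optimality condition. To extract the closed form of $\bar Y_{C_1}^*$, I would expand both $\Cmat_m^{(1)*}=\sum_{\gamma_1}|c_{m\gamma_1}^{(1)*}\rr\ll c_{m\gamma_1}^{(1)*}|$ and $(N_0)_{ijhk}=\sum_\delta |(n_{0\delta})_{ijhk}\rr\ll (n_{0\delta})_{ijhk}|$, substitute them into the complementary-slackness equation, act on each vectorized eigenvector $|c_{m\gamma_1}^{(1)*}\rr$, collapse the inner products via $\ll (n_{0\delta})_{ijhk}|c_{m\gamma_1}^{(1)*}\rr = \tr[(n_{0\delta})^\dag_{ijhk} c_{m\gamma_1}^{(1)*}]$, then de-vectorize, left-multiply by $(c_{m\gamma_1}^{(1)*})^\dag$, and sum over $\gamma_1$. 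This is the verbatim analogue of the manipulation in Lemma~\ref{lem:Y_D_m}, with the roles of $(N_0,N_1,\Dmat_m)$ replaced by $(N_1,N_0,\Cmat_m^{(1)})$, and it reproduces the displayed expression for $\bar Y_{C_1}^*$.

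For non-degeneracy I would follow Lemma~\ref{lem:Y_D_m_nondeg}: by Definition~\ref{def:dual_nondeg} it suffices to show that the homogeneous system $M Y_{C_1}^*=0$, $\tr[M(I_{Q_1}\otimes Y_{C_1}^*)]=0$ admits only $M=0$. Since $Y_{C_1}^*$ is full rank on the subspace spanned by the $(\Cmat_0\Dmat)_m$ data, it has trivial null space, forcing $M=0$, and the orthogonality condition forces $M$ orthogonal to the whole operator space determined by the $(\Cmat_0\Dmat)_m$, again giving $M=0$. Uniqueness of $\Cmat_m^{(1)*}$ then follows from~\cite{alizadeh1997complementarity}, and robustness from Fact~\ref{thm:robust}, using that the set of channels is compact~\cite{watrous2018theory} and that strict feasibility gives singularity degree zero, exactly as in Lemma~\ref{lem:D_m_robust}. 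The main obstacle I anticipate is the \emph{coupled} normalization constraint: unlike the decoder case, where each $m$ carries an independent constraint and hence an independent dual block $Y_{D_m}$, here a single $Y_{C_1}$ is shared across all $m$, so the complementary-slackness equations are linked and the full-rank argument must be established for the joint operator space spanned by all blocks $(\Cmat_0\Dmat)_m$ simultaneously rather than block by block.
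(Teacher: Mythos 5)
Your proposal follows exactly the route the paper takes: the paper gives no separate proof of this lemma, stating only that it follows by repeating the argument of Lemmas~\ref{lem:strong_dual}--\ref{lem:D_m_robust} with $(\Cmat_0\Dmat)_m$ in place of $(\Cmat_0\Cmat_1)_m$ and the check operation in place of the decoder, which is precisely what you spell out. Your closing observation that the normalization constraint $\tr_{Q_1}\left(\sum_m\Cmat_m^{(1)}\right)=I_{Q_0'}$ is coupled across $m$ --- so that a single shared dual variable $Y_{C_1}$ links the complementary-slackness blocks and the full-rank/non-degeneracy argument must be run on the joint operator space spanned by all $(\Cmat_0\Dmat)_m$ rather than block by block --- is a genuine subtlety that the paper's ``similarly'' glosses over, and you identify it correctly.
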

~
\begin{lemma}\label{lem:opt_Cmat_m_0}
    In line~\ref{eq:opt_Cmat_m_0} of Algorithm~\ref{algo:see-saw}, $\Cmat^{(0)*} = \sum_{\gamma_0} |c_{\gamma_0}^{(0)*} \rr\ll c_{\gamma_0}^{(0)*}|$ are Choi matrices of the optimal encoder if and only if
    \begin{equation*}
        (I_{Q_0} \otimes Y_{C_0}^* - \Cmat_1 \Dmat) \Cmat^{(0)*} \geq 0
    \end{equation*}
    where
    {\allowdisplaybreaks
    \begin{equation*}
        \begin{aligned}
            &\Cmat_1 \Dmat = \sum_{ijhkstm} \tr[\Cmat_m^{(1)} (N_0)_{ijhk}]\ \tr[\Dmat_m (N_1)_{hkst}] \ket{si}\bra{tj}, \\
            &\ket{si}\bra{tj} = \sum_f | f \rr\ll f |, \\
            &\bar{Y}_{C_0}^* \coloneqq \sum_{\gamma_0 f i j h k s t m} \tr \left[ \Cmat_m^{(1)} (N_0)_{ijhk} \right]\ \tr \left[ \Dmat_m (N_1)_{hkst} \right] \\
            &\qquad\qquad\times(c_{\gamma_0}^{(0)*})^\dag f\ \tr \left[ f^\dag c_{\gamma_0}^{(0)*} \right ].
        \end{aligned}
    \end{equation*}}
    Moreover, $Y_{C_0}^*$ is dual non-degenerate  and $\Cmat^{(0)*}$ is robust.
\end{lemma}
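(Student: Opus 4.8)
The plan is to mirror, for the encoder SDP in line~\ref{eq:opt_Cmat_m_0} of Algorithm~\ref{algo:see-saw}, the three-part argument already carried out for the decoder in Lemmas~\ref{lem:strong_dual}--\ref{lem:D_m_robust}. First I would record the primal--dual pair: the primal is
\[
\max\left\{ \tr\left[\Cmat^{(0)}\Cmat_1\Dmat\right] : \Cmat^{(0)}\geq 0,\ \tr_{Q_0}(\Cmat^{(0)}) = I_L \right\},
\]
and its Lagrange dual, with dual variable $Y_{C_0}\in\mathrm{Herm}(L)$, is
\[
\min\left\{ \tr(Y_{C_0}) : I_{Q_0}\otimes Y_{C_0} - \Cmat_1\Dmat \geq 0 \right\}.
\]
Strong duality is obtained exactly as in Lemma~\ref{lem:strong_dual}: the point $\Cmat^{(0)} = I_L\otimes I_{Q_0}/\dim(Q_0)$ is positive definite and satisfies $\tr_{Q_0}(\Cmat^{(0)}) = I_L$, giving strict primal feasibility, while $Y_{C_0} = c\,I_L$ for large enough $c$ is strictly dual feasible.

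With strong duality in hand, I would apply complementary slackness to get $(I_{Q_0}\otimes Y_{C_0}^* - \Cmat_1\Dmat)\,\Cmat^{(0)*} = 0$, which is the stated characterization. To extract the explicit expression for $\bar{Y}_{C_0}^*$, I would substitute the spectral decomposition $\Cmat^{(0)*}=\sum_{\gamma_0}|c_{\gamma_0}^{(0)*}\rr\ll c_{\gamma_0}^{(0)*}|$ together with the decomposition $\ket{si}\bra{tj}=\sum_f|f\rr\ll f|$ appearing inside $\Cmat_1\Dmat$, apply the stationarity relation to each eigenvector $|c_{\gamma_0}^{(0)*}\rr$, then multiply by $(c_{\gamma_0}^{(0)*})^\dag$ and sum over $\gamma_0$. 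This reproduces the displayed formula for $\bar{Y}_{C_0}^*$, step for step as in the derivation of eqn.~\eqref{eq:barY_D_m*}.

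For dual nondegeneracy I would follow Lemma~\ref{lem:Y_D_m_nondeg}: by Definition~\ref{def:dual_nondeg} it suffices to show that $M Y_{C_0}^* = 0$ together with $\tr_{Q_0}(M)=0$ forces $M=0$. Because $Y_{C_0}^*$ is assembled from the full operator content of $\Cmat_1\Dmat$ and is full-rank on its support, the first equation leaves $M$ with a trivial null space, and orthogonality to the lifted operator $I_{Q_0}\otimes Y_{C_0}^*$ under the trace constraint removes the remaining freedom. Robustness of $\Cmat^{(0)*}$ then follows verbatim from Lemma~\ref{lem:D_m_robust}: the set of encoding channels is compact~\cite{watrous2018theory}, the strict feasibility above forces singularity degree zero, and Fact~\ref{thm:robust} yields $\|\tilde{\Cmat}^{(0)}-\Cmat^{(0)*}\|_F\leq\mathcal{O}(\epsilon)$.

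The step I expect to be the main obstacle is the full-rank claim underlying dual nondegeneracy. Unlike the decoder and check SDPs, the encoder objective is a single trace $\tr[\Cmat^{(0)}\Cmat_1\Dmat]$ in which the measurement index $m$ has already been contracted inside $\Cmat_1\Dmat$; I therefore have to verify that this contraction does not collapse the operator space so as to make $Y_{C_0}^*$ rank-deficient. Concretely this reduces to showing that the fixed data $\{\Cmat_m^{(1)},\Dmat_m,N_0,N_1\}$ does not conspire to annihilate a direction in $L$, a condition I expect to hold generically for the amplitude-damping noise of Section~\ref{sec:approx_code_ampl_damp} but which must be stated with care rather than assumed.
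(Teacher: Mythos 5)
Your proposal follows essentially the same route the paper intends: the paper gives no separate proof of this lemma, stating only that it follows "similarly" to the decoder case, and your primal--dual setup, strong duality via strictly feasible points, complementary slackness, spectral substitution to obtain $\bar{Y}_{C_0}^*$, and appeal to Fact~\ref{thm:robust} for robustness reproduce that template faithfully (your normalized feasible point $I_L\otimes I_{Q_0}/\dim(Q_0)$ and the choice $Y_{C_0}=c\,I_L$ are in fact slightly more careful than the paper's analogues in Lemma~\ref{lem:strong_dual}). The full-rank concern you flag for dual nondegeneracy is legitimate, but it is a gap already present in the paper's own argument in Lemma~\ref{lem:Y_D_m_nondeg}, not something your proof introduces.
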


\section{General Dynamical QECC Optimization}~\label{app:gen_QECC}

Algorithm~\ref{algo:see-saw_gen} describes the see-saw algorithm for the general case comprising multiple rounds of check operations.
Extending the discussion from the previous section, we show the uniqueness of the optimal Choi operators $\Cmat_{m_0}^{(0)}, \{\Cmat_{m_1 | m_0}^{(1)}\}_{m_1}, \dots, \{\Cmat_{m_l | m_{l-1}}^{(l)}\}_{m_l}, \{\Dmat_{m_l}\}_{m_l}$ corresponding to the encoding channel, $l$ intermediate check operations and the decoding channel respectively.

\begin{algorithm*}[htpb]
\hrulefill
\vskip-10pt
\caption{See-saw algorithm for general dynamical QECC optimization}\label{algo:see-saw_gen}
\vskip-6pt\hrulefill
    \begin{algorithmic}[1]
        \State \textbf{Input:} $\Cmat^{(0)}_{m_0},\ \{\Cmat_{m_1|m_0}^{(1)}\}_{m_1},\dots,\ \{\Cmat_{m_l|m_{l-1}}^{(l)}\}_{m_l}$
        \State Compute $N_r = \left[ \sum\limits_{e_r} E_{e_r}^* \ket{i}\bra{j} E_{e_r}^\top \otimes \ket{h}\bra{k},\ \forall\ i, j, h, k \right]$ for $r=0,1,\dots,l-1$
        \State Compute $N_l = \left[ \sum\limits_{e_l} E_{e_l}^* \ket{h}\bra{k} E_{e_l}^\top \otimes \rho \ket{s}\bra{t} \rho^\dag,\ \forall\ h, k, s, t \right]$
        \Repeat
            \State Compute $\Cmat = \left[\sum\limits_{i_{0:l}j_{0:l}stm_{1:l-1}} \bra{tj_0} \Cmat^{(0)}_{m_0} \ket{si_0}\ \prod\limits_{r\neq l}\tr \left[ \Cmat_{m_{r+1}|m_r}^{(r+1)} \left(N_r\right)_{i_rj_ri_{r+1}j_{r+1}} \right]\left(N_l\right)_{i_lj_lst},\ \forall\ m_l \right]$
            \State\label{eq:opt_Dmat_m_l} Solve for optimal $\hat{\Dmat}_{m_l}: \max \left\{ \sum\limits_{m_l} \tr \left[ \Dmat_{m_l}  \Cmat_{m_l}\right]\ :\ \Dmat_{m_l} \geq 0,\ \tr_{L'}(D_{m_l}) = I_{Q_1'} \right\}$
            \For{$k=l-1,l-2,\dots,0$}
                \State Compute $\Cmat_k\Dmat = \left[\sum\limits_{i_{0:l}j_{0:l}stm_{1:k,k+2:l}} \bra{tj_0} \Cmat^{(0)}_{m_0} \ket{si_0}\ \prod\limits_{r\neq k,l}\tr \left[ \Cmat_{m_{r+1}|m_r}^{(r+1)} \left(N_r\right)_{i_rj_ri_{r+1}j_{r+1}} \right]\tr \left[ \hat{\Dmat}_{m_l} \left(N_l\right)_{i_lj_lst} \right]\left(N_k\right)_{i_kj_ki_lj_l} \right],$ for all $m_{k+1}$.
                \State\label{eq:opt_Cmat_m_k} Solve for optimal $\hat{\Cmat}_{m_{k+1}|m_k}^{(k+1)}: \max \left\{ \sum\limits_{m_{k+1}}\ \tr \left[ \Cmat_{m_{k+1}|m_k}^{(k+1)} (\Cmat_k\Dmat)_m \right]\ :\ \Cmat_{m_{k+1}|m_k}^{(k+1)} \geq 0,\ \tr_{Q_{k+1}} \left( \sum\limits_{m_{k+1}} \Cmat_{m_{k+1}|m_k}^{(k+1)} \right) = I_{Q_k'} \right\}$
            \EndFor
            \State Compute $\Cmat_l\Dmat = \sum\limits_{i_{0:l}j_{0:l}stm_{1:l}} \prod\limits_{r\neq l}\tr \left[ \hat{\Cmat}_{m_{r+1}|m_r}^{(r+1)} \left(N_r\right)_{i_rj_ri_{r+1}j_{r+1}} \right]\tr \left[ \hat{\Dmat}_{m_l} \left(N_l\right)_{i_lj_lst} \right]|si_0\rangle\langle tj_0|$
            \State\label{eq:opt_Cmat_m_0l} Solve for optimal $\hat{\Cmat}^{(0)}_{m_0}: \max \left\{ \tr\left[ \Cmat^{(0)}_{m_0}\Cmat_l\Dmat\right]\ :\ \Cmat^{(0)}_{m_0} \geq 0,\ \tr_{Q_0}(\Cmat^{(0)}_{m_0}) = I_L \right\}$
        \Until{no progress is being made}
        \State \textbf{Return} $\{\hat{\Dmat}_{m_l}\}_{m_l},\ \{\hat{\Cmat}_{m_1|m_0}^{(1)}\}_{m_1},\dots,\ \{\hat{\Cmat}_{m_l|m_{l-1}}^{(l)}\}_{m_l},\ \hat{\Cmat}^{(0)}_{m_0}$
    \end{algorithmic}
\end{algorithm*}

\begin{lemma}\label{lem:opt_Dmat_m_l}
    In line~\ref{eq:opt_Dmat_m_l} of Algorithm~\ref{algo:see-saw_gen}, $\Dmat_{m_l}^* = \sum_{\alpha} |d_{m_l \alpha}^* \rr\ll d_{m_l \alpha}^*|$ are Choi matrices of the optimal decoder if and only if
    \begin{equation*}
        (I_{L'} \otimes Y_{D_{m_l}}^* - \Cmat_{m_l}) \Dmat^*_{m_l} \geq 0\ \forall\ m_l
    \end{equation*}
    where
    {\allowdisplaybreaks
    \begin{equation*}
        \begin{aligned}
            \Cmat_{m_l} =& \sum_{i_{0:l} j_{0:l} s t m_{1:l-1}} \bra{tj_0} \Cmat_{m_0}^{(0)} \ket{si_0} \prod_{r \neq l} \tr[\Cmat_{m_{r+1} | m_r}^{(r+1)}\\
            &\;\times(N_r)_{i_r j_r i_{r+1} j_{r+1}}] (N_l)_{i_l j_l s t}, \\
            \left(N_l\right)_{i_l j_l s t} =& \sum_{\beta} | (n_{l \beta})_{i_l j_l s t} \rr\ll n_{l \beta})_{i_l j_l s t} |, \\
            \bar{Y}_{D_{m_l}}^* \coloneqq& \sum\limits_{\alpha \beta i_{0:l} j_{0:l} s t m_{1:l-1}} \bra{tj_0} \Cmat^{(0)}_{m_0} \ket{si_0}\ \prod_{r \neq l} \tr \big[ \Cmat_{m_{r+1}|m_r}^{(r+1)} \\
            &\;\times\left(N_r\right)_{i_rj_ri_{r+1}j_{r+1}} \big] (d_{m_l \alpha}^*)^\dag (n_{l \beta})_{i_lj_lst} \\
            &\;\times\tr[(n_{l \beta})_{i_lj_lst}^\dag d_{m_l \alpha_l}^*] ,\ \forall\ m_l.
        \end{aligned}
    \end{equation*}}
    The optimal dual solution $Y_{D_{m_l}}^*$ is non-degenerate, and the optimal primal solution $\Dmat_{m_l}^*$ is robust.
\end{lemma}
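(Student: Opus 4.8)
The plan is to recognize that the SDP in line~\ref{eq:opt_Dmat_m_l} of Algorithm~\ref{algo:see-saw_gen} has exactly the same structure as the single-round SDP in line~\ref{eq:opt_Dmat_m} of Algorithm~\ref{algo:see-saw}, with the effective cost operator $\Cmat_{m_l}$ playing the role that $(\Cmat_0\Cmat_1)_m$ played there. Once the product over the intermediate rounds $\prod_{r\neq l}\tr[\Cmat^{(r+1)}_{m_{r+1}|m_r}(N_r)_{i_rj_ri_{r+1}j_{r+1}}]$ and the encoder matrix elements $\bra{tj_0}\Cmat^{(0)}_{m_0}\ket{si_0}$ are absorbed into $\Cmat_{m_l}$, the optimization over $\{\Dmat_{m_l}\}_{m_l}$ depends on all the other fixed Choi operators only through this single Hermitian operator, so the arguments of Lemma~\ref{lem:strong_dual}, Lemma~\ref{lem:Y_D_m}, Lemma~\ref{lem:Y_D_m_nondeg}, and Lemma~\ref{lem:D_m_robust} transfer essentially verbatim. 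First I would establish strong duality: the primal point $\Dmat_{m_l}=I_{L'}\otimes I_{Q_1'}$ is strictly feasible (positive definite and satisfying $\tr_{L'}(\Dmat_{m_l})=I_{Q_1'}$), and a sufficiently large multiple of the identity is dual feasible, so Slater's condition holds as in Lemma~\ref{lem:strong_dual} and the singularity degree is zero.

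Next I would derive the explicit form of the optimal dual slack by complementary slackness, which gives $(I_{L'}\otimes Y_{D_{m_l}}^* - \Cmat_{m_l})\Dmat_{m_l}^*=0$. Writing $\Dmat_{m_l}^*=\sum_\alpha |d_{m_l\alpha}^*\rr\ll d_{m_l\alpha}^*|$ and expanding $(N_l)_{i_lj_lst}=\sum_\beta |(n_{l\beta})_{i_lj_lst}\rr\ll (n_{l\beta})_{i_lj_lst}|$, I would apply $(I_{L'}\otimes Y_{D_{m_l}}^*)|d_{m_l\alpha}^*\rr = \Cmat_{m_l}|d_{m_l\alpha}^*\rr$, left-multiply by $(d_{m_l\alpha}^*)^\dag$, and sum over $\alpha$, collapsing the double-ket contractions into the trace factors $\tr[(n_{l\beta})_{i_lj_lst}^\dag d_{m_l\alpha}^*]$. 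This reproduces the claimed expression for $\bar{Y}_{D_{m_l}}^*$ line-for-line with the derivation in Lemma~\ref{lem:Y_D_m}, the only change being the extra product $\prod_{r\neq l}\tr[\Cmat^{(r+1)}_{m_{r+1}|m_r}(N_r)_{i_rj_ri_{r+1}j_{r+1}}]$ riding along as a scalar coefficient in each term.

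Finally, uniqueness and robustness follow as before. For dual nondegeneracy I would verify, per Definition~\ref{def:dual_nondeg}, that $MY_{D_{m_l}}^*=0$ together with $\Xi(M)=0$ forces $M=0$: since $Y_{D_{m_l}}^*$ is constructed to be full rank on the operator subspace spanned by $\Cmat_{m_l}$, it has trivial kernel, and the orthogonality condition rules out any residual component, exactly as in Lemma~\ref{lem:Y_D_m_nondeg}. Invoking Ref.~\cite{alizadeh1997complementarity} then yields a unique primal optimum $\Dmat_{m_l}^*$, and combining the compactness of the set of channels with the zero singularity degree established above, Fact~\ref{thm:robust} delivers the robustness bound $\|\tilde{\Dmat}_{m_l}-\Dmat_{m_l}^*\|_F\leq\mathcal{O}(\epsilon)$. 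The main obstacle I anticipate is the dual-nondegeneracy step: one must ensure that $\Cmat_{m_l}$, despite being built from a product of positive trace factors and the encoder, still spans enough of the operator space that $Y_{D_{m_l}}^*$ is genuinely full rank. If some intermediate check round contributes a rank-deficient factor the effective cost operator could be singular, in which case the full-rank claim needs a more careful justification, e.g.\ restricting attention to the support of $\Cmat_{m_l}$ and arguing uniqueness on that support.
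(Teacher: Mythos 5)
Your proposal matches the paper's approach: the paper gives no separate proof for this lemma, stating only that the multi-round results follow by ``extending the discussion from the previous section,'' and your plan reproduces exactly the single-round argument (strong duality via Slater, complementary slackness to derive $\bar{Y}_{D_{m_l}}^*$, dual nondegeneracy per Definition~\ref{def:dual_nondeg}, then Ref.~\cite{alizadeh1997complementarity} for uniqueness and Fact~\ref{thm:robust} for robustness) with $\Cmat_{m_l}$ substituted for $(\Cmat_0\Cmat_1)_m$. Your closing caveat about the full-rank claim on $Y_{D_{m_l}}^*$ is a fair concern, but it applies equally to the paper's own Lemma~\ref{lem:Y_D_m_nondeg}, so it does not represent a divergence from the paper's argument.
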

\begin{lemma}\label{lem:opt_Cmat_m_k}
    In line~\ref{eq:opt_Cmat_m_k} of Algorithm~\ref{algo:see-saw_gen}, $\Cmat_{m_{k+1}|m_k}^{(k+1)*} = \sum_{\gamma_{k+1}} |c_{m_{k+1} \gamma_{k+1}}^{(k+1)*} \rr\ll c_{m_{k+1} \gamma_{k+1}}^{(k+1)*}|$ are Choi matrices of the optimal check operation if and only if
    \begin{equation*}
        \left( I_{Q_{k+1}} \otimes Y^*_{C_{m_{k+1}|m_k}^{(k+1)*}} - (\Cmat_k \Dmat)_{m_{k+1}} \right) \Cmat^{(k+1)*}_{m_{k+1} | m_k} \geq 0\ \forall\ m_{k+1}
    \end{equation*}
    where
    \begin{equation*}
        \begin{aligned}
            (\Cmat_k \Dmat)_{m_{k+1}} =& \sum_{i_{0:l} j_{0:l} s t m_{1:k, k+2:l}} \bra{tj_0} \Cmat_{m_0}^{(0)} \ket{si_0} \\
            &\;\times\prod_{r \neq k, l} \tr [\Cmat_{m_{r+1} | m_r}^{(r+1)} (N_r)_{i_r j_r i_{r+1} j_{r+1}}]\ \\
            &\;\times\tr [D_{m_l} (N_l)_{i_l j_l s t}] (N_k)_{i_k j_k i_l k_l}, \\
            (N_k)_{i_k j_k i_l j_l} =& \sum_{\delta_l} | (n_{k \delta_l})_{i_k j_k i_l j_l} \rr\ll n_{k \delta_l})_{i_k j_k i_l j_l} |, \\
            \bar{Y}_{C_{m_{k+1} | m_k}}^{(k+1)*} \coloneqq &\sum_{\gamma_{k+1} \delta_l i_{0:l} j_{0:l} s t m_{1:k, k+2:l}} \bra{tj_0} \Cmat_{m_0}^{(0)} \ket{si_0} \\
            &\;\times \prod_{r \neq k, l} \tr [\Cmat_{m_{r+1} | m_r}^{(r+1)} (N_r)_{i_r j_r i_{r+1} j_{r+1}}] \\
            &\;\times\tr [\Dmat_{m_l} (N_l)_{i_l j_l s t}] (c_{m_{k+1} \gamma_{k+1}}^{(k+1) *})^\dag \\
            &\;\times(n_{k \delta_l})_{i_k j_k i_l j_l} \tr \left[ (n_{k \delta_l})^\dag_{i_k j_k i_l j_l} c_{m_{k+1} \gamma_{k+1}}^* \right].
        \end{aligned}
    \end{equation*}
    The optimal dual solution $Y_{C_{m_{k+1} | m_k}}^{(k+1)*}$ is non-degenerate, and the optimal primal solution $\Cmat_{m_{k+1}|m_k}^{(k+1)*}$ is robust.
\end{lemma}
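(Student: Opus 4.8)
The plan is to follow the same three-part template established in the single-round analysis (Lemmas~\ref{lem:Y_D_m}, \ref{lem:Y_D_m_nondeg}, and~\ref{lem:D_m_robust}) and its two-round refinement (Lemma~\ref{lem:opt_Cmat_m_1}), adapting only the index bookkeeping to the presence of $l$ rounds of check operations. Concretely, I would (i) characterize optimality of $\Cmat_{m_{k+1}|m_k}^{(k+1)*}$ through complementary slackness and extract the closed form of the dual optimum $\bar{Y}_{C_{m_{k+1}|m_k}}^{(k+1)*}$, (ii) show that this dual optimum is nondegenerate in the sense of Definition~\ref{def:dual_nondeg}, which forces uniqueness of the primal optimum via the result of Ref.~\cite{alizadeh1997complementarity} quoted in the text, and (iii) invoke compactness together with Fact~\ref{thm:robust} to obtain robustness.

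For step (i), I would freeze every operator except the round-$(k+1)$ check and collect the environment into the effective cost operator $(\Cmat_k\Dmat)_{m_{k+1}}$ appearing in line~\ref{eq:opt_Cmat_m_k}; the crucial simplification is that every round $r\neq k,l$ has already been traced against fixed Choi operators inside the product $\prod_{r\neq k,l}\tr[\cdots]$, so the only live tensor factor is $(N_k)_{i_k j_k i_l j_l}$. Writing the Lagrangian dual of the primal SDP in line~\ref{eq:opt_Cmat_m_k}, with dual variable enforcing the normalization $\tr_{Q_{k+1}}(\sum_{m_{k+1}}\Cmat_{m_{k+1}|m_k}^{(k+1)}) = I_{Q_k'}$, complementary slackness yields $(I_{Q_{k+1}}\otimes Y^* - (\Cmat_k\Dmat)_{m_{k+1}})\,\Cmat_{m_{k+1}|m_k}^{(k+1)*} = 0$. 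I would then expand $\Cmat_{m_{k+1}|m_k}^{(k+1)*} = \sum_{\gamma_{k+1}} |c_{m_{k+1}\gamma_{k+1}}^{(k+1)*}\rr\ll c_{m_{k+1}\gamma_{k+1}}^{(k+1)*}|$ and the spectral decomposition $(N_k)_{i_k j_k i_l j_l} = \sum_{\delta_l}|(n_{k\delta_l})_{i_k j_k i_l j_l}\rr\ll(n_{k\delta_l})_{i_k j_k i_l j_l}|$, apply the vectorization identity $\ll n|c\rr = \tr[n^\dag c]$, de-vectorize, and finally left-multiply by $(c_{m_{k+1}\gamma_{k+1}}^{(k+1)*})^\dag$ and sum over $\gamma_{k+1}$. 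This reproduces exactly the claimed form of $\bar{Y}_{C_{m_{k+1}|m_k}}^{(k+1)*}$, in direct analogy with the passage from complementary slackness to Eq.~\eqref{eq:barY_D_m*} in the single-round proof.

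For step (ii), I would argue as in Lemma~\ref{lem:Y_D_m_nondeg}: it suffices to show that the homogeneous system $MY^* = 0$ together with the image-of-constraint condition $\tr[M(I_{Q_{k+1}}\otimes Y^*)] = 0$ admits only $M = 0$. Since $\bar{Y}_{C_{m_{k+1}|m_k}}^{(k+1)*}$ is constructed to be full rank on the operator space spanned by the contributions of $(\Cmat_k\Dmat)_{m_{k+1}}$, its null space is trivial and the first equation already forces $M = 0$, after which the orthogonality condition is automatic. For step (iii), robustness then follows verbatim from the argument of Lemma~\ref{lem:D_m_robust}: the feasible set of Choi operators of channels is compact~\cite{watrous2018theory}, the analogue of Lemma~\ref{lem:strong_dual} establishes strict feasibility and hence singularity degree zero, and Fact~\ref{thm:robust} then gives $\|\tilde{\Cmat}_{m_{k+1}|m_k}^{(k+1)} - \Cmat_{m_{k+1}|m_k}^{(k+1)*}\|_F \leq \mathcal{O}(\epsilon)$.

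I expect the main obstacle to be organizational rather than conceptual: correctly assembling $(\Cmat_k\Dmat)_{m_{k+1}}$ by tracking the product $\prod_{r\neq k,l}\tr[\cdots]$ over all frozen rounds and the split multi-index dependence on $m_{1:k}$ and $m_{k+2:l}$, while isolating the single live factor $(N_k)_{i_k j_k i_l j_l}$, and then verifying that the full-rank property underpinning nondegeneracy genuinely survives the partial traces over the fixed rounds. Once the effective cost operator is correctly assembled, the remaining algebra is structurally identical to the single-round case treated in Lemmas~\ref{lem:opt_Cmat_m_1} and~\ref{lem:opt_Dmat_m_l}.
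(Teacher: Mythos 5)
Your proposal follows exactly the route the paper intends: the paper gives no separate proof of Lemma~\ref{lem:opt_Cmat_m_k}, instead asserting that the single-check arguments (complementary slackness as in Lemma~\ref{lem:Y_D_m}, dual nondegeneracy as in Lemma~\ref{lem:Y_D_m_nondeg}, and robustness via strict feasibility, compactness, and Fact~\ref{thm:robust} as in Lemma~\ref{lem:D_m_robust}) extend to the multi-round setting, which is precisely the three-step adaptation you describe. Your closing caveat --- that the full-rank claim underlying nondegeneracy should be checked to survive the partial traces over the frozen rounds --- is a fair flag, since the paper's own nondegeneracy argument is asserted rather than verified in detail, but this is a feature of the paper's proof as well, not a gap you introduce.
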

\begin{lemma}\label{lem:opt_Cmat_m_0_gen}
    In line~\ref{eq:opt_Cmat_m_0l} of Algorithm~\ref{algo:see-saw_gen}, $\Cmat_{m_0}^{(0)*} = \sum_{\gamma_0} |c_{\gamma_0}^{(0)*} \rr\ll c_{\gamma_0}^{(0)*}|$ are Choi matrices of the optimal encoder if and only if
    \begin{equation*}
        (I_{Q_0} \otimes Y_{C_{m_0}}^* - \Cmat_l \Dmat) \Cmat_{m_0}^{(0)*} \geq 0
    \end{equation*}
    where
    \begin{equation*}
        \begin{aligned}
            \Cmat_l \Dmat =& \sum_{i_{0:l} j_{0:l} s t m_{1:l}} \prod_{r \neq l} \tr[\Cmat^{(r+1)}_{m_{r+1} | m_r} (N_r)_{i_r j_r i_{r+1} j_{r+1}}] \\
            &\;\times\tr[\Dmat_{m_l} (N_l)_{i_l j_l s t}] \ket{s i_0}\bra{t j_0}, \\
            \ket{si_0}\bra{tj_0} =& \sum_f | f \rr\ll f |, \\
            \bar{Y}_{C_{m_0}}^* \coloneqq &\sum_{\gamma_0 f i_{0:l} j_{0:l} s t m_{1:l}} \prod_{r \neq l} \tr \left[ \Cmat^{(r+1)}_{m_{r+1} | m_r} (N_r)_{i_r j_r i_{r+1} j_{r+1}} \right] \\
            &\;\times\tr \left[ \Dmat_{m_l} (N_l)_{i_L j_l s t} \right] (c_{\gamma_0}^{(0)*})^\dag f \tr \left[ f^\dag c_{\gamma_0}^{(0)*} \right].
        \end{aligned}
    \end{equation*}
    The optimal dual solution $Y_{C_{m_0}^*}$ is non-degenerate, and the optimal primal solution $\Cmat_{m_0}^{(0)*}$ is robust.
\end{lemma}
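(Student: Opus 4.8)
The plan is to follow exactly the four-step template already used for the single-round encoder in Lemma~\ref{lem:opt_Cmat_m_0}, now carried out for the general-round primal SDP of line~\ref{eq:opt_Cmat_m_0l} of Algorithm~\ref{algo:see-saw_gen}, whose dual reads
\begin{equation*}
    \min\left\{ \tr(Y_{C_{m_0}})\ :\ I_{Q_0}\otimes Y_{C_{m_0}} - \Cmat_l\Dmat \geq 0,\ Y_{C_{m_0}}\in\mathrm{Herm} \right\}.
\end{equation*}
First I would establish strong duality as in Lemma~\ref{lem:strong_dual}: the suitably normalized identity $\Cmat^{(0)}_{m_0}=\tfrac{1}{\dim Q_0}\,I_{Q_0}\otimes I_L$ is positive definite and meets $\tr_{Q_0}(\Cmat^{(0)}_{m_0})=I_L$, so the primal is strictly feasible, while a large multiple $Y_{C_{m_0}}=\lambda I_{Q_0}$ makes $I_{Q_0}\otimes Y_{C_{m_0}}-\Cmat_l\Dmat$ positive definite, giving strict dual feasibility; Slater's condition then yields zero duality gap with attainment at both ends.

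Next I would invoke complementary slackness, $(I_{Q_0}\otimes Y^*_{C_{m_0}}-\Cmat_l\Dmat)\,\Cmat^{(0)*}_{m_0}=0$, and expand both operators in their vectorized spectral forms $\Cmat^{(0)*}_{m_0}=\sum_{\gamma_0}|c^{(0)*}_{\gamma_0}\rr\ll c^{(0)*}_{\gamma_0}|$ and $\ket{si_0}\bra{tj_0}=\sum_f |f\rr\ll f|$, together with the trace-factorized representation of $\Cmat_l\Dmat$ already written out in the statement. Acting on each column $|c^{(0)*}_{\gamma_0}\rr$, collapsing the inner products through $\ll f|c^{(0)*}_{\gamma_0}\rr=\tr[f^\dag c^{(0)*}_{\gamma_0}]$, then left-multiplying by $(c^{(0)*}_{\gamma_0})^\dag$ and summing over $\gamma_0$ reproduces precisely the closed form of $\bar{Y}^*_{C_{m_0}}$ claimed in the lemma. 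This is the longest computation but entirely mechanical and structurally identical to eqn.~\eqref{eq:barY_D_m*}; the only new feature is the scalar weight $\prod_{r\neq l}\tr[\,\cdot\,]$ over the intermediate rounds, which is carried along untouched throughout.

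For uniqueness I would verify Definition~\ref{def:dual_nondeg}, showing that $MY^*_{C_{m_0}}=0$ with $\Xi(M)=0$ forces $M=0$, by the reasoning of Lemma~\ref{lem:Y_D_m_nondeg}: $Y^*_{C_{m_0}}$ is full rank on the operator space spanned by $\Cmat_l\Dmat$, so its kernel is trivial and the first equation already gives $M=0$. Finally, robustness follows from Fact~\ref{thm:robust}: the feasible encoders form a compact subset of the positive semidefinite cone (the channel set is compact~\cite{watrous2018theory}), and the strict feasibility from the first step pins the singularity degree to zero, so the distance bound collapses to $\|\tilde{\Cmat}^{(0)}_{m_0}-\Cmat^{(0)*}_{m_0}\|_F\leq\mathcal{O}(\epsilon)$.

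I expect the genuine obstacle to be the dual non-degeneracy step rather than the slackness algebra. The assertion that $Y^*_{C_{m_0}}$ is full rank on the span of $\Cmat_l\Dmat$ must be argued carefully, since in the multi-round setting $\Cmat_l\Dmat$ is a sum over $m_{1:l}$ of products of round-wise trace weights, and one must check that degeneracies of the optimizer---an encoder whose support misses part of $Q_0$, or weights that annihilate a block---do not conspire to produce a nontrivial $M$ solving both homogeneous equations. Handling this cleanly would likely require the same genericity assumption on the noise operators $\{E_{e_r}\}$ implicit in the earlier lemmas, ensuring that $\Cmat_l\Dmat$, and hence $Y^*_{C_{m_0}}$, retains full rank on the relevant block.
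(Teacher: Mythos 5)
Your proposal is correct and follows essentially the same route the paper takes: the paper gives no explicit proof of Lemma~\ref{lem:opt_Cmat_m_0_gen}, stating only that it extends the single-round argument of Appendix~\ref{app:primal_unique}, and your four-step template (Slater/strong duality as in Lemma~\ref{lem:strong_dual}, the complementary-slackness expansion mirroring eqn.~\eqref{eq:barY_D_m*}, dual nondegeneracy as in Lemma~\ref{lem:Y_D_m_nondeg}, and robustness via Fact~\ref{thm:robust}) is precisely that extension. Your closing caveat about the full-rank assertion in the nondegeneracy step is well placed, but it is a weakness inherited from the paper's own Lemma~\ref{lem:Y_D_m_nondeg} rather than a gap specific to your argument.
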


\section{\label{app:AD_noise}Amplitude Damping Noise Over Two Rounds}

A $k$-qubit local amplitude damping noise over $n$ physical qubits for $l=2$ rounds can be written as
\begin{equation}
\label{eq:AD_gen}
    \mathbf{E} = \sum_{K\in\mathcal Q_k}\sum_{i,j} \bigotimes_{k'\in K}\frac{1}{|\mathcal Q_k|}
    |E_{i,k'}^{(0)}\rr\ll E_{i,k'}^{(0)}|_{Q_0Q_0'} \otimes |E_{j,k'}^{(1)}\rr\ll E_{j,k'}^{(1)}|_{Q_1Q_1'}.
\end{equation}
where $\mathcal Q_k$ contains all possible combinations of $k$ qubits from $n$ physical qubits.
The Kraus operators are given by
\begin{equation}
\label{E:cor_noise}
    E_{0,*}^{(l)} = 
    \left[\begin{array}{cc}
        1 & 0 \\
        0 & \sqrt{1-\gamma_l}
    \end{array} \right],\;
    E_{1, *}^{(l)} = \left[\begin{array}{cc}
        0 & \sqrt{\gamma_l}\\
        0 & 0
    \end{array} \right],\,l\in\{0, 1\}
\end{equation}
with $\gamma=\gamma_0+\gamma_1-\gamma_0\gamma_1$ as the damping strength of the complete channel~\cite{utagi2020temporal}. For numerics, we have taken $\gamma_0=\gamma/2$ and $\gamma_1=\frac{\gamma-\gamma_0}{1-\gamma_0}=\frac{\gamma}{2-\gamma}$.

Also, the amplitude damping noise of weight $k$~\cite{cafaro2014approximate,jayashankar2022achieving,dutta2024smallest} over $n$ physical qubits for $l=2$ can be written as
\begin{equation}
\label{eq:new_noise}
\mathbf{E}=\sum_{i,j\leq k}|A^{(0)}_i\rr\ll A^{(0)}_i|\otimes|A^{(1)}_i\rr\ll A^{(1)}_j|,
\end{equation}
where the Kraus operators are given by
\begin{equation}
A^{(l)}_i=\bigotimes_{s=1}^nE^{(l)}_{i_s,*},
\end{equation}
such that $\sum\limits_{s=1}^ni_s=i$. Note that eq.~\eqref{eq:AD_gen} and~\eqref{eq:new_noise} give the same noise model when $k=n$.

\section{\label{app:low_weight}Strategic QECC Optimization with low-weight noise}

A low-weight noise is described by eq.~\eqref{eq:new_noise} when $k$ is small. Figure~\ref{fig:fidelity_plot_new} shows the plots of the entanglement fidelity $F_{ent}$ as a function of the damping strength $\gamma$ using Algorithm~\ref{algo:see-saw_k} for $k=1$ and $2$. This shows that the four-qubit code can improve fidelity effectively, surpassing the $\llbracket4, 1\rrbracket$ code from Ref.\cite{leung1997approximate}. Although the two- and three-qubit codes do not guarantee an improvement over the no-encoding case, they outperform the $\llbracket3, 1\rrbracket$ code from Ref.\cite{dutta2024smallest} with the instances where the encoders are combined with Petz recovery map decoders.

\begin{figure*}
\centering
\includegraphics[width=0.475\textwidth]{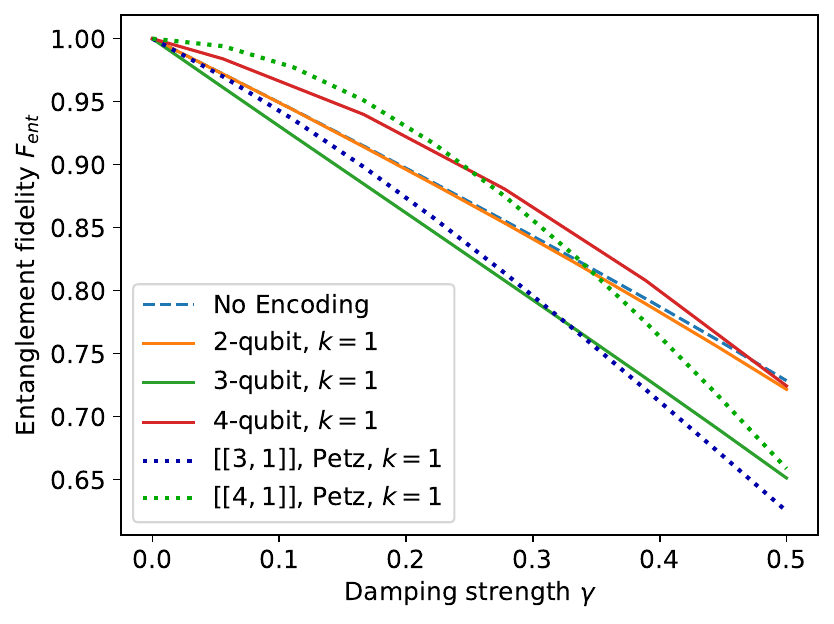}\hfill\includegraphics[width=0.475\textwidth]{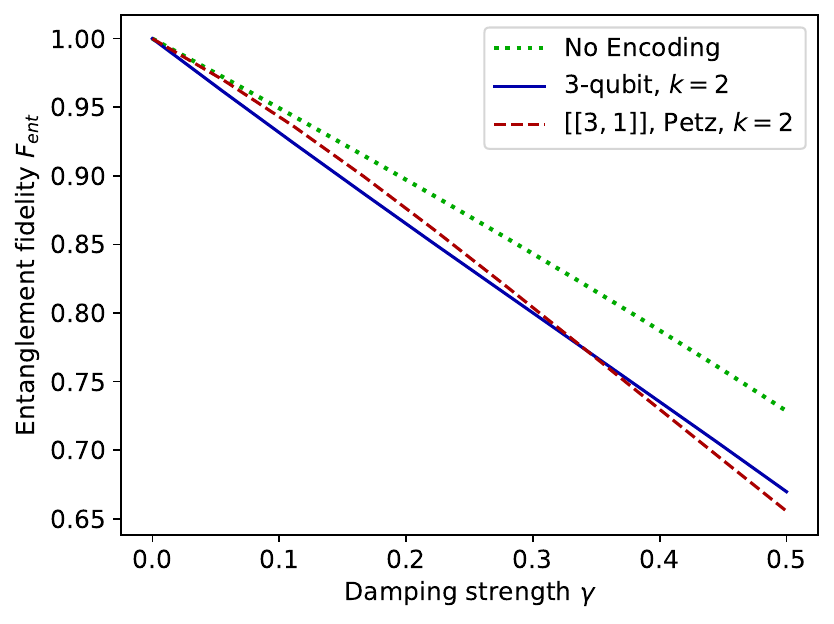}
\caption{\label{fig:fidelity_plot_new}
The entanglement fidelity $F_{ent}$ versus the damping strength $\gamma$ is plotted using Algorithm~\ref{algo:see-saw_k}
for the noise model given by eq.~\eqref{eq:new_noise} with $k=1$ (left) and $k=2$ (right). A single check operator, i.e., $m=1$ has been considered in all cases.
Observe that our four-qubit code can improve the fidelity effectively, surpassing the $\llbracket4, 1\rrbracket$ code from Ref.~\cite{leung1997approximate}. Although the two- and three-qubit codes can't ensure any improvement over no encoding case, they outperform the $\llbracket3, 1\rrbracket$ code from Ref.~\cite{dutta2024smallest}, along with the cases where these encoders are paired with the Petz recovery map decoders, in the presence of all-qubit errors.
}
\end{figure*}

\section{\label{app:all_error}Strategic QECC Optimization with All-Error Qubit}

The entanglement fidelity $F_{ent}$ as a function of the damping strength $\gamma$ is plotted in Fig.~\ref{fig:fidelity_plot_all} using Algorithm~\ref{algo:see-saw_k} for $2$- and $3$-qubit codes with a single check operator ($m=1$) in the all-qubit noise model. Although these codes do not show any improvement over the unencoded case, they outperform the $\llbracket5,1\rrbracket$ and $\llbracket3,1\rrbracket$ codes from Ref.\cite{dutta2024smallest}, as well as the cases where these codes are combined with Petz recovery map decoders, in the presence of all-qubit errors. However, for small damping strength, the $\llbracket4,1\rrbracket$ code combined with Petz recovery map decoders from Ref.\cite{leung1997approximate} performs better than our codes.

\begin{figure*}[htpb]
\centering
\includegraphics[width=.9\textwidth]{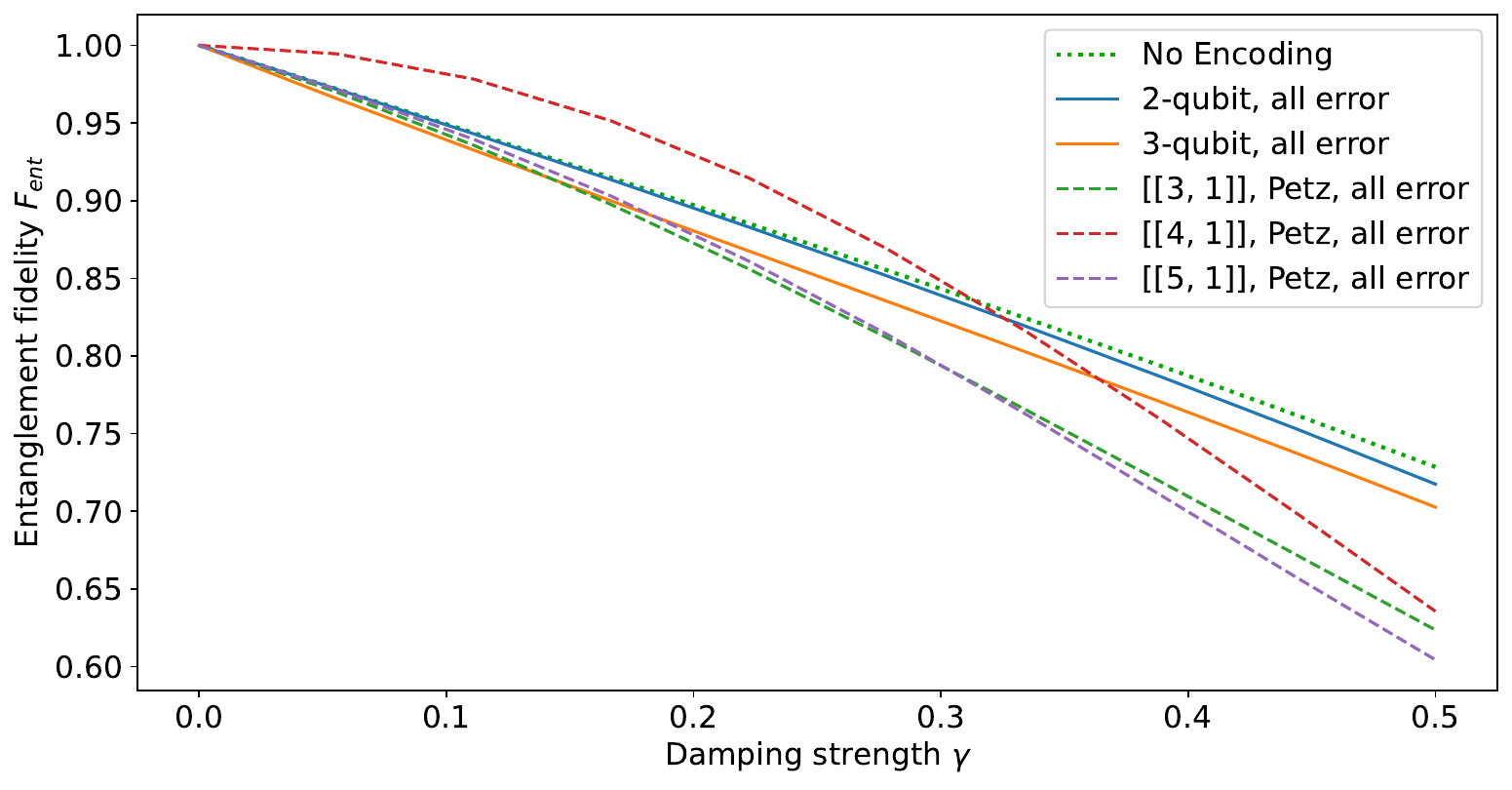}
\caption{\label{fig:fidelity_plot_all}
The entanglement fidelity $F_{ent}$ versus the damping strength $\gamma$ is plotted using Algorithm~\ref{algo:see-saw_k}
for $2$- and $3$--qubit codes with a single check operator, i.e., $m=1$, in the all-qubit noise scenario.
Observe that although these codes can't ensure any improvement over no encoding case, they outperform the $\llbracket5, 1\rrbracket$ and $\llbracket3, 1\rrbracket$ codes from Ref.~\cite{dutta2024smallest}, along with the cases where these encoders are paired with the Petz recovery map decoders, in the presence of all-qubit errors. However, for small damping strength, the $\llbracket4, 1\rrbracket$ code combined with Petz recovery map decoders from Ref.~\cite{leung1997approximate} performs better than our results.
}
\end{figure*}

\section{Proof of Theorem~\ref{thm:petz_rec}}
\label{app:proof_petz_recovery_optimal_for_correctable_noise}

Here we give a proof of Theorem~\ref{thm:petz_rec},
which we restate as follows, for the reader's convenience:
If strategic code $\mathbf{C}$ corrects noise $\mathbf{E}$, then the dynamical Petz recovery map $\mathcal{R}_{\mathbf{C},\mathbf{E}}^\mathrm{Petz}$ is equal to the perfect recovery map $\mathcal{R}_{\mathbf{C},\mathbf{E}}^*$.
This perfect recovery map is defined by a set of recovery channels $\{\mathcal{R}_m^*\}_m$ that perfectly recovers the logical information which can be constructed whenever the condition in Fact~\ref{thm:algebraic_KL_condition}
is satisfied.
Each recovery map $\mathcal{R}_m^*$ has Kraus operators of the form
\begin{equation}\label{eqn:kraus_perfect_dynamic_recovery}
\begin{aligned}
    R_{e|m} = \frac{1}{\sqrt{d_{e',e,m}}} \sum_{o \in O_m} \ll F_{e}| (|C_{m,o}\rr |\Pi_{Q_0}\rr) \;.
\end{aligned}
\end{equation}
Here $F_e = \sum_{e'} u_{e',e} E_{e'}$ (hence, $|F_e\rr = \sum_{e'} u_{e',e} |E_{e'}\rr$) where $[u_{e',e}]_{e',e}$ is a unitary matrix such that $d_{e',e,m} = \sum_{\Tilde{e},\Bar{e}} u_{e',\Tilde{e}}^* \lambda_{\Tilde{e},\Bar{e},m} u_{\Bar{e},e}$ with $d_{e,e',m}=0$ if $e\neq e'$.
The idea of the proof is to show that the Kraus operators of the Petz recovery coincide with the Kraus operators of the perfect recovery, for each $m$.

\begin{proof}
    First, we rewrite the necessary and sufficient condition in Fact~\ref{thm:algebraic_KL_condition}
    as 
    \begin{equation}
        (\ll C_m|\otimes\Pi_{Q_0}) E_{e'}^\dag E_e (|C_m\rr\otimes\Pi_{Q_0}) = c_{e',e,m} \Pi_{Q_0}
    \end{equation}
    Equivalently, we can also rewrite it as
    \begin{equation}\label{eqn:diag_dynamical_condition}
        (\ll C_m|\otimes\Pi_{Q_0}) F_{e'}^\dag F_e (|C_m\rr\otimes\Pi_{Q_0}) = \delta_{e',e} d_{e,m} \Pi_{Q_0} \;,
    \end{equation}
    by diagonalizing $[c_{e',e,m}]_{e',e} \mapsto \mathrm{diag}\{d_{e,m}\}_e$ such that $d_{e,m} = \sum_{e',e''} u_{e,e'}^* c_{e',e'',m} u_{e'',e}$ and $F_e = \sum_{e'} u_{e',e} E_{e'}$ for some unitary matrix $[u_{e',e}]_{e',e}$.
    Now consider polar decomposition $F_e (|C_m\rr \otimes \Pi_{Q_0}) = U_{e,m} \Pi_{Q_0} \sqrt{d_{e,m}}$ for unitary $U_{e,m}$ and a real number $d_{e,m}$.
    Thus we can rewrite eqn.~\eqref{eqn:diag_dynamical_condition} as
    \begin{equation}\label{eqn:polar_decomp_dynamical_condition}
    \begin{gathered}
        \sqrt{d_{e',m}d_{e,m}} \Pi_{Q_0} U_{e',m}^\dag U_{e,m} \Pi_{Q_0} = \delta_{e',e} d_{e,m} \Pi_{Q_0} \\
        \Rightarrow \Pi_{Q_0} U_{e',m}^\dag U_{e,m} \Pi_{Q_0} = \delta_{e',e} \Pi_{Q_0} \;.
    \end{gathered}
    \end{equation}
    For a channel $\mathcal{R}_m$ with Kraus operators $\{\Pi_{Q_0}U_{e,m}^\dag\}_e$, we have
    {\allowdisplaybreaks
    \begin{align}
        &\mathcal{R}_m \circ \mathbf{E}\big(|C_m\rr\ll C_m| \otimes \Pi_{Q_0}\big)\notag \\
        &= \sum_{e',e} \Pi_{Q_0}U_{e',m}^\dag F_e |C_m\rr\ll C_m| \otimes \Pi_{Q_0} F_e^\dag U_{e',m} \Pi_{Q_0}\notag \\
        &= \sum_{e',e} d_{e,m} (\ll C_m|\otimes\Pi_{Q_0}) F_{e'}^\dag F_e (|C_m\rr\ll C_m| \otimes \Pi_{Q_0}) F_e^\dag\notag \\
        &\quad F_{e'} (|C_m\rr\otimes\Pi_{Q_0})\notag \\
        &= \sum_{e',e} d_{e,m} (\delta_{e',e} d_{e,m} \Pi_{Q_0}) (\delta_{e',e} d_{e,m} \Pi_{Q_0})\notag \\
        &= \Big(\sum_e d_{e,m}^3\Big) \Pi_{Q_0}
    \end{align}}
    where the second equality uses the polar decomposition above and the third equality is by eqn.~\eqref{eqn:diag_dynamical_condition}.
    Hence recovery map $\mathcal{R}_m$ is the perfect recovery map $\mathcal{R}_{\mathbf{C},\mathbf{E}}^*$ as $\mathcal{R}_m \circ \mathbf{E}\big(|C_m\rr\ll C_m| \otimes \Pi_{Q_0}\big) \propto \Pi_{Q_0}$.
    Note that the Kraus operators of recovery channel in eqn.~\eqref{eqn:kraus_perfect_dynamic_recovery} is precisely the channel $\mathcal{R}_m$ with Kraus operators $\{\Pi_{Q_0}U_{e,m}^\dag\}_e$, as
    \begin{equation}
    \begin{aligned}
        R_{e|m} &= \frac{1}{\sqrt{d_{e,m}}} \ll F_e| (|C_m\rr\otimes|\Pi_{Q_0}\rr) \\
        &= \frac{1}{\sqrt{d_{e,m}}} (\ll C_m| \otimes \Pi_{Q_0}) F_e^\dag \\
        &= \Pi_{Q_0} U_{e,m}^\dag
    \end{aligned}
    \end{equation}
    by applying the polar decomposition above at the last equality.

    Now note that by again using the polar decomposition we have
    \begin{equation}
    \begin{aligned}
        &\mathbf{E}(|C_m\rr\ll C_m| \otimes \Pi_{Q_0}) \\
        &= \sum_e F_e \big( |C_m\rr\ll C_m| \otimes \Pi_{Q_0} \big) F_e^\dag \\
        &= \sum_e d_{e,m} U_{e,m} \Pi_{Q_0} \Pi_{Q_0} U_{e,m}^\dag \\
        &= \sum_e d_{e,m} \Pi_{e,m}
    \end{aligned}
    \end{equation}
    where $\Pi_{e,m} = U_{e,m}\Pi_{Q_0}U_{e,m}^\dag$ and $\Pi_{e,m}\Pi_{e',m}=\delta_{e,e'}\Pi_{e,m}$ (i.e. $\{\Pi_{e,m}\}_e$ is a set of orthogonal projectors, due to eqn.~\eqref{eqn:polar_decomp_dynamical_condition}).
    Hence we obtain
    \begin{equation}
    \begin{aligned}
        \big(\mathbf{E}(|C_m\rr\ll C_m| \otimes \Pi_{Q_0})\big)^{-\frac{1}{2}} = \sum_e \Pi_{e,m} (d_{e,m})^{-\frac{1}{2}} \;,
    \end{aligned}
    \end{equation}
    which allows us to rewrite the Kraus operators of the Petz recovery map as
    \begin{equation}
    \begin{aligned}
        R_{e|m} &= \sum_{e'} (\ll C_m|\otimes\Pi_{Q_0}) F_e^\dag \Pi_{e',m} (d_{e',m})^{-\frac{1}{2}} \\
        &= \sum_{e'} \sqrt{d_{e,m}}  \Pi_{Q_0}U_{e,m}^\dag \Pi_{e',m} (d_{e',m})^{-\frac{1}{2}} \\
        &= \sum_{e'} \Pi_{Q_0} U_{e,m}^\dag U_{e',m}\Pi_{Q_0}U_{e',m}^\dag \\
        &= \Pi_{Q_0} U_{e,m}^\dag \;,
    \end{aligned}
    \end{equation}
    by using eqn.~\eqref{eqn:polar_decomp_dynamical_condition}.
    Hence we have shown that the Kraus operators of the Petz recovery map $\mathcal{R}_{\mathbf{C},\mathbf{E}}^\mathrm{Petz}$ is equal to the Kraus operators of the perfect recovery map $\mathcal{R}_{\mathbf{C},\mathbf{E}}^*$.
\end{proof}

\section{Fidelity Bound for Approximate Strategic Code}

Here we describe the scenario considered in Fact~\ref{thm:info_theoretic_QECC}
 and Theorem~\ref{thm:info_theoretic_AQECC},
then give a proof for Theorem~\ref{thm:info_theoretic_AQECC}.

\subsection{Purified dynamics of strategic code and noise}\label{app:strategic_code_details:purified}

First, we explicitly derive the purification of the interrogator $\mathbf{T}_m$ given a memory trajectory $m$ and noise $\mathbf{E}$ which gives a purified dynamics $\mathbf{E}\ast\mathbf{T}_m$ that maps input maximally entangled state $|\phi\>$ to a global pure state $|\mathbf{E}\ast\mathbf{T}_m(\phi)\>$.
An illustration of this scenario for a three-round strategic code is shown in Fig.~\ref{fig:purified_strategic_code}.
Consider the quantum channels $\mathcal{C}^{(1)},\mathcal{C}_{m_1}^{(2)},\mathcal{C}_{m_{1:3}}^{(3)},\dots$ corresponding to operations performed by the interrogator $\mathbf{T}_m$ for a given memory trajectory $m=m_1,m_2,\dots$.
Each of these quantum channel admits a Kraus operator representation $\{C_{m_r|m_{1:r}}^{(r)}\}_{m_r}$ (see Appendix~\ref{app:strategic_code_details}).
We can use these Kraus operators to construct a Stinespring representation of these channels with isometry $V_{m_{1:r}}:= \sum_{m_r} C_{m_r|m_{1:r}}^{(r)}\otimes|m_r\>_{B_r'}$ so that $\mathcal{C}_{m_{1:r}}^{(r)}(\rho) = \tr_{B_r'}(V_{m_{1:r}} \rho V_{m_{1:r}}^\dag)$, where $B_r'$ is an ancillary system with basis $\{|m_r\>\}_{m_r}$. 
For trajectory $m=m_1,\dots,m_l$ we can now define
\begin{equation}
    |\Tilde{\mathbf{T}}_m(\phi)\rr := \Big(\bigotimes_{r=1}^l |C_{m_r|m_{1:r}}^{(r)}\rr\otimes|m_r\>_{B_r}\Big) \otimes |\phi\> \;.
\end{equation}

\begin{figure*}
    \centering
    \includegraphics[scale=0.35]{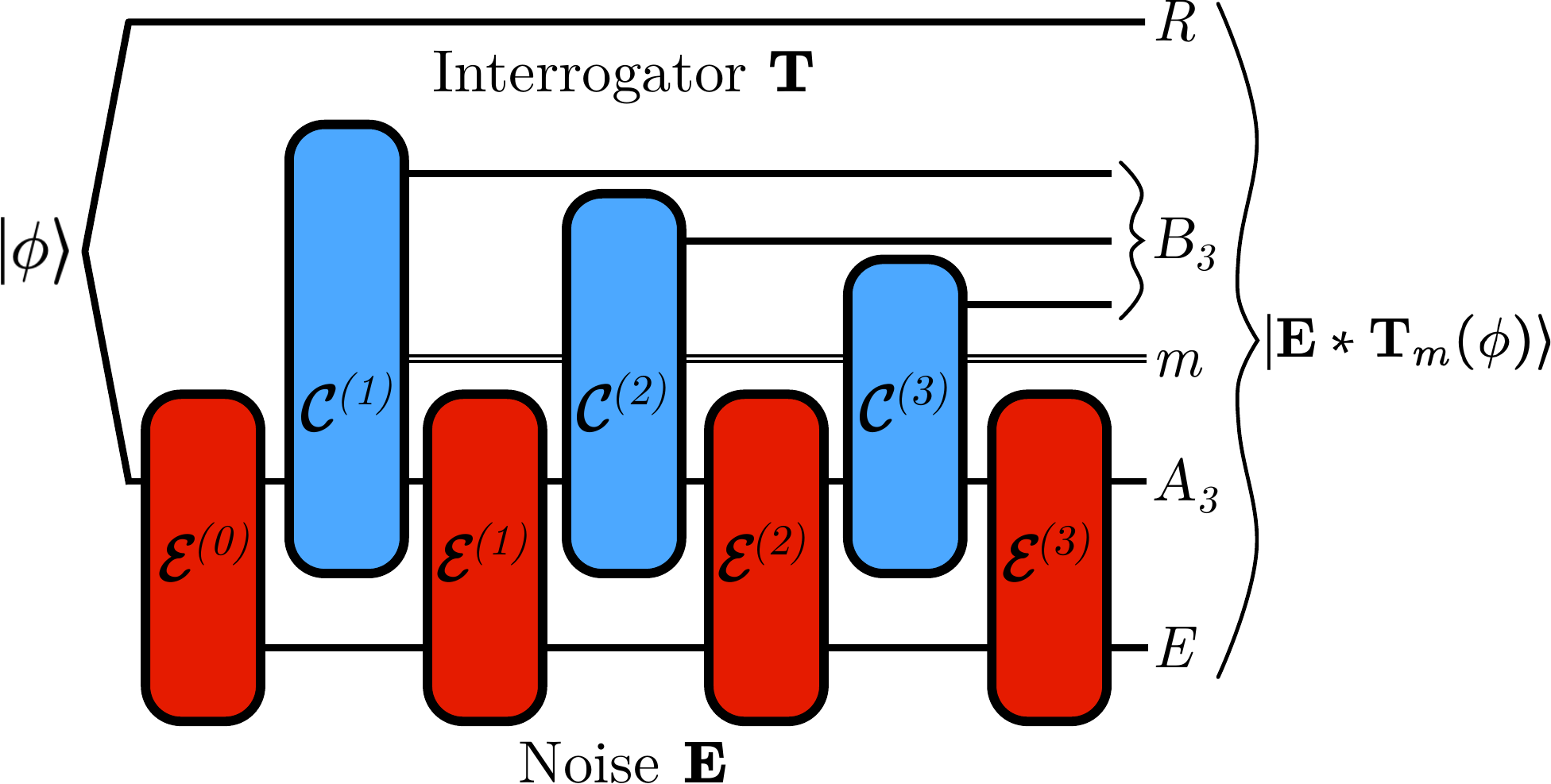}
    \caption{Transmission of one-half of maximally entangled state $|\phi\>$ through a sequence of three intermediary operations $\mathcal{C}^{(1)},\mathcal{C}^{(2)},\mathcal{C}^{(3)}$ interacting with noise $\mathbf{E}$.
    Purification of $\mathcal{C}^{(1)},\mathcal{C}^{(2)},\mathcal{C}^{(3)}$ and $\mathbf{E}$ result in ancillary system $B_3$ and noise environment $E$.
    Global state at the output over systems $R,B_3,A_3,E$ for a given memory $m$ is given by a pure state $|\mathbf{E}\ast\mathbf{T}_m(\phi)\>$. 
    }
    \label{fig:purified_strategic_code}
\end{figure*}

Purification of noise $\mathbf{E}$ can be constructed similarly.
At the start of the decoding stage, we have a global state $\sum_m\mathbf{E}(|\Tilde{\mathbf{T}}_m(\phi)\rr\ll\Tilde{\mathbf{T}}_m(\phi)|)$ where map $\mathbf{E}$ is acting trivially on systems $B_l=B_1'\otimes\dots\otimes B_l'$ and $R$, where $|E_e\rr = \sum_{j} \alpha_{j} |j_0\>\otimes|j_0'\>\otimes\dots\otimes|j_l\>\otimes|j_l'\>$ with $\{|j_r\>\}_{j_r=1}^{d_r}$ (see Appendix~\ref{app:strategic_code_details}).
Noise $\mathbf{E} = \sum_e |E_e\rr\ll E_e|$ can be thought of as a quantum channel with Kraus operators $\mathbf{E}_e = \sum_{j} \alpha_{j} |j_l'\> (\<j_0|\otimes\<j_0'|\otimes\dots\otimes\<j_l|)$ which gives
\begin{equation}
    \mathbf{E}(|\Tilde{\mathbf{T}}_m(\phi)\rr\ll\Tilde{\mathbf{T}}_m(\phi)|) = \mathbf{E}_e |\Tilde{\mathbf{T}}_m(\phi)\rr\ll\Tilde{\mathbf{T}}_m(\phi)|\mathbf{E}_e^\dag \;.
\end{equation}
Since $\mathbf{E}$ is a quantum channel, it admits a Stinespring representation $\mathbf{E}(A) = \tr_E(V_\mathbf{E} A V_\mathbf{E}^\dag)$ for isometry $V_\mathbf{E} = \sum_e \mathbf{E}_e\otimes|e\>_E$.
Isometry $V_\mathbf{E}$ is the purified noise $\mathbf{E}$ to some noise environment $E$ with basis $\{|e\>\}_e$.

By using $|\Tilde{\mathbf{T}}_m(\phi)\rr$ and $V_\mathbf{E}$, we obtain the global purified dynamics which gives us the global state at the start of the decoding round $|\mathbf{E}\ast\mathbf{T}_m(\phi)\>:= V_\mathbf{E}|\Tilde{\mathbf{T}}_m(\phi)\rr$, which can be expressed (with implicit ``$\otimes$'') as:
\begin{equation}
    |\mathbf{E}\ast\mathbf{T}_m(\phi)\> = \sum_{i,e} |\Bar{i}\>_R \big(K_{e,m}|\Bar{i}\>_{A_0}\big) |m\>_{B_l} |e\>_E
\end{equation}
where $|m\>_{B_l} = |m_1\>_{B_1'}\dots|m_l\>_{B_l'}$.
Linear operator $K_{e,m} = \mathbf{E}_e(\bigotimes_{r=1}^l|C_{m_r|m_{1:r}}^{(r)}\rr)$ is to be interpreted as the Kraus operator (over indices $e$) for the transformation $A_0\mapsto A_l$ induced by the purified interaction between $\Tilde{\mathbf{T}}_m$ and $V_\mathbf{E}$.
The reduces states of the global state $|\mathbf{E}\ast\mathbf{T}_m(\phi)\>$ to system $W$ is simply given by $\rho_m^W := \tr_{\backslash W}(|\mathbf{E}\ast\mathbf{T}_m(\phi)\>\<\mathbf{E}\ast\mathbf{T}_m(\phi)|)$ where $W$ is a subset of systems $R,B_l,A_l,E$ and $\tr_{\backslash W}$ indicates partial trace over systems not in $W$, e.g. $\rho_m^{RE} = \tr_{B_lA_l}(|\mathbf{E}\ast\mathbf{T}_m(\phi)\>\<\mathbf{E}\ast\mathbf{T}_m(\phi)|)$

\subsection{Proof of Theorem~\ref{thm:info_theoretic_AQECC}}
\label{app:info_theoretic_AQECC}

Theorem~\ref{thm:info_theoretic_AQECC}
states that if $S(\rho_m^{R}) + S(\rho_m^{B_l E}) - S(\rho_m^{R B_l E}) < \varepsilon$ for each value of $m$, then then the entanglement fidelity between the recovered codestate $\sigma^{R Q_0}$ and the initial codestate $\ket{\phi}_{R Q_0}$ is bounded as $$F_{ent}(\sigma^{R Q_0}, \ket{\phi}_{R Q_0}) \geq 1 - 2\sqrt{\varepsilon}.$$

\begin{proof}
Note that
\begin{equation*}
    \begin{aligned}
        \varepsilon &> S(\rho_m^{R}) + S(\rho_m^{B_l E}) - S(\rho_m^{R B_l E}) \\
        &= S(\rho_m^{R B_l E} \| \rho_m^{R} \otimes \rho_m^{B_l E}),
    \end{aligned}
\end{equation*}
where $S(\rho \| \sigma) = \tr(\rho \log(\rho)) - \tr(\rho \log(\sigma))$ denotes the relative entropy between the states $\rho$ and $\sigma$.

Ref.~\cite{schumacher2002approximate} states that fidelity between any states $\rho$ and $\sigma$ can be bounded by
\begin{equation*}
    F(\rho, \sigma) \geq 1 - \sqrt{S(\rho \| \sigma)}.
\end{equation*}

Thus, we have
\begin{equation*}
    \begin{aligned}
        F(\rho_m^{R B_l E}, \rho_m^{R} \otimes \rho_m^{B_l E}) &\geq 1 - \sqrt{S(\rho_m^{R B_l E} \| \rho_m^{R} \otimes \rho_m^{B_l E})}\\
        &> 1 - \sqrt{\varepsilon}.
    \end{aligned}
\end{equation*}

The state $\ket{\psi_m^{R A_l B_l E}}$ of the system $R A_l B_l E$ is a purification of $\rho_m^{R B_l E}$.
Let $\ket{\xi_m^{R A_l B_l E}}$ be a specific purification of $\rho_m^{R} \otimes \rho_m^{B_l E}$ satisfying
\begin{equation*}
    F(\rho_m^{R B_l E}, \rho_m^{R} \otimes \rho_m^{B_l E}) = \left| \left\langle \psi_m^{R A_l B_l E}\ |\ \xi_m^{R A_l B_l E} \right\rangle \right|.
\end{equation*} 

The Schmidt decomposition of the purification $\ket{\xi_{m}^{R A_l B_l E}}$ on bipartition $A_l:R B_l E$ can be expressed as
\begin{equation*}
    \ket{\xi_{m}^{R A_l B_l E}} = \sum_{i, \alpha} \sqrt{q_\alpha^{(m)}} \ket{i}_{R} \ket{u_{\alpha}^{(m)}}_{B_l E} \ket{v_{i, \alpha}^{(m)}}_{A_l},
\end{equation*}
where $\left\{ \ket{u_\alpha^{(m)}}_{B_l E} \right\}_\alpha$ is an eigenvector of $\rho_{B_l E}$ with corresponding eigenvalue $q_\alpha^{(m)}$ and $\left\{ \ket{v_{i, \alpha}^{(m)}}_{A_l} \right\}_{i, \alpha}$ is a basis of $A_l$.

Observe that this is the same as Eq.~(29) of~\cite{tanggara2024strategic} and therefore we can apply the decoding channel $\mathcal{D}_{m}$ (described in the proof of Fact~\ref{thm:info_theoretic_QECC}
in~\cite{tanggara2024strategic}) to restore $\ket{\xi_{m}^{R A_l B_l E}}$ to the initial codestate which is one half of $\ket{\phi}_{R Q_0}$.

However, instead of the purification $\ket{\xi_{m}^{R A_l B_l E}}$ of $\rho_{m}^{R} \otimes \rho_{m}^{B_l E}$, we have the purification $\ket{\psi_{m}^{R A_l B_l E}}$ of $\rho_{m}^{R B_l E}$.
Upon applying the recovery channel $\mathcal{D}_{m}$ to $\ket{\psi_{m}^{R A_l B_l E}}$, we obtain the state $\sigma^{R A_0}$.
Since fidelity cannot be decreased by any operation, we have
\begin{equation*}
    \begin{aligned}
        F(\sigma^{R A_0}, \ket{\phi}_{R Q_0}) &\geq \left| \left\langle \psi_{m}^{R A_l B_l E} | \xi_{m}^{R A_l B_l E} \right\rangle \right| \\
        &= F(\rho_{m}^{R B_l E}, \rho_{m}^{R} \otimes \rho_{m}^{B_l E}) \\
        &> 1 - \sqrt{\epsilon}.
    \end{aligned}
\end{equation*}

Since the entanglement fidelity $F_{ent}$ is related to the fidelity $F$ as $F_{ent} = F^2$, we have
\begin{equation}
    F_{ent}(\sigma^{R A_0}, \ket{\phi}_{R Q_0}) > (1 - \sqrt{\varepsilon})^2 \geq 1 - 2\sqrt{\varepsilon}.
\end{equation}
\end{proof}

\section{\label{app:obj}Alternate Expression of the Objective Function}

Here, we derive an alternative expression for the Entanglement fidelity as the objective function of the semidefinite program, allowing for separate optimization of different operators, namely the encoder, interrogator, and decoder.
First, we rewrite the noise and input state as
{\allowdisplaybreaks
\begin{align*}
\mathbf{E}^{(l)}&=\sum_{i,j}|i\rangle\langle j|_{A_l}\otimes\mathcal{E}^{(l)}\left(|i\rangle\langle j|_{A_l}\right)\\
&=\sum_{i,j}|i\rangle\langle j|_{A_l}\otimes\left(\sum_{e_l}E_{e_l}|i\rangle\langle j|E_{e_l}^\dagger\right)_{A_l} \\
\quad\text{ and}\;
|\rho\rr\ll\rho|&=\sum_{i,j}|i\rangle\langle j|_{L}\otimes\left(\rho|i\rangle\langle j|\rho^\dagger\right)_{L'}.
\end{align*}}
Then we can write the fidelity Eq.~\eqref{eqn:entanglement_fidelity} 
as
\begin{widetext}
\allowdisplaybreaks
\begin{align*}
&\tr\Big[\left(I_L\otimes\sum_{i,j}|i\rangle\langle j|_{Q_0}\otimes\left(\sum_{e_0}E_{e_0}^*|i\rangle\langle j|E_{e_0}^\top\right)_{Q_0'}\otimes\sum_{h,k}|h\rangle\langle k|_{Q_1}\otimes\left(\sum_{e_1}E_{e_1}^*|h\rangle\langle k|E_{e_1}^\top\right)_{Q_1'}\otimes I_{L'}\right)\\
&\times\left(\sum_m\mathbf{C}^{(0)}_{L,Q_0}\otimes(\mathbf{C}_m^{(1)})_{Q_0',Q_1}\otimes(\mathbf{D}_m)_{Q_1',L'}\right)\times\left(\sum_{s,t}|s\rangle\langle t|_L\otimes I_{Q_0,Q_0',Q_1,Q_1'}\otimes\left(\rho|s\rangle\langle t|\rho^\dagger\right)_{L'}\right)\Big]\\
&=\sum_{m,i,j,h,k,s,t}\tr\Big[\left(I_L\otimes|i\rangle\langle j|\right)\mathbf{C}^{(0)}\left(|s\rangle\langle t|_L\otimes I_{Q_0}\right)\Big]\tr\Big[\left(\left(\sum_{e_0}E_{e_0}^*|i\rangle\langle j|E_{e_0}^\top\right)_{Q_0'}\otimes|h\rangle\langle k|_{Q_1}\right)\mathbf{C}^{(1)}_mI_{Q_0',Q_1}\Big]\\
&\tr\Big[\left(\left(\sum_{e_1}E_{e_1}^*|h\rangle\langle k|E_{e_1}^\top\right)_{Q_1'}\otimes I_{L'}\right)\mathbf{D}_m\left(I_{Q_1'}\otimes\left(\rho|s\rangle\langle t|\rho^\dagger\right)_{L'}\right)\Big]\\
&= \sum_{mijhkst} \tr \left[ \Cmat^{(0)} \left( \ket{s}\bra{t} \otimes \ket{i}\bra{j} \right) \right]\ \tr \left[ \Cmat_m^{(1)} \left( \left( \sum_{e_0} E_{e_0}^* \ket{i}\bra{j} E_{e_0}^\top \right) \otimes \ket{h} \bra{k} \right) \right]\ \tr \left[ \Dmat_m \left( \left( \sum_{e_1} E_{e_1}^* \ket{h}\bra{k} E_{e_1}^\top \right) \otimes (\rho \ket{s} \bra{t} \rho^\dag) \right) \right] \\
&= \sum_{mijhkst} \bra{tj} \mathbf{C}^{(0)} \ket{si}\ \tr \left[ \Cmat_m^{(1)} \left( \left( \sum_{e_0} E_{e_0}^* \ket{i}\bra{j} E_{e_0}^\top \right) \otimes \ket{h} \bra{k} \right) \right]\ \tr \left[ \Dmat_m \left( \left( \sum_{e_1} E_{e_1}^* \ket{h}\bra{k} E_{e_1}^\top \right) \otimes (\rho \ket{s} \bra{t} \rho^\dag) \right) \right].
\end{align*}
\end{widetext}

\end{document}